\documentclass[11pt,letterpaper]{article}

\usepackage[T1]{fontenc}
\usepackage[margin=1in]{geometry}
\usepackage{newpxtext}
\usepackage{amsmath,amsthm,mathtools}
\usepackage{newpxmath}

\usepackage[hyphens]{url}
\usepackage{graphicx}
\usepackage[dvipsnames]{xcolor}
\usepackage{tikz}
\usetikzlibrary{arrows.meta,decorations.pathreplacing}
\usepackage[numbers,sort&compress]{natbib}
\usepackage{caption}
\usepackage{booktabs}
\usepackage{placeins}
\usepackage[ruled,vlined]{algorithm2e}
\usepackage{xspace}
\usepackage{aliascnt}
\usepackage[
  colorlinks=true,
  linkcolor=blue,
  citecolor=teal,
  urlcolor=blue
]{hyperref}
\usepackage[capitalise,noabbrev,nameinlink]{cleveref}

\crefname{appendix}{appendix}{appendices}
\Crefname{appendix}{Appendix}{Appendices}

\allowdisplaybreaks

\newtheorem{theorem}{Theorem}[section]
\newaliascnt{lemma}{theorem}
\newtheorem{lemma}[lemma]{Lemma}
\aliascntresetthe{lemma}
\newaliascnt{proposition}{theorem}
\newtheorem{proposition}[proposition]{Proposition}
\aliascntresetthe{proposition}
\newaliascnt{corollary}{theorem}
\newtheorem{corollary}[corollary]{Corollary}
\aliascntresetthe{corollary}
\theoremstyle{definition}
\newaliascnt{definition}{theorem}

\aliascntresetthe{definition}
\theoremstyle{remark}
\newaliascnt{remark}{theorem}

\aliascntresetthe{remark}
\theoremstyle{plain}
\newtheorem*{restatedmaintheorem}{\Cref{thm:main} (restated)}

\crefname{theorem}{theorem}{theorems}
\Crefname{theorem}{Theorem}{Theorems}
\crefname{lemma}{lemma}{lemmas}
\Crefname{lemma}{Lemma}{Lemmas}
\crefname{proposition}{proposition}{propositions}
\Crefname{proposition}{Proposition}{Propositions}
\crefname{corollary}{corollary}{corollaries}
\Crefname{corollary}{Corollary}{Corollaries}
\crefname{definition}{definition}{definitions}
\Crefname{definition}{Definition}{Definitions}
\crefname{remark}{remark}{remarks}
\Crefname{remark}{Remark}{Remarks}

\newcommand{\Main}{\textnormal{\textsc{Main}}\xspace}
\newcommand{\Core}{\textnormal{\textsc{Core}}\xspace}
\newcommand{\SubCore}{\textnormal{\textsc{SubCore}}\xspace}
\newcommand{\Eval}{\operatorname{Eval}}
\newcommand{\Cut}{\operatorname{Cut}}

\crefname{algocf}{algorithm}{algorithms}
\Crefname{algocf}{Algorithm}{Algorithms}
\SetKw{Return}{return}
\definecolor{StageCommentGreen}{RGB}{0,100,0}

\SetCommentSty{AlgoComment}
\newcommand{\AlgoLine}{\csname @endalgoln\endcsname}

\tikzset{
  agent/.style={
    circle,
    draw,
    thick,
    minimum size=8mm,
    inner sep=0pt,
    font=\small
  },
  labelbox/.style={
    rectangle,
    draw,
    thick,
    rounded corners=1pt,
    minimum width=11mm,
    minimum height=7mm,
    font=\small
  },
  chosen/.style={
    -{Stealth[length=2.2mm]},
    very thick
  },
  supporting/.style={
    -{Stealth[length=2mm]},
    densely dashed,
    draw=black!55
  },
  assignment/.style={
    -{Stealth[length=2mm]},
    thick
  }
}

\title{Cutting Down the Tower:\\Single-Exponential Envy-Free Cake Cutting}
\author{
	Qilin Ye\\
	Stanford University\\
	\texttt{yql@stanford.edu}
	\and
	Yannan Bai\\
	Carnegie Mellon University\\
	\texttt{byn@cmu.edu}
}
\date{}

\usepackage{color-edits}

\begin{document}
\maketitle

\begin{abstract}
Envy-free cake cutting is a central problem in fair division with a striking divide between existence and computation. Classical topology guarantees that envy-free allocations exist, yet finding one efficiently turned out to be much harder, and this problem has resisted decades of work. A well-known result by Aziz and Mackenzie \cite{aziz-mackenzie} established the existence of a bounded protocol for every $n$, but its query bound is $n^{n^{n^{n^{n^n}}}}$. A tighter analysis by Sokolov \cite{sokolov} subsequently reduced this upper bound to $n^{8n^2(1+o(1))}$, the best known prior to this work. In contrast, the general lower bound, due to Procaccia \cite{procaccia2009}, is merely $\Omega(n^2)$.

\vspace{10pt}

We close much of this massive gap with a protocol using at most $n^{O(1)}2^n$ queries. At a high level, our protocol repeatedly allocates some cake without creating envy until the remaining problem involves fewer agents. The main difficulty is to ensure that, when we later put these allocations together, we neither assign any cake twice nor create envy. To overcome this difficulty, we develop a new construction using only polynomially many partial allocations, replacing the $n^{n^{n^n}}$ partial allocations used in previous work. Overall, our protocol gives the first single-exponential query bound for finding a complete envy-free allocation with arbitrary nonatomic, additive valuations.
\end{abstract}

\tableofcontents
\clearpage

\section{Introduction}\label{sec:introduction}

How should we divide a heterogeneous resource among $n$ agents who value its parts differently?
The cake cutting problem captures this question by modeling the resource as an interval and asking for a fair partition.
Since Steinhaus formulated the problem in 1948, cake cutting has become a standard language for fair division in mathematics, economics, and computer science \cite{steinhaus1948,robertson-webb1998}.
In this paper, we focus on the fairness requirement of cake cutting: an allocation is \emph{envy-free} if every agent values her own piece at least as much as every other agent's piece.

Envy-free allocations have long been known to exist, and one short and elegant proof uses Sperner's lemma \cite{stromquist1980,su1999}.
Finding one, however, is much harder.
Since a protocol cannot read an arbitrary valuation in full, the standard \textit{Robertson--Webb model} limits it to two kinds of queries---an evaluation query asking for the value of an interval, and a cut query for a point that cuts off a prescribed value.

We say a protocol is \emph{finite} if it terminates for every collection of valuations, even when the number of queries depends on those valuations, and it is \emph{bounded} if a function of $n$ alone bounds that number.

\subsection{Previous Work}\label{subsec:previous-results}

For arbitrary nonatomic additive valuations, decades of work have produced only a few general results.
For two agents, cut-and-choose solves the problem immediately.
In the 1960s, Selfridge and Conway independently discovered a bounded envy-free protocol for three agents \cite{brams-taylor1996,procaccia2009}.

For a larger number of agents, progress was much slower.
In 1995, Brams and Taylor gave the first finite discrete envy-free protocol for any number of agents \cite{brams-taylor1995}.
Their protocol always terminates, but its query count is not bounded by any function of $n$.
Other finite protocols had the same limitation \cite{robertson-webb1998,pikhurko2000}.
On the lower-bound side, Procaccia proved in 2009 that every envy-free protocol requires $\Omega(n^2)$ queries \cite{procaccia2009}.
The main open question was whether a bounded protocol existed at all.

Aziz and Mackenzie answered this question affirmatively in 2016 by designing a bounded protocol for four agents and then extending it to every $n$ \cite{aziz-mackenzie-four,aziz-mackenzie}.
Their protocol uses $n\uparrow\uparrow 6$ Robertson--Webb queries.\footnote{Here $n\uparrow\uparrow k$ denotes a tower of $k$ copies of $n$, evaluated from the top. For example, $n\uparrow\uparrow 2 = n^n$, and $n\uparrow\uparrow 3 = n^{n^n}$.}

Throughout the paper, we call their construction the \emph{AM protocol}.
Our work builds on two of its main subroutines, \Core and \SubCore, both of which we introduce in \Cref{sec:core-subroutines}.

Before our work, the only known general improvement to AM's upper bound was due to Sokolov, whose sharper analysis of the same protocol, especially of \Core and \SubCore, reduces the query bound to $n^{8n^2(1+o(1))}$ \cite{sokolov}.

We discuss additional related work in \Cref{app:related-work}.

\subsection{Our Results and Techniques}

In this paper, we reduce the general upper bound to a single exponential.
More precisely, we prove the following.

\begin{theorem}[Main Theorem]\label{thm:main}
For every $n\ge1$ and every collection of nonnegative, additive, nonatomic valuations, there is a deterministic bounded Robertson--Webb protocol that returns a complete envy-free allocation using at most $n^{O(1)}2^n$ queries.
\end{theorem}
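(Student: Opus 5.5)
The plan is to follow the AM architecture and its outer recursion, but to replace its "permutation/dominance" machinery with a construction that needs only polynomially many partial allocations at each level. The recursion will be on the number $k$ of agents who are not yet dominated. We say agent $i$ \emph{dominates} agent $j$ if $i$'s advantage over $j$ in the current partial allocation is at least $i$'s value for the residue. Once $i$ dominates $j$, any allocation of the residue keeps $i$ envy-free toward $j$. Let $T(n)$ be the query cost of the full protocol on $n$ agents. The target recurrence is
\[
T(n) \le \mathrm{poly}(n) + \sum_{k<n} \binom{n}{k} \cdot \mathrm{poly}(n)\cdot(\text{cost for a }k\text{-agent residual instance}),
\]
or, more simply, $T(n)\le \mathrm{poly}(n)+2T(n-1)$. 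The second form unrolls to $n^{O(1)}2^n$. The factor $2$ has to come from the protocol branching into at most two smaller subproblems. Equivalently, each subset of agents is handled at most a polynomial number of times overall, which gives $2^n$ subsets times $\mathrm{poly}(n)$.

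The concrete steps are as follows.

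\textbf{Step 1.} Invoke \Core and \SubCore as given in \Cref{sec:core-subroutines}. These produce an envy-free partial allocation of the residue together with an "advantage" structure. We may assume they use $\mathrm{poly}(n)$ queries, as in Sokolov's sharper analysis of a single call.

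\textbf{Step 2.} Repeat \Core polynomially many times on the shrinking residue. Because values are additive and each \Core call gives a constant fraction of the residue to someone in a controlled way, a polynomial number of rounds forces at least one new domination relation. This yields a nonempty set $S$ of agents who dominate a set $D$.

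\textbf{Step 3.} Recurse. The agents in $D$, or the dominated side, can be allocated the residue via the protocol on fewer agents without creating envy from the dominators. The dominators can then be handled consistently, which is where the branching into at most two subproblems of size $\le n-1$ arises.

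\textbf{Step 4.} Verify envy-freeness and completeness of the combined allocation. Check that no piece is assigned twice, using disjointness of the residues passed to subcalls. Finally, sum the query costs to obtain the bound.

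The main obstacle is Step 2/3: gluing the partial allocations without double-assigning cake. In AM this requires exploring $n^{n^{n^n}}$ partial allocations, so that a "good permutation" exists in which the exchanged pieces create the needed domination.

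My first attempt would be to keep a single evolving partial allocation. Each time a \SubCore output would conflict with an earlier one, we would resolve it by a bounded-depth exchange along an envy/advantage graph. This is a trading-cycle argument on a graph with $n$ vertices, so the cost would be a polynomial number of re-runs rather than a product over permutations. The key invariant to maintain is that the advantage of every dominator over every dominated agent is monotone nondecreasing across rounds. If that invariant fails, I would instead relax the goal to "$\epsilon$-domination accumulated over $\mathrm{poly}(n)$ rounds". I would then argue by a potential function, the sum of pairwise advantages normalized by residue value, that a full domination is reached after $\mathrm{poly}(n)$ rounds.
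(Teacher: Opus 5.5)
\textbf{There is a genuine gap in Steps 2--3.} \Core calls alone give you only a single domination relation. After a call with cutter $c$, her value for the new residue equals the sum of her bonuses in that allocation (\Cref{lem:positive-cutter}). So she has a bonus of at least $V_c(T)/(n-2)$ over some agent, and $O(n\log n)$ further calls with $c$ cutting make her dominate that agent.

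Recursing on $D$, however, is safe only if \emph{every} agent outside $D$ dominates \emph{every} agent of $D$ (\Cref{cor:recursive-reductions}). If you mean the weak form (``$S$ dominates $D$''), Step 3 fails: an outside agent who does not dominate some member of $D$ may envy her once $D$ splits the residue. No second branch repairs this; in the paper the dominators simply keep their pieces, and there is a single recursive call. If you mean the strong form, Step 2 is unproved and is exactly the hard part. It requires $\Delta_{ij}$ to be at least a fixed polynomial fraction of $V_i(R)$ for every pair across the cut, and \Core calls cannot guarantee that:
\begin{itemize}
\item an envy-free allocation may have $\Delta_{ij}$ arbitrarily small, or identically zero when $V_i=V_j$, while $V_i(R)>0$;
\item later envy-free allocations may add nothing to $\Delta_{ij}$;
\item a call guarantees something only for its cutter: a factor $(n-2)/n$ on her residue value, and an advantage over some agent she does not choose.
\end{itemize}
So the number of rounds your potential-function fallback needs depends on the valuations rather than on $n$, and which pairs must dominate has to be chosen structurally.

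This is the paper's Challenge \#1, and the paper does not solve it by more shrinking. It stores $nQ^2$ snapshots and spends $O(n\log n)$ \Core calls once per intermediate bonus, so that every bonus ends up either $\le\epsilon_i(R^*)$ or $\ge\Gamma_i(R^*)$ (\Cref{lem:preparation}). It then neutralizes small bonuses by \emph{reassigning} a snapshot piece $c_{s,k}$ enlarged by a tiny addition $A_{s,k}\subseteq R^*$. Any agent who does not mark $k$ accepts the enlarged piece, while any agent who marks $k$ still values it at least $\Gamma_i-\eta_i$ below her own (\Cref{lem:prepared-piece-bounds}).

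\textbf{Your conflict remedy does not work as stated either.} In a single evolving allocation, the trading cycle serving one pair may require distributing the same pieces differently from the cycle serving another pair; this is Challenge \#2. The paper instead does the following:
\begin{itemize}
\item keeps separate snapshots and prunes them until each pair has either no witness or at least $Q$ witnesses, so every witnessed pair can reserve its own snapshot (\Cref{lem:stable-pool,lem:matching-selection});
\item extends the needed edge $h\to k$ to a perfect matching in a strongly connected $G_s$, and takes a sink-component shortcut when some $G_s$ is not strongly connected;
\item uses rank pieces $P_h$ for pairs with $\mathbf F_i\prec\mathbf F_h$;
\item restores envy-freeness inside each profile class with one \SubCore call with required minima $q_i$ on $W_h=[P_h,A_h,D_h]$, so trims never reach the snapshot pieces (\Cref{lem:class-settlement}).
\end{itemize}
Only then does $\Delta_{ij}(X)\ge\eta_i$ hold for all cross-class pairs, after which polynomially many \Core calls give the required cut around one profile class.

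\textbf{Your query accounting also rests on a false premise.} No polynomial bound is known for \Core or \SubCore: Sokolov's bound for \SubCore has base about $2.618$, and the paper's variant uses $O(2^m)$ cuts. In the paper the factor $2^n$ comes from these subroutine calls, and the recursion $\overline T(n)\le\overline T(n-1)+\Lambda_n$ involves no branching. Your first, binomial-sum recurrence would in any case yield only a superexponential bound.
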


We summarize our improvement alongside the previously known general bounds in \Cref{tab:query-comparison} and briefly discuss our technical novelty below.

\begin{table}[!ht]
	\centering
	\begin{tabular}{ccc}
		\toprule
		Result & Setting & Robertson--Webb Queries \\
		\midrule
		Procaccia (2009)~\cite{procaccia2009} & Lower Bound & $\Omega(n^2)$ \\
		Aziz--Mackenzie (2016)~\cite{aziz-mackenzie} & Upper Bound & $n\uparrow\uparrow 6$ \\
		Sokolov (2023)~\cite{sokolov} & Upper Bound & $n^{8n^2(1+o(1))} = 2^{(8+o(1))n^2\log_2 n}$ \\
		This paper & Upper Bound & $n^{O(1)}2^n$ \\
		\bottomrule
	\end{tabular}
	\caption{Comparison of the general Robertson--Webb query bounds.}
	\label{tab:query-comparison}
\end{table}

Like Aziz and Mackenzie, our protocol runs recursively. At a high level, we give agents cake pieces without creating envy and call the unallocated cake the \emph{residue}.
We seek a nonempty group $E$ that omits at least one agent, and we say that the omitted agents \emph{dominate} $E$ if no omitted agent would envy any member of $E$, even if that member (of $E$) received the entire residue, in addition to what she already had.
\textit{If this happens, then the omitted agents can keep their pieces and be gone, while we let agents in $E$ recursively divide the residue.} Hence, our primary goal throughout recursion is to efficiently identify domination.

Aziz and Mackenzie's protocol may produce $n\uparrow\uparrow 4$ partial allocations before establishing domination and recursing on fewer agents.
Replacing this loop with polynomially many allocations, however, requires overcoming two challenges.

\paragraph{Challenge \#1.} First, suppose Alice values her own cake piece only slightly more than another piece $P$.
Since her valuation is essentially unrestricted, this difference may be arbitrarily small.
To establish domination, AM must make the residue worth no more to Alice than this difference, and reducing the residue to an arbitrarily small difference can be costly.
AM achieves the reduction via repeatedly calling its \Core subroutine with Alice as the \emph{cutter}.
In each call, Alice divides the residue into pieces of an equal value to her.
\Core allocates cake pieces to Alice and at least one other agent, and the unallocated pieces become the new residue. 
As Alice cuts the cake, \Core guarantees that her value for the residue is reduced by a fixed multiplicative factor. Still, reaching her arbitrarily small target may require arbitrarily many calls.
To avoid these arbitrarily costly calls, AM uses an enormous fixed construction.

We improve AM's procedure as follows.
If Alice's aforementioned difference is ``tiny,'' we add enough cake to $P$ so that Alice is now willing to take it, instead of her original piece.
We also ensure that this change is insignificant to every other agent, so everyone who already strongly prefers her own piece continues to do so.
If, instead, the difference is ``huge,'' we use a polynomial number of \Core calls to make the residue small compared with it.
We later show that only polynomially many such differences arise, and each requires this treatment at most once, so the entire process uses only polynomially many allocations and subroutine calls.
(We also ensure that every such difference is either ``tiny'' or ``huge.'')
This resolves the first challenge.

\paragraph{Challenge \#2.} Resolving one of the differences in the previous challenge may require us to distribute a \Core allocation in a particular way, say, by giving its piece \#42 to Alice.
Resolving another difference may require us to distribute that same allocation differently, with the same piece going to Bob.
Since one piece can go to only one agent, this allocation cannot resolve both differences.
To address this, AM's construction involves an enormous family of \Core allocations, so that it carries out the two assignments in different allocations, and no actual cake piece is assigned twice.

We instead begin with only polynomially many \Core allocations and ``prune'' the ones that endanger us. Roughly speaking, we want the remaining allocations to be safe to use, in the sense that using one allocation to meet Alice's need does not prevent us from finding another to meet Bob's. Our key insight is that, if only a few allocations can resolve some difference, then they are too dangerous to rely on, because other assignments may use them up. Consequently, \textit{we discard these allocations upfront} and later resolve this difference in another way. With careful bookkeeping, we show that after pruning, every difference still handled in this way can be resolved using many of the remaining allocations. Therefore, we can use a distinct allocation for each difference, thereby preventing all conflicts. With only polynomially many allocations, we achieve the same effect as AM's enormous family: we meet Alice's need in one allocation and Bob's in another, without assigning any cake twice.

\vspace{5pt}

With these challenges resolved, by tracing the recurrence, we prove that our protocol gives the $2^{O(n)}$ Robertson--Webb query bound, as claimed.

\section{Preliminaries}\label{sec:preliminaries}\label{sec:rw-model}\label{sec:domination}

We model the cake as the interval $[0,1]$ and let $\mathcal A$ be the finite set of agents.
We assume each agent $i\in\mathcal A$ has a nonnegative, additive, nonatomic valuation $V_i$, which we call $i$'s \emph{valuation}, and we allow a \emph{piece} (of cake) to be a finite union of intervals.
In other words, $V_i(A\cup B)=V_i(A)+V_i(B)$ for disjoint pieces $A$ and $B$ by additivity, and we may cut any interval to any intermediate value by nonatomicity.
We call a piece that is to be divided on its own a \emph{subcake}.
For a set of agents $N\subseteq\mathcal A$ and a subcake $C$, an allocation $X=(X_i)_{i\in N}$ consists of pairwise disjoint pieces of $C$, and its \emph{residue} is the unallocated part $R=C\backslash\bigcup_{i\in N}X_i$.
The allocation is \emph{complete} when the residue is empty (up to cut endpoints), and it is \emph{envy-free} when $V_i(X_i)\ge V_i(X_j)$ for every $i,j\in N$.

In the \textit{Robertson--Webb} model, a protocol accesses the valuations through two query types:
\begin{itemize}
\item $\Eval_i(x,y)$ returns $V_i([x,y])$.
\item $\Cut_i(x,a)$, for $0\le a\le V_i([x,1])$, returns a point $y\ge x$ satisfying $V_i([x,y])=a$.
\end{itemize}
We say a protocol is \emph{bounded} if some function of the number of agents bounds its number of queries for every collection of valuations. 
Also, we note that the operations on disconnected pieces used later can be simulated by ordinary Robertson--Webb queries.
We give this simulation and explain how we treat exact real-valued oracle answers in \Cref{app:rw-model-details}.

For a partial allocation $X$ with residue $R$, we say that agent $i$ \emph{dominates} agent $j$ if
\[
  V_i(X_i)-V_i(X_j)\ge V_i(R),
\]
i.e., $i$ would not envy $j$ even if $j$ receives all of $R$.

More generally, if a partial allocation $X$ with residue $R$ is envy-free and every agent in $N\backslash E$ dominates every agent in a nonempty set $E\subseteq N$, then any envy-free allocation of $R$ within $E$ can be added to $X$ without making an agent outside $E$ envy an agent in $E$.
At a high level, this observation drives our protocol.
We use it formally in \Cref{lem:allocation-facts,cor:recursive-reductions}.

\section{The Protocol at a High Level}\label{sec:main-theorem}

Our protocol recursively reduces the number of \textit{active} agents---ones who still need to divide the cake.
A recursive call $\Main(N,C)$ has the active agents $N$ dividing the (sub)cake $C$.
We design $\Main$ so that the result of each recursive call can be combined with other allocations without creating envy.
The top-level call is $\Main(\mathcal A,[0,1])$, and calls with $\le 3$ active agents constitute base cases.
In this section, we provide a high-level walkthrough of our protocol; we describe and analyze our protocol in full in \Cref{sec:proof-roadmap}, and include the complete pseudocode in \Cref{app:protocol-details}.

For a larger set of active agents $N$, we seek an envy-free partial allocation with residue $R$ and a nonempty set $E\subsetneq N$ such that every agent in $N\backslash E$ dominates every member of $E$.
We let agents in $N\backslash E$ keep their pieces and recurse on $(E,R)$.
In particular, recursion involves fewer active agents.

Throughout, we repeatedly use the AM subroutines \Core and \SubCore \cite{aziz-mackenzie}.\footnote{To be precise, we use a variant of \SubCore that serves the same purpose but has a better query bound; see \Cref{prop:subroutine-query-bounds}.}\label{sec:core-subroutines}
The first subroutine, \Core, takes the current residue and $m$ agents, with one agent designated as the \emph{cutter}.
The cutter first divides the residue into $m$ pieces that she values equally.
\Core then invokes \SubCore, the second subroutine, to allocate cake from these pieces.
In doing so, \SubCore may give an agent one of these pieces in full or only part of one. Any cake cut off in this process becomes part of the new residue.
At a high level, \Core produces an envy-free partial allocation in which the cutter and at least one other agent each receive one of these pieces in full.
Since two of the cutter's equal pieces are allocated in full, she values the new residue at most $(m-2)/m$ as much as the previous one.
In light of this observation, the main purpose of \Core calls is to ``safely'' shrink the value of the residue for a chosen agent. For instance, when we need to reduce Alice's value for the residue by a certain factor, we use a \emph{prescribed sequence} of repeated \Core calls on successive residues with the same set of agents and Alice as the cutter.

Later, we may encounter situations where we have given some cake to each agent in a particular group, but the agents in that group may still envy one another.
When this happens, we need to redistribute this cake within the group so that the agents no longer envy one another, without creating envy between agents in different groups.
Here, we use \SubCore for a second purpose.
For a group of $r$ agents, we give \SubCore the $r$ pieces we need to redistribute, together with one empty piece, and specify each agent's \emph{required minimum}: the least value she must receive.
Because these pieces may be disconnected, we fix an order on their interval components (for instance, from left to right; see \Cref{app:rw-model-details}), and define a trim as removing a \textit{prefix} and allocating the remaining \textit{suffix}.
\SubCore finds an allocation without creating envy in which each agent receives a suffix of a different piece worth at least her required minimum.
We give the formal guarantees in \Cref{sec:core-subroutines-details}.

\subsection{One Recursive Call}\label{sec:one-recursive-call}

Fix a call $\Main(N,C)$, and write $n=|N|$ for its number of active agents.\footnote{Throughout the analysis, $n$ denotes the number of active agents in the current call. At the top level, $n=|\mathcal A|$.}
We construct the desired partial allocation in three stages.

\paragraph{Stage 1.}
Recall from Challenge \#1 that the difference between the values an agent assigns to her own piece and another piece may be arbitrarily small. In Stage 1, we ensure that every such difference is either ``tiny'' or ``huge,'' with none in-between, so that we can handle it later as promised. We call this work \textit{pre-processing}.

Pre-processing begins by going through the agents one at a time.
For each agent, we make polynomially many consecutive \Core calls with her as the cutter.
We store each envy-free allocation returned by the \Core calls as a \textit{snapshot}.
In snapshot $s$, we call the cake allocated to agent $k$ the \textbf{snapshot piece} $c_{s,k}$.
For convenience later, we also call this same piece the \textbf{label-$k$ piece} in snapshot $s$.
We define agent $i$'s \emph{bonus} over label $k$ in snapshot $s$ as her value for the label-$i$ piece minus her value for the label-$k$ piece, i.e., $V_i(c_{s,i})-V_i(c_{s,k})$.
Because each snapshot is envy-free, every bonus is nonnegative. Observe that Challenge \#1 arises when a positive bonus is arbitrarily small.

We categorize bonuses as \emph{small}, \emph{large}, or \emph{intermediate}.
(The first two categories are what Challenge \#1 informally called ``tiny'' and ``huge.'')
Our goal is to eliminate the intermediate bonuses.
To this end, whenever agent $i$ has an intermediate bonus, we run a prescribed sequence of \Core calls with $i$ as the cutter.
This shrinks the residue, so the bonus becomes large in comparison, and it remains large as the residue shrinks further.
Since there are only polynomially many bonuses, we repeat this process at most polynomially many times.
At the end, every bonus is either small or large.

\paragraph{Stage 2.}
Let $R$ be the residue after Stage 1.
If every agent values $R$ at zero, we allocate it and finish.
Otherwise, we choose a cutter who values $R$ positively and study her snapshots.

The rough idea of Stage 2 is to find, for a pair of agents such as Alice and Bob, a snapshot piece that Bob values almost as much as his own, while Alice values it substantially less than hers.
This way, when we later add a small amount of cake to that piece, Bob is willing to receive it, while Alice would not care about the addition, as she still values the resulting piece substantially less than her own.



Formally, for each snapshot $s$, we say agent $i$ \textit{marks} label $k$ if she values the label-$k$ piece $c_{s,k}$ substantially less than her own $c_{s,i}$.
A \textit{reassignment} gives each agent a distinct snapshot piece.
We call snapshot $s$ a \textit{witness} for an ordered pair $(i,j)$ of distinct agents if $i$ marks some label $k$ that $j$ does not mark.
Thus, a witness identifies a label we would like $j$ to receive in a reassignment.

However, reassignments serving different pairs may conflict when they use the same snapshot, because this would require the same snapshot to be distributed in two different ways, which is clearly impossible.
We therefore need to reserve distinct witness snapshots for different pairs to ensure that no such conflict arises.

To this end, we prune the ``bad'' snapshots.
If a pair $(i,j)$ has too few witnesses, we discard all of its witness snapshots, leaving that pair with no witness, and hence nothing to reserve.
We repeat until every pair that still has a witness can reserve a distinct snapshot.
If $(i,j)$ is left with no witness, then in every remaining snapshot, $j$ marks every label that $i$ marks.


For each snapshot that remains after pruning, we form a graph $G_s$ with an edge $a\to k$ when $a$ does \textbf{not} mark $k$.
These edges describe the reassignments we can use in Stage 3.
Every self-loop is therefore present.

We now check for a potential shortcut.
If some $G_s$ is not strongly connected, we choose a sink strongly connected component $S$.
No edge leaves $S$, so every agent in $S$ marks every label outside $S$.
We can then run prescribed sequences of \Core calls for the agents in $S$, shrinking the residue until $S$ dominates $N\backslash S$.
We simply recurse on $N\backslash S$ and return.

Otherwise, every $G_s$ is strongly connected.
We say an agent's marks across the remaining snapshots form her \textit{profile}, and agents with equal profiles form a \textit{profile class}.
We later show that at least two profile classes remain, and enter Stage 3 with enough witness snapshots to avoid conflicts.

\paragraph{Stage 3.}
In this final stage, our goal is to build an envy-free partial allocation $X$ from which we can reach domination.
For agents $i,j$ in different profile classes, we want to make $V_i(X_i)-V_i(X_j)$ positive whenever $i$ values $R$ positively, so that once we shrink $V_i(R)$ to no more than this difference, $i$ dominates $j$.
We first establish these comparisons, distinguishing whether $(i,j)$ has a witness, and then handle agents within the same profile class.

Recall every $G_s$ is now strongly connected. First suppose $(i,j)$ is witnessed by snapshot $s$ through a label $k$ marked by $i$ but not $j$. The edge $j\to k$ lies on a directed cycle whose edges, together with self-loops outside the cycle, form a perfect matching.
The matching pairs every agent with a distinct label she does \textit{not} mark; in particular, it pairs $j$ with label $k$.
Moreover, we reserve a different witness snapshot for every pair that has a witness, so the reassignments do not conflict.
See \Cref{fig:witness-matching}.

For every witness snapshot $s$ we reserve and every label $k$, we cut a small piece $A_{s,k}$ from the residue and attach it to $c_{s,k}$.
We choose the added cake so that an agent who does not mark $k$ is willing to receive the resulting \textbf{enlarged snapshot piece} $c_{s,k}\cup A_{s,k}$, while an agent who marks $k$ still prefers her own piece. In particular, $j$ is willing to receive this piece, while $i$ still prefers her own. 

\FloatBarrier
\begin{figure}[htbp]
  \centering
  \begin{tikzpicture}[x=2.15cm,y=.8cm]

    \node[agent] (v1) at (0.75,2.25) {$1{=}i$};
    \node[agent] (v2) at (0.75,0.65) {$2{=}j$};
    \node[agent] (v3) at (2.25,2.25) {$3{=}k$};
    \node[agent] (v4) at (2.25,0.65) {$4$};

    \draw[chosen] (v2) -- node[above,sloped,font=\small] {$j\to k$} (v3);
    \draw[chosen] (v3) -- (v4);
    \draw[chosen] (v4) -- (v2);
    \draw[chosen] (v1) edge[loop left,looseness=6] (v1);

    \draw[supporting] (v1) -- (v2);
    \draw[supporting] (v4) to[bend left=28] (v1);
  \end{tikzpicture}
  \caption{A matching built from a witness. Here $i=1$ marks label $k=3$, while $j=2$ does not. The solid arrows show the matching: agent 1 receives the label-1 piece, agent 2 receives the label-3 piece, agent 3 receives the label-4 piece, and agent 4 receives the label-2 piece. In particular, $j$ receives the label-$k$ piece, as desired.}
  \label{fig:witness-matching}
\end{figure}

\FloatBarrier

Suppose instead that $(i,j)$ has no witness, so every label marked by $i$ is also marked by $j$ in every remaining snapshot.
In this case, we instead use the residue to make $i$ value her own share more than $j$'s.
Specifically, we cut one small piece $P_h$ for every agent $h$, with sizes chosen so that $P_i$ is worth sufficiently more than $P_j$ to $i$.

It remains to address agents within the same profile class, as they may still envy one another.
We handle each profile class separately.
Within one class, we use \SubCore to redistribute the cake currently assigned to its agents.
Recall from the beginning of Section 3 that this step requires us to specify each agent's \emph{required minimum}, the least value she must receive.
Here, we choose the required minima so that every agent still receives enough value to preserve her comparisons with agents in other classes.
We also place the snapshot pieces last in each share we give to \SubCore.
Doing so ensures that every trim occurs before the snapshot pieces and leaves them intact.
\SubCore therefore removes envy within each class without undoing the comparisons between classes.

We finally have the long-wanted envy-free partial allocation that we need for a recursion.
Choose a profile class $E\subsetneq N$.
For each agent $a\in N\backslash E$ who values the remaining residue positively, we use a prescribed sequence of \Core calls to shrink her value for the remaining residue to at most $V_a(X_a) - V_a(X_j)$, for every agent $j\in E$.
Agents in $N\backslash E$ who value the remaining residue at zero already dominate every member of $E$.
Furthermore, only $E$ remains active.
Therefore, we return the current partial allocation and recurse on $E$ with the final residue.

\section{Proof of the Main Theorem}\label{sec:proof-roadmap}

\subsection{Basic Facts}\label{sec:core-subroutines-details}

We begin by defining a handful of shorthand notations and stating several facts that we frequently use throughout the main proof.

For allocations $X$ and $Y$ of disjoint subcakes, we define $X\oplus Y$ by
\[
  (X\oplus Y)_i=X_i\cup Y_i
  \qquad\text{for every agent }i.
\]
If either allocation omits an agent, we treat that agent as receiving the empty piece. We use $\oplus$ to combine several allocations the same way.
For a partial allocation $X$ and agents $i,j$, define the \emph{difference in value} by
\[
  \Delta_{ij}(X)=V_i(X_i)-V_i(X_j).
\]

We use the following facts whenever the protocol adds partial allocations or completes a recursive branch.

\begin{lemma}\label{lem:allocation-facts}
The following hold.
\begin{enumerate}
\renewcommand{\labelenumi}{(\roman{enumi})}
\item Suppose several partial allocations use pairwise disjoint subcakes and are envy-free among the same participating agents.
Giving every agent the union of her pieces preserves envy-freeness.
Moreover, as these allocations are added, the difference in value for any pair of participating agents cannot decrease.
\item Once agent $i$ dominates agent $j$, every later allocation of part of the residue preserves this domination.
\end{enumerate}
\end{lemma}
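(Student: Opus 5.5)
Both parts follow directly from additivity of each valuation and from the disjointness of the subcakes involved, so I will verify them by writing out the relevant differences in value explicitly.

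For (i), let $X^{(1)},\dots,X^{(t)}$ be partial allocations of pairwise disjoint subcakes, each envy-free among the same participating agents. Set $Y=X^{(1)}\oplus\cdots\oplus X^{(t)}$. The pieces $Y_i=\bigcup_\ell X^{(\ell)}_i$ are pairwise disjoint, because pieces within one allocation are disjoint and pieces from different allocations lie in disjoint subcakes. Hence $Y$ is a valid allocation. By additivity,
\[
\Delta_{ij}(Y)=\sum_{\ell=1}^t\bigl(V_i(X^{(\ell)}_i)-V_i(X^{(\ell)}_j)\bigr)=\sum_{\ell=1}^t\Delta_{ij}(X^{(\ell)}).
\]
Each summand is nonnegative by envy-freeness, so $\Delta_{ij}(Y)\ge 0$. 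This gives envy-freeness of $Y$. It also gives the monotonicity claim: adding any further allocation $X^{(t+1)}$ adds the nonnegative term $\Delta_{ij}(X^{(t+1)})$ and therefore cannot decrease $\Delta_{ij}$. Agents that an allocation omits receive empty pieces, so their terms vanish and the argument is unaffected.

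For (ii), suppose $i$ dominates $j$ under $X$ with residue $R$, so $V_i(X_i)-V_i(X_j)\ge V_i(R)$. A later step allocates disjoint parts of $R$, giving $Z_a\subseteq R$ to each agent $a$ and leaving the new residue $R'=R\setminus\bigcup_a Z_a$.

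\begin{itemize}
\item \textbf{Worst case for $i$.} By additivity and nonnegativity, $V_i(Z_j)+V_i(R')\le V_i(R)$, since $Z_j$ and $R'$ are disjoint subsets of $R$.
\item \textbf{Resulting inequality.} Using $V_i(Z_i)\ge 0$,
\[
V_i(X_i\cup Z_i)-V_i(X_j\cup Z_j)\ge V_i(X_i)-V_i(X_j)-V_i(Z_j)\ge V_i(R)-V_i(Z_j)\ge V_i(R').
\]
\end{itemize}

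So domination holds for the new allocation and its residue. By induction over successive allocations, it persists through every later step. Here a ``later allocation'' is understood to include a whole recursive allocation of $R$ among other agents, in which $i$ may receive nothing.

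There is no real obstacle. The only care needed is to insist in (i) that the subcakes are disjoint, so the union is a genuine allocation. In (ii), one should note that the later allocation need not be envy-free for the argument to work: only $Z_j\subseteq R$ and nonnegativity of $V_i(Z_i)$ are used.
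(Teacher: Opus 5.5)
Your proposal is correct and matches the paper's proof. Part (i) is the same summation of the nonnegative per-allocation differences over disjoint subcakes, and part (ii) is the same chain $V_i(X_i\cup Z_i)-V_i(X_j\cup Z_j)\ge V_i(R)-V_i(Z_j)\ge V_i(R')$, using $V_i(Z_i)\ge 0$ and $R'\subseteq R\setminus Z_j$.
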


\begin{proof}
For (i), each constituent allocation has a nonnegative difference in value for every pair of participating agents, so summing over the disjoint subcakes proves both claims.
For (ii), let $R$ be the old residue, and suppose the later allocation gives pieces $Y_i$ and $Y_j$ to $i$ and $j$ and leaves residue $R'$.
Using $\Delta_{ij}(X)\ge V_i(R)$ and additivity, we have
\[
  V_i(X_i\cup Y_i)-V_i(X_j\cup Y_j)
  \ge V_i(R)+V_i(Y_i)-V_i(Y_j)\ge V_i(R').
\]
The final inequality holds because $R'$ is disjoint from $Y_j$ and contained in $R\backslash Y_j$.
\end{proof}

\begin{corollary}\label{cor:recursive-reductions}
Suppose a partial allocation $X$ is envy-free and every agent in $N\backslash E$ dominates every agent in a nonempty set $E\subsetneq N$.
Any complete envy-free allocation of the residue within $E$ can be combined with $X$ to obtain a complete envy-free allocation among all agents in $N$.
\end{corollary}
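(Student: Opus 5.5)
Let $Y$ be a complete envy-free allocation of the residue $R$ of $X$ among the agents of $E$. We regard $Y$ as an allocation over all of $N$ in which every agent of $N\backslash E$ receives the empty piece. The combined allocation is $Z=X\oplus Y$. It is complete, because $Y$ exhausts $R$ and $X$ together with $R$ covers the subcake. It is a valid allocation, because $Y$ lives inside $R$, which is disjoint from every $X_i$. What remains is to check $V_i(Z_i)\ge V_i(Z_j)$ for every ordered pair $i,j\in N$. We split into four cases according to whether $i$ and $j$ lie in $E$.

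\emph{Case 1: $i,j\in E$.} Here $X$ restricted to $E$ is envy-free and $Y$ is envy-free among $E$. The two allocations use the disjoint subcakes $C\backslash R$ and $R$, so \Cref{lem:allocation-facts}(i) gives $\Delta_{ij}(Z)\ge0$.

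\emph{Case 2: $i,j\notin E$.} Both agents receive nothing from $Y$, so $\Delta_{ij}(Z)=\Delta_{ij}(X)\ge0$ because $X$ is envy-free.

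\emph{Case 3: $i\notin E$, $j\in E$.} By hypothesis $i$ dominates $j$ in $X$. The allocation $Y$ is a later allocation of part of the residue, with residue $R'=\emptyset$. \Cref{lem:allocation-facts}(ii) therefore gives $\Delta_{ij}(Z)\ge V_i(\emptyset)=0$.

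\emph{Case 4: $i\in E$, $j\notin E$.} Here $Y_j=\emptyset$, so
\[
  V_i(Z_i)=V_i(X_i)+V_i(Y_i)\ge V_i(X_i)\ge V_i(X_j)=V_i(Z_j),
\]
using nonnegativity of $V_i$ and envy-freeness of $X$.

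The only step needing care is Case 3. There we must make sure that \Cref{lem:allocation-facts}(ii) applies when the later allocation gives empty pieces to agents outside $E$. Its proof only uses that $Y_i,Y_j\subseteq R$ and that $R'\subseteq R\backslash Y_j$, both of which hold here. Nonemptiness of $E$ and $E\ne N$ only ensure the setup is meaningful; they are not otherwise used.
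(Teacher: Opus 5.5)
Your proof is correct and is essentially the paper's argument. The paper uses the same four comparisons: Lemma~\ref{lem:allocation-facts}(ii) for agents outside $E$ versus agents in $E$, the observation that only members of $E$ gain cake for the reverse direction, Lemma~\ref{lem:allocation-facts}(i) within $E$, and unchanged comparisons within $N\backslash E$; your version only makes the case split and the completeness/disjointness check explicit.
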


\begin{proof}
By \Cref{lem:allocation-facts} (ii), agents outside $E$ do not envy agents in $E$, even after the latter receive the residue.
Agents in $E$ cannot begin to envy an outside agent because only members of $E$ receive more cake.
Finally, by \Cref{lem:allocation-facts} (i), envy-freeness within $E$ is preserved, while comparisons within $N\backslash E$ remain unchanged.
\end{proof}

We use the following formal guarantees of the \Core subroutine of Aziz and Mackenzie and our variant of their \SubCore subroutine \cite{aziz-mackenzie}.

\paragraph{\Core.}
A call $\Core(N,R,c)$ takes a residue $R$, a set $N$ of $n$ agents, and a designated \emph{cutter} $c\in N$.
It returns an envy-free partial allocation $X$ and a new residue $R'$ consisting of the unallocated part of $R$.
The cutter and at least one other agent receive whole pieces valued at $V_c(R)/n$ by the cutter.

We use \Core mainly for two purposes: to collect a fixed number of partial allocations or, more importantly, to reduce the value of the residue with respect to the cutter.
A \emph{prescribed sequence} repeats \Core with the same agents and cutter on successive residues, with a maximum number of calls fixed in advance.
When the purpose is to collect allocations, the sequence continues until it has made the prescribed number of calls.
When the purpose is only to reduce the cutter's residue value, it stops early if that value reaches zero.

\paragraph{\SubCore.}
A \SubCore call used inside \Core takes $m$ agents and at least $m+1$ pieces, and returns an envy-free allocation in which the agents receive possibly trimmed parts of distinct pieces.

We later call \SubCore directly in a more specific setting.
There, the call receives $m$ agents, $m$ pieces $W_1,\ldots,W_m$, one empty piece (making it $m+1$ total), and a \emph{required minimum} $q_i\ge0$ for every agent $i$.
The required minimum is the least value that the agent may receive.
We store the interval components of each $W_h$ in a fixed order, so every trim removes a prefix and leaves a suffix as defined in \Cref{app:rw-model-details}.
Finally, every unused prefix returns to the residue.

\begin{proposition}\label{prop:core-guarantees}
The following properties hold.
\begin{enumerate}
\renewcommand{\labelenumi}{(\roman{enumi})}
\item A call $\Core(N,R,c)$ returns an envy-free partial allocation $X$ and a residue $R'$, and the cutter and at least one other agent receive whole pieces valued at $V_c(R)/n$ by the cutter.
\item The returned residue satisfies
\[
  V_c(R')\le\frac{n-2}{n}V_c(R).
\]
\item For $n\ge3$ and $\beta\ge1$, a prescribed sequence used to reduce the cutter's residue value, with maximum length $\left\lceil\log_{n/(n-2)}\beta\right\rceil$,
leaves a residue $R'$ satisfying
\[
  V_c(R')\le\frac{V_c(R)}{\beta},
\]
where $R$ is the residue at the start of the sequence.
\end{enumerate}
\end{proposition}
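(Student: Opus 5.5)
Part (i) is the formal guarantee of \Core from Aziz and Mackenzie \cite{aziz-mackenzie}, and I will take it as given. The only check needed is that our modified \SubCore, used inside \Core, still satisfies the interface \Core relies on. That interface asks for $m$ agents and at least $m+1$ pieces, and it returns an envy-free allocation of possibly trimmed parts of distinct pieces. Given this, the AM argument goes through unchanged. The cutter splits $R$ into $n$ pieces, each worth $V_c(R)/n$ to her. Envy-freeness of the output and the fact that the cutter and at least one other agent receive untrimmed pieces are exactly the conclusions AM prove for \Core. I will cite them rather than reprove them.

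For (ii), I use (i) and additivity. The cutter's $n$ equal pieces partition $R$. At least two of them, the cutter's own piece and the other whole piece from (i), are allocated in full. These two pieces are disjoint and each is worth $V_c(R)/n$ to $c$. The new residue $R'$ lies inside $R$ minus these two pieces: the trimmed prefixes of the other pieces may return to $R'$, but no part of the two whole pieces does. Hence
\[
  V_c(R') \le V_c(R) - \frac{2}{n}V_c(R) = \frac{n-2}{n}V_c(R),
\]
by additivity and nonnegativity of $V_c$.

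For (iii), I iterate (ii) along the sequence. Write $R=R_0,R_1,\ldots$ for the successive residues. First suppose the sequence runs all $L=\lceil\log_{n/(n-2)}\beta\rceil$ calls. Then $V_c(R_L)\le ((n-2)/n)^L V_c(R)$. Since $L\ge\log_{n/(n-2)}\beta$, we have $(n/(n-2))^L\ge\beta$, and the bound $V_c(R_L)\le V_c(R)/\beta$ follows. For $n=3$ the ratio is $1/3$, and in general $n\ge3$ ensures $n/(n-2)>1$, so the logarithm is well-defined and finite for $\beta\ge1$. Now suppose the sequence stops early because $V_c$ of the residue reaches zero. Then the final residue has value $0\le V_c(R)/\beta$, so the bound holds trivially.

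One further point needs care. Each call is made on the current residue, with the same agent set and the same cutter. The envy-freeness and monotonicity facts in \Cref{lem:allocation-facts} do not enter here, because (iii) concerns only the cutter's residue value.

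I expect no genuine obstacle in (ii) or (iii); the only substantive dependency is (i). If anything needs more than a citation, it is confirming that our \SubCore variant meets the AM \SubCore specification. That confirmation is deferred to \Cref{prop:subroutine-query-bounds} and the appendix.
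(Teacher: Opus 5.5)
Your proposal is correct and takes the same route as the paper: (i) is cited from AM's analysis of \Core, with the check for the modified \SubCore deferred to the appendix; (ii) follows because two of the cutter's pieces, each worth $V_c(R)/n$, leave the residue whole; and (iii) iterates (ii), with the early-stopping case handled by $V_c(R')=0$. One small precision point: \Core's guarantees need a bit more from \SubCore than the interface you list, namely that some non-cutter receives an untrimmed piece and that no agent envies an untouched leftover piece, and this is exactly what the appendix verifies for the variant.
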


\begin{proof}
Part (i) follows from \cite{aziz-mackenzie}'s proof of \Core; we prove that the same guarantee holds for our \SubCore variant in \Cref{app:omitted-proofs-section-four}.
Assuming (i), (ii) follows immediately, because two of the pieces worth $V_c(R)/n$ are allocated.
Applying this repeatedly yields (iii), as
\[
  V_c(R')
  \le
  \left(\frac{n-2}{n}\right)^{
    \lceil\log_{n/(n-2)}\beta\rceil
  }V_c(R)
  \le\frac{V_c(R)}{\beta}.
\]
(If the sequence stops early, then $V_c(R')=0$, so the same bound is immediate.)
\end{proof}

\begin{proposition}\label{prop:subroutine-query-bounds}
Our protocol uses a variant of AM's \SubCore that serves the same purpose as in AM's protocol.
Every call to this variant with $m$ agents uses $O(2^m)$ cut queries and $O(m2^m)$ evaluation queries to those agents.\footnote{For the original \SubCore, Sokolov's analysis gives a query bound with base $(3+\sqrt5)/2\approx2.618$ \cite{sokolov}.}
A \Core call on $n$ agents uses $O(2^n)$ cut queries and $O(n2^n)$ evaluation queries.
We defer description of our \SubCore as well as the proof to \Cref{app:omitted-proofs-section-four}.
\end{proposition}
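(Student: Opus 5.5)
The plan is to design the \SubCore variant explicitly and then bound its queries by a recurrence in $m$. Following Aziz and Mackenzie, \SubCore processes the agents one at a time. When agent $t$ arrives, the first $t-1$ agents already hold an envy-free assignment of trimmed suffixes of distinct pieces. If agent $t$'s favorite piece among the unused ones is at least as good for her as every currently held (trimmed) piece, she takes it. Otherwise there is contention, and the subroutine resolves it by uniformly trimming the contested pieces to a common threshold value for the newcomer. It then recursively reassigns among a subset of agents, using a bipartite ``demand'' structure. The key idea of the variant is to make each step's cost proportional to the cost of all earlier steps, rather than to a branching search. Concretely, the contention is resolved by a single recursive call on the earlier agents, preceded by a binary-search-free trimming step that costs $O(1)$ cut queries per agent and $O(m)$ evaluation queries per agent. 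This avoids nesting several recursive calls at each level.

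Writing $C(m)$ and $E(m)$ for the worst-case cut and evaluation counts with $m$ agents, I aim for the recurrences
\[
C(m)\le 2C(m-1)+O(1),\qquad E(m)\le 2E(m-1)+O(m)+O(m\cdot 2^{0})\text{-type terms}.
\]
These give $C(m)=O(2^m)$. For evaluations, each agent must compare at most $m+1$ pieces every time she is invoked, so evaluations are at most $O(m)$ times the number of cut-level events. This gives $E(m)=O(m2^m)$. The factor $2$ comes from this structure: inserting agent $m$ triggers at most one recursive reallocation among the first $m-1$ agents, in addition to building their initial allocation. I would prove this by showing that after the single trim-and-reallocate step, the newcomer's demand is satisfied. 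The argument is the standard Hall-type one: the set of agents demanding contested pieces, together with the newcomer, has a free piece available because there are $m+1$ pieces.

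For correctness, I would verify three properties by induction on the insertion order. First, every agent receives a suffix of a distinct piece. Second, the allocation is envy-free. Third, each agent's value is at least her required minimum $q_i$, which needs monotonicity: trims only ever lower pieces to values an agent already accepted. For the \Core bound, \Core performs $n$ cut queries for the cutter's equal division and then one \SubCore call on the other agents with $n$ pieces. The cutter then takes a whole untouched piece. This yields $O(2^n)$ cuts and $O(n2^n)$ evaluations, and the guarantee of \Cref{prop:core-guarantees}(i) follows as in \cite{aziz-mackenzie}.

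The main obstacle is proving that one recursive call per insertion suffices. In AM's original \SubCore, the reallocation may cascade, and Sokolov's analysis yields base $\approx2.618$. I would first try to choose the trimming threshold as the newcomer's value of her second-best option, so that every agent involved in the cascade is indifferent at the threshold. Then I would argue that the resulting reallocation is a single augmenting-path exchange, which can be found with evaluation queries alone and so adds no further cut-recursion.
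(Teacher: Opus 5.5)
Your recurrence has the right shape, but the step that is supposed to produce the factor $2$ is asserted rather than proved. That step is the claim that inserting the $m$th agent needs only one recursive reallocation among the first $m-1$ agents. You also never specify the variant or check its correctness (distinct pieces, envy-freeness, benchmarks), and you yourself flag this as the main obstacle.

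The mechanism you sketch does not close the gap. Trimming to the newcomer's second-best value does not make the other agents indifferent, since each of them values the trimmed pieces differently. So a single augmenting-path exchange along indifference edges need not preserve envy-freeness or anyone's benchmark. The sketch is also internally inconsistent. If the exchange used only evaluation queries, you would get $C(m)\le C(m-1)+O(m)$, i.e.\ a polynomial \SubCore, which is far stronger than the proposition and unsupported by your argument. If it needs a recursive call, you have not said what that call is or why one suffices. Finally, your diagnosis of the original $2.618$ base as a ``cascade'' misses its actual source.

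The paper needs no new reallocation mechanism. Its variant is AM's \SubCore with one deletion. Once $|W|=m-1$, AM makes a final recursive call on $W$; the variant skips it. Instead it keeps the tentative allocation from the last while-loop iteration, or, if $|W|=m-1$ immediately after the trims, gives each contested piece to its distinct rightmost trimmer. The remaining agent gets her favorite uncontested piece, and AM's induction then goes through unchanged.

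For the count, the recursive calls inside the while loop have distinct sizes in $\{1,\dots,m-2\}$, so $S_m=S_{m-1}+m(m-1)+\sum_{j=1}^{m-2}S_j$. Subtracting the instance for $m-1$ gives $S_m=2S_{m-1}+2(m-1)$, hence $S_m=2^{m+1}-2m-2$. The factor $2$ thus comes from this telescoping, not from a single reallocation call. Keeping the omitted call would add $S_{m-1}$ and give $S_m=3S_{m-1}-S_{m-2}+2(m-1)$, whose growth rate is $(3+\sqrt5)/2$.

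Evaluations then cost $O(m)$ per cut by \Cref{prop:ordered-piece-simulation}. The \Core bound follows from $n-1$ cuts, $O(n^2)$ evaluations, and one \SubCore call on $n-1$ agents.
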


We now state a guarantee of \SubCore with conditions tailored to our needs, and we also defer its proof to \Cref{app:omitted-proofs-section-four}.

\begin{proposition}\label{prop:subcore-required-minima}
Let $E$ be a set of $m$ agents.
Consider a \SubCore call with agents $E$, the $m$ ordered pieces $(W_h)_{h\in E}$, the empty piece $\varnothing$, and required minima $q_i\ge0$ for $i\in E$ satisfying
\[
  V_i(W_h)\ge q_i
  \qquad\text{for }i,h\in E.
\]
The call uses the tie-breaking convention described in \Cref{app:deterministic-choices}, under which any tie between a piece $W_h$ and the empty piece is broken in favor of $W_h$.
Then, \SubCore returns an envy-free allocation in which every agent receives a suffix of a distinct $W_h$ worth at least her required minimum, while the empty piece remains unallocated.
\end{proposition}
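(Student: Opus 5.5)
We will prove \Cref{prop:subcore-required-minima} from the structural invariants of \SubCore rather than from any particular numerical behaviour. The relevant invariant is the following. \SubCore (in the variant of \Cref{app:omitted-proofs-section-four}) builds an envy-free allocation of $m$ agents to $m$ of the $m+1$ given pieces. Every trim it performs is calibrated so that some agent is indifferent between the trimmed piece and an \emph{untrimmed} piece. Consequently, each agent $i$ ends up with a piece she weakly prefers to every untrimmed piece. In addition, there is always at least one untrimmed piece among the $m+1$ that remains unallocated. This is the standard AM guarantee: the allocation is envy-free and each agent values her share at least as much as some whole piece not allocated to anyone. We will first extract this property from the inductive description of our \SubCore variant. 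Envy-freeness and the fact that trims remove prefixes (so every agent receives a suffix) are immediate from the construction and the ordering convention of \Cref{app:rw-model-details}.

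The key step is to show that the piece left unallocated is the empty piece $\varnothing$. Suppose some $W_h$ were left untouched and unallocated while $\varnothing$ is allocated to some agent. That agent values $\varnothing$ at $0$. By the invariant she weakly prefers her share to the untrimmed $W_h$, so $V_i(W_h)=0$, and she was tied between $W_h$ and $\varnothing$. The tie-breaking convention of \Cref{app:deterministic-choices} resolves every such tie in favour of $W_h$, so she would have been assigned (a suffix of) $W_h$ instead. This is a contradiction.

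We expect the main obstacle to be making this argument rigorous inside the recursion of \SubCore. There, ties arise not only at the top level but also in subcalls and during trimming, and an agent may be routed to $\varnothing$ through a chain of reassignments rather than by a direct choice. We would handle this by induction on the recursion depth, proving the strengthened claim that within every subcall the empty piece is never chosen when an equally valued nonempty piece is available. The tie-breaking convention is applied at every comparison. Because $\varnothing$ can never be trimmed, no trim value can make it strictly preferable.

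Granting that $\varnothing$ stays unallocated, every agent receives a suffix of a distinct $W_h$. It remains to check the required minimum. Since $\varnothing$ is excluded and all $m$ agents are placed on the $m$ pieces $W_h$, the unallocated untrimmed piece is $\varnothing$ itself, so the invariant alone gives nothing beyond nonnegativity. We therefore use the stronger form of the trimming rule, which is where the required minima enter the variant. Each trim of a piece $W_h$ is capped so that the trimmed piece retains value at least $q_i$ for the agent $i$ who receives it. This cap is feasible because the hypothesis gives $V_i(W_h)\ge q_i$ for the untrimmed piece. We would verify that the cap is compatible with envy-freeness. If the cap binds, the trimming agent is given the capped suffix, which she still values at $q_i$ or more. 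Any agent who would envy it was a competitor whose own trim is handled symmetrically in the recursion, so the envy-freeness argument for \SubCore goes through with the cap in place.
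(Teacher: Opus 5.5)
The gap is in the required-minimum step. The variant contains no ``capped trim'': the required minima enter \SubCore only as the agents' starting benchmarks. Every trim is placed at the trimming agent's current benchmark, and every agent ends with at least her benchmark. Your cap is therefore a new mechanism, and your defense of envy-freeness under it (a competitor ``handled symmetrically in the recursion'') is not an argument. Moving a trim to the left of where the benchmark rule puts it hands the receiver a larger suffix. The other trimmers of that piece may value this suffix above their own shares, which is exactly what the rightmost-trim rule is designed to prevent. Also, $V_i(W_h)\ge q_i$ says nothing about feasibility once trims are placed on suffixes inside recursive calls. Your own remark that the end-state invariant ``gives nothing beyond nonnegativity'' points to the real issue: the comparison with a whole $W_h$ must be made while one is still unassigned, not at the end.

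The missing idea is that the minima never change the execution at all. When the $r$th agent is introduced, at most $r-1$ of the pieces $W_h$ are tentatively assigned, so some whole $W_h$ is still unassigned. Hence every participating agent $j$ satisfies $f_j\ge q_j$, where $f_j$ is her value for her favorite unassigned piece. So every benchmark update $\max\{b_j,f_j\}$ produces the same value as in the call with all minima set to zero, and recursive calls receive identical inputs. Induction on the number of agents then shows the two calls coincide. Envy-freeness and the suffix-of-distinct-pieces property transfer verbatim, with no cap needed. Since benchmarks start at $q_i$, only increase, and are met at the end, each agent receives at least $q_i$. The same availability of a whole $W_h$ is also the clean way to finish your argument that $\varnothing$ stays unallocated. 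Combined with the tie rule and the fact that recursive calls receive only contested (previously tentatively assigned) pieces, it replaces the end-state contradiction whose recursive case you left open.
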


\subsection{A Lemma for Cutting Several Pieces at Once}

Recall that in Stage 3 of \Cref{sec:one-recursive-call}, we needed to make the differences in value positive so that our algorithm could recurse, and we described two ways to achieve this.
When a pair $(i,j)$ has a witness, we add a small amount of cake to the snapshot piece assigned to $j$.
When it has no witness, we instead use pieces from the residue to make $i$ value her own share more than $j$'s.
To carry out both steps, we need to cut many disjoint pieces from the same residue and ensure that each has roughly the right value for every agent.
The following lemma shows that we can do this efficiently.

\begin{lemma}\label{lem:simultaneous-cut}
Let $N$ be a set of $n$ agents, let $R$ be an ordered piece whose component values are stored for every agent, and let $J$ be a nonempty finite index set.
For each $j\in J$, let $\lambda_j\ge0$, and suppose
\[
  \sum_{j\in J}\lambda_j\le1.
\]
Let $0<\delta<1$.
There is a deterministic bounded procedure that produces pairwise disjoint pieces $C_j\subseteq R$, one for each $j\in J$, such that, for every $i\in N$ and $j\in J$,
\[
  \bigl|V_i(C_j)-\lambda_jV_i(R)\bigr|
  \le\delta V_i(R).
\]
Furthermore, this procedure uses $O\!\left(n^2\delta^{-2}\log(n|J|+1)\right)$ Robertson--Webb queries.
\end{lemma}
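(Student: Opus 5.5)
The plan is to discretize $R$ into many small \emph{atoms} that every agent values at most a tiny fraction of $V_i(R)$. We then assign the atoms to the indices $j\in J$ by a randomized rounding that gives each atom to index $j$ with probability $\lambda_j$ and leaves it in no $C_j$ with probability $1-\sum_j\lambda_j$. Finally we derandomize that rounding. Agents with $V_i(R)=0$ satisfy the conclusion trivially, so we normalize by $V_i(R)$ for each of the remaining agents. The target query count $n^2\delta^{-2}\log(n|J|+1)$ splits naturally into two factors: $n\cdot\varepsilon^{-1}$ atoms with $\varepsilon\approx\delta^2/\log(n|J|+1)$, and one evaluation query per atom per agent, which contributes the other factor $n$.

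\textbf{Step 1 (atoms).} Fix $\varepsilon=c\,\delta^2/\ln\bigl(4n(|J|+1)\bigr)$ for a suitable absolute constant $c$. For each agent $i$, use cut queries along the fixed component order of $R$ (simulated as in \Cref{app:rw-model-details}, using the stored component values) to mark the points where $i$'s cumulative value crosses multiples of $\varepsilon V_i(R)$. That takes about $1/\varepsilon$ cuts per agent. The common refinement of all agents' marks yields at most $n/\varepsilon+O(1)$ consecutive atoms $a_1,\dots,a_M$. Each atom lies inside one of $i$'s grid cells, so $V_i(a_t)\le\varepsilon V_i(R)$ for every $i$. We then query $V_i(a_t)$ for every agent and atom, which costs $O(nM)=O(n^2/\varepsilon)$ queries, the claimed bound (plus the $O(n/\varepsilon)$ cuts). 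From here on everything is computation on known numbers and needs no further queries.

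\textbf{Step 2 (randomized rounding and concentration).} Add a dummy index $0$ with $\lambda_0=1-\sum_{j\in J}\lambda_j$. Assign each atom independently to index $j\in J\cup\{0\}$ with probability $\lambda_j$, and let $C_j$ be the union of the atoms assigned to $j$; atoms assigned to $0$ stay in the residue. For fixed $i$ and $j$, the quantity $V_i(C_j)/V_i(R)$ is a sum of independent variables in $[0,\varepsilon]$ with mean $\lambda_j\le 1$. Bernstein's inequality gives
\[
\Pr\bigl[|V_i(C_j)-\lambda_jV_i(R)|>\delta V_i(R)\bigr]\le 2\exp\!\bigl(-\delta^2/(2\varepsilon(1+\delta/3))\bigr).
\]
With our choice of $\varepsilon$ this is below $1/(n|J|)$, so a union bound over the $n|J|$ pairs shows that a good assignment exists.

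\textbf{Step 3 (derandomization), the main obstacle.} We convert the existence argument into a deterministic procedure by the method of conditional expectations with pessimistic estimators. For each pair $(i,j)$ and each sign, take the exponential-moment bound $\mathbb E\bigl[e^{\pm\theta(X_{ij}-\lambda_j)}\bigr]e^{-\theta\delta}$ that underlies Bernstein, with $\theta\approx\delta/\varepsilon$. The sum of these bounds over all pairs and signs is initially less than $1$. It factorizes over atoms, so when we fix atoms one at a time we can always choose an index for the current atom that does not increase the sum. At the end the sum is still below $1$, so no pair violates its bound. The care needed is in checking that the per-atom moment bound $\mathbb E[e^{\theta Y}]\le \exp\bigl(\lambda_j(e^{\theta\varepsilon}-1)\,\cdot\bigr)$ keeps the chosen $\theta$ and $\varepsilon$ consistent, i.e.\ that $\theta\varepsilon=O(1)$. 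We also need that the procedure is bounded, which holds because it performs only $O(M(|J|+1)n|J|)$ arithmetic operations on known values and no queries. If the estimator bookkeeping proves awkward, a fallback is a deterministic discrepancy-style rounding of the $n$-dimensional atom vectors (for instance, sequentially moving fractional weights along an iterative-rounding basis). The Chernoff route, however, matches the logarithmic factor exactly, so I would pursue it first.
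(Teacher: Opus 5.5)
Your proposal is correct and follows the paper's proof: the paper also has every agent cut $R$ into pieces worth $\gamma V_i(R)$ (with $\gamma=\delta^2/(2\log(2n|J|+2))$), places each resulting cell independently in $C_j$ with probability $\lambda_j$, and concludes with Hoeffding and a union bound over the $n|J|$ pairs; your Bernstein bound works equally well. The one difference is your Step 3, which is not actually an obstacle here: computation on recorded values is free in the Robertson--Webb model, so the paper simply enumerates all finitely many labelings of the cells and returns one that works at no query cost, and it mentions your conditional-expectations derandomization only as a computational refinement in its concluding discussion.
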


\begin{proof}
Let $m=|J|$ and set $\gamma=\delta^2/(2\log(2nm+2))$.

We ask each agent $i$ with $V_i(R)>0$ to cut $R$, following its fixed order (recall the convention from \Cref{app:rw-model-details}), into pieces worth $\gamma V_i(R)$ to her, except possibly the last piece, which may be smaller.
We combine the points where the agents cut $R$ with the endpoints of the intervals that make up $R$.
Together, these points divide $R$ into smaller pieces, which we call \emph{cells}.
Whenever an agent makes one of these cuts, we use \Cref{prop:ordered-piece-simulation} to record every agent's value for the two resulting intervals.
Consequently, once all cuts are made, we know every agent's value for every cell.
By construction, every cell $K$ satisfies $V_i(K)\le\gamma V_i(R)$ for every agent $i$.

We next show that we can form the desired pieces from these cells.
Independently for each cell, we make one choice: place it in $C_j$ with probability $\lambda_j$ for each $j\in J$, or leave it unused with probability $1-\sum_{j\in J}\lambda_j$.
Fix an agent $i$ with $V_i(R)>0$ and some $j\in J$, and let $x_K=V_i(K)/V_i(R)$, i.e., the fraction of $i$'s value for $R$ contained in cell $K$.
Because the cells partition $R$, we have $\sum_Kx_K=1$.
Moreover, $x_K\le\gamma$ for every cell, and hence
\[
  \sum_Kx_K^2\le\gamma\sum_Kx_K=\gamma.
\]
The fraction of $i$'s value placed in $C_j$ is the sum of $x_K$ over the cells placed in $C_j$, and its expected value is $\lambda_j$.
Therefore, Hoeffding's inequality gives
\[
\begin{aligned}
  \Pr\!\left[
    \left|
      \sum_Kx_K\mathbf 1_{\{K\text{ is placed in }C_j\}}
      -\lambda_j
    \right|>\delta
  \right]
  &\le2\exp\!\left(-\frac{2\delta^2}{\sum_Kx_K^2}\right)\\
  &\le2e^{-2\delta^2/\gamma}.
\end{aligned}
\]
There are at most $nm$ choices of $i$ and $j$.
It follows that the probability that any of the desired inequalities fails is at most
\[
  2nm\,e^{-2\delta^2/\gamma}
  =\frac{2nm}
    {(2nm+2)^4}
  <1.
\]
Thus, we can place the cells so that $\bigl|V_i(C_j)-\lambda_jV_i(R)\bigr|\le\delta V_i(R)$ for every $i\in N$ and $j\in J$.
The pieces $C_j$ are pairwise disjoint because every cell is placed in at most one of them.

It remains to find one way to place the cells.
Since there are finitely many cells and we know every agent's value for each one, we can check all possibilities until we find one for which every inequality holds.
The argument above shows that one exists, and this search uses no further queries.\footnote{We discuss ways to avoid this exhaustive search in \Cref{sec:discussion}, though these alternatives concern the computation between queries and do not affect the Robertson--Webb query bound.}

Finally, we count the queries used to create the cells and record their values.
Each agent needs $O(1/\gamma)$ Robertson--Webb cut queries to divide $R$ into her small pieces, so all agents together need $O(n/\gamma)$ such queries.
By \Cref{prop:ordered-piece-simulation}, recording every agent's value after each cut costs $O(n)$ evaluation queries.
Thus, the procedure uses $O(n^2/\gamma)=O\!\left(n^2\delta^{-2}\log(n|J|+1)\right)$ queries, as claimed.
\end{proof}

\subsection{Pre-processing the Snapshots}\label{sec:preparation-proof}

We now begin our analysis of the full algorithm.
The first step, as outlined in Stage 1 of \Cref{sec:one-recursive-call}, is pre-processing, which involves eliminating the ``intermediate'' bonuses so that we can later resolve Challenge \#1.
In this section, we describe pre-processing in detail.

Fix a call $\Main(N,C)$, where $C$ is formed from finitely many intervals, and (with slight abuse of notation) set $n=|N|$.

\phantomsection\label{sec:base-cases}
For $n=1$, we give the subcake to the sole agent.
For $n=2$, we use cut-and-choose: one agent divides the subcake into two pieces she values equally, the other chooses a piece she prefers, and the first agent receives the remaining piece.
For $n=3$, we use the Selfridge--Conway protocol, which requires at most $14$ Robertson--Webb queries \cite{brams-taylor1996,amanatidis2018}.
When $C$ has several interval components, we apply these protocols to the fixed order defined in \Cref{app:rw-model-details} and implement the resulting queries as explained there \cite{robertson-webb1998}.
We henceforth assume $n\ge4$ and set $Q=n(n-1)$.

Our first step is to construct the snapshots.
Throughout, we use $U$ for the residue we currently have, $H$ for the partial allocation we have constructed so far, and $\mathcal S$ for the collection of snapshots.
Initially, $U=C$, $H$ is the empty allocation, and $\mathcal S=\varnothing$.
To begin pre-processing, we go through the agents one at a time and, for each agent, make $Q^2$ consecutive \Core calls with her as the cutter.
Each call uses the residue left by the one before it.
Altogether, these calls return $nQ^2$ envy-free allocations.
We store them in $\mathcal S$, index them by $s$, and call each one a \emph{snapshot}.
For convenience, we name each piece in a snapshot in two ways.
In snapshot $s$, we call the cake allocated to agent $k$ both her \emph{snapshot piece} $c_{s,k}$ and the \emph{label-$k$ piece}.
Note that this label stays with the snapshot piece if it is later reassigned.

In snapshot $s$, for agent $i$, define
\[
  b_{s,i}=V_i(c_{s,i}),
  \qquad
  d_{s,i,k}=b_{s,i}-V_i(c_{s,k}).
\]
We call $d_{s,i,k}$ agent $i$'s \emph{bonus} over label $k$.
Because snapshot $s$ is envy-free, this bonus is nonnegative.
For any collection $\mathcal T$ of snapshots, we write $\operatorname{Id}(\mathcal T)$ for the identity assignment, in which agent $i$ receives the label-$i$ piece from every snapshot in $\mathcal T$; that is, $\operatorname{Id}(\mathcal T)_i=\bigcup_{s\in\mathcal T}c_{s,i}$.

For each agent $i$, set
\[
  \Gamma_i(U)=\frac{V_i(U)}{Q^2},
  \quad
  \eta_i(U)=\frac{\Gamma_i(U)}{4n(Q+1)},
  \quad
  \epsilon_i(U)=\frac{\eta_i(U)}n,
\]
and recompute these quantities whenever $U$ shrinks.
We call a bonus \emph{small} when it is at most $\epsilon_i(U)$, \emph{intermediate} when it lies strictly between $\epsilon_i(U)$ and $\Gamma_i(U)$, and \emph{large} when it is at least $\Gamma_i(U)$.
Whenever an intermediate bonus exists and belongs to agent $i$, we run a prescribed sequence of at most $\left\lceil\log_{n/(n-2)}\bigl(4n^2(Q+1)\bigr)\right\rceil$ \Core calls with $i$ as the cutter.
We add the returned allocations to $H$, replace $U$ with the residue left by the sequence, and recompute the quantities above.
We repeat this process until no intermediate bonus remains, so that every bonus is either small or large, and call the resulting residue $R^*$.

\begin{lemma}\label{lem:preparation}
The construction returns a partial allocation $H$, a collection $\mathcal S$ of $nQ^2$ snapshots, and a residue $R^*$ such that:
\begin{enumerate}
\renewcommand{\labelenumi}{(\roman{enumi})}
\item $H$ and every snapshot are envy-free, and their allocated pieces together with $R^*$ partition $C$;
\item every bonus satisfies
\[
  d_{s,i,k}\le\epsilon_i(R^*)
  \qquad\text{or}\qquad
  d_{s,i,k}\ge\Gamma_i(R^*).
\]
\end{enumerate}
Furthermore, pre-processing uses $n^{O(1)}2^n$ Robertson--Webb queries.
\end{lemma}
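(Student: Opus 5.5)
Part (i) comes from bookkeeping, part (ii) from the stopping rule, and the real work is a termination and query bound. For (i), every snapshot and every allocation added to $H$ is the output of a \Core call, hence envy-free by \Cref{prop:core-guarantees}(i). $H$ is a union of envy-free allocations of pairwise disjoint subcakes among the same agents, so it is envy-free by \Cref{lem:allocation-facts}(i). Each \Core call allocates part of the current residue $U$ and returns the rest as the new $U$. So at every moment the snapshot pieces, the pieces of $H$, and $U$ partition $C$, and this holds in particular for $R^*$. Part (ii) is just the stopping condition: the loop halts only when no bonus $d_{s,i,k}$ lies strictly between $\epsilon_i(U)$ and $\Gamma_i(U)$, and at that point $U=R^*$.

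The substantive point is termination with polynomially many reduction sequences. I would argue that each ordered triple $(s,i,k)$ triggers at most one sequence, which gives at most $nQ^2\cdot n^2$ sequences in total.

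Suppose $d=d_{s,i,k}$ is intermediate when the sequence with cutter $i$ starts on residue $U$, so $d>\epsilon_i(U)=V_i(U)/(4n^2(Q+1)Q^2)$. Apply \Cref{prop:core-guarantees}(iii) with $\beta=4n^2(Q+1)$. The new residue $U'$ then satisfies $V_i(U')\le V_i(U)/(4n^2(Q+1))$, so
\[
\Gamma_i(U')=\frac{V_i(U')}{Q^2}\le \frac{V_i(U)}{4n^2(Q+1)Q^2}=\epsilon_i(U)<d.
\]
Hence $d$ is large afterward. Bonuses never change, since the snapshots are fixed after the first phase, while $U$ only shrinks. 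So $\Gamma_i$ is nonincreasing and $d$ stays large forever.

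A sequence run for some other triple may also shrink $V_j(U)$ for other agents. That can only turn an intermediate bonus into a large one; it can never make a large bonus intermediate again. So each triple causes at most one sequence. One subtlety to check is a bonus $d=0$: it is always small, even when $V_i(U)=0$, because the inequality $0\le\epsilon_i$ holds. A positive $d$ with $V_i(U)=0$ is large, since $\Gamma_i=0$. So the trigger argument covers every case, including a sequence that stops early at value zero.

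For the query count:
\begin{itemize}
\item Snapshot construction makes $nQ^2=O(n^5)$ \Core calls.
\item There are at most $n^3Q^2$ reduction sequences. Each has length $\lceil\log_{n/(n-2)}(4n^2(Q+1))\rceil=O(n\log n)$, because $\log(n/(n-2))=\Theta(1/n)$.
\item By \Cref{prop:subroutine-query-bounds}, each \Core call costs $O(n2^n)$ queries.
\item Computing the bonuses and the values $V_i(U)$ needs evaluation queries on snapshot pieces and residues. These are polynomial in number, using the stored component values via \Cref{app:rw-model-details}.
\end{itemize}
In total this is $n^{O(1)}2^n$. The step I expect to need the most care is the monotonicity claim that a large bonus never becomes intermediate again, together with the degenerate zero-value cases; the rest is routine accounting.
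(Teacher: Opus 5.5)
Your proof is correct and follows the paper's route. Part (i) comes from the \Core guarantees and the disjointness of successive residues. For termination, the sequence with factor $4n^2(Q+1)$ makes the triggering bonus large, and it stays large because bonuses are fixed and $\Gamma_i$ only decreases; this gives polynomially many \Core calls (you count $n^3Q^2$ triples including $k=i$, the paper counts $nQ^3$). One small imprecision: shrinking $U$ can also turn a small bonus into an intermediate one, not only an intermediate one into a large one, but this does not matter because you charge each sequence to a bonus that stays large afterward.
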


\begin{proof}
We first prove (i).
By \Cref{prop:core-guarantees} (i), every \Core call returns an envy-free allocation.
Because each call uses the residue left by the preceding one, all these allocations use pairwise disjoint cake.
We store the first $nQ^2$ allocations as snapshots and add every later allocation to $H$.
Thus, every snapshot is envy-free, and $H$ is envy-free by \Cref{lem:allocation-facts} (i).
Moreover, after each call, the cake allocated by that call together with the new residue is the residue we had before the call.
It follows that the pieces allocated in $H$ and the snapshots, together with the final residue $R^*$, partition $C$.
This proves (i).

We next show that pre-processing terminates with no intermediate bonus.
Suppose $d=d_{s,i,k}$ is the intermediate bonus being processed while the current residue is $U$.
By definition, $\epsilon_i(U)<d<\Gamma_i(U)$.
By \Cref{prop:core-guarantees} (iii), the sequence with factor $4n^2(Q+1)$ leaves a residue $U'$ satisfying $V_i(U')\le\frac{V_i(U)}{4n^2(Q+1)}$.
Thus,
\[
  \Gamma_i(U')
  =\frac{V_i(U')}{Q^2}
  \le\frac{V_i(U)}{4n^2(Q+1)Q^2}
  =\epsilon_i(U)
  <d.
\]
Therefore, $d$ is large after this step.
The snapshot pieces do not change during pre-processing, so neither does $d$.
Every later residue is contained in $U'$, so the value a bonus must reach to be large can only decrease.
Thus, $d$ remains large for the rest of pre-processing.

Each of the $nQ^2$ snapshots has $Q$ bonuses $d_{s,i,k}$ with $k\ne i$.
Thus, at most $nQ^3$ bonuses can become intermediate.
Each step makes one of them permanently large, so the same bonus is never selected again.
The process therefore stops after at most $nQ^3$ steps, with no intermediate bonus remaining.
This proves (ii).

Finally, we count the queries.
Constructing the snapshots takes $nQ^2$ \Core calls, and evaluating every snapshot piece for every agent takes $O(n^3Q^2)$ queries.
By the argument above, there are at most $nQ^3$ later sequences, each containing at most
\[
  \left\lceil\log_{n/(n-2)}\bigl(4n^2(Q+1)\bigr)\right\rceil
  =O(n\log n)
\]
\Core calls.
Since $Q=n(n-1)$, we make only polynomially many \Core calls.
Combining this with \Cref{prop:subroutine-query-bounds,prop:ordered-piece-simulation} gives the claimed query bound.
\end{proof}

If every agent values $R^*$ at zero, we add $R^*$ to one agent's share in $H\oplus\operatorname{Id}(\mathcal S)$ and return that allocation.
By \Cref{lem:preparation,lem:allocation-facts}, this allocation is complete and envy-free.

\subsection{Pruning the Snapshots}\label{sec:reassignments-proof}

With pre-processing complete, we now begin Stage 2 of \Cref{sec:one-recursive-call}, where we prune the snapshots so that the reassignments chosen later do not conflict.
For the rest of the proof, assume that some agent values $R^*$ positively.
For every agent $i$, we abbreviate
\[
  \Gamma_i=\Gamma_i(R^*),
  \qquad
  \eta_i=\eta_i(R^*),
  \qquad
  \epsilon_i=\epsilon_i(R^*).
\]
We now give ``mark'' its formal meaning.
In snapshot $s$, let $F_{s,i}$ be the set of labels that agent $i$ marks, i.e.,
\[
  F_{s,i}=\{k\in N:d_{s,i,k}>\epsilon_i\}.
\]
By \Cref{lem:preparation}, if agent $i$ marks label $k$ in snapshot $s$, then her bonus $d_{s,i,k}$ over that label is large.
Conversely, if she does not mark label $k$, then this bonus is small.

We choose an agent $c$ with $V_c(R^*)>0$ and begin with the $Q^2$ snapshots made with $c$ as the cutter.
Recall that snapshot $s$ witnesses the pair $(i,j)$ if $i$ marks some label that $j$ does not.
In our notation, this means $F_{s,i}\nsubseteq F_{s,j}$.
Such a witness identifies a label that we may later want to give to $j$.
This also brings us back to Challenge \#2 from the introduction.
The same snapshot may witness several pairs, and those pairs may require us to distribute that snapshot in different, conflicting ways.
To resolve this issue, we need to reserve a different witness snapshot for each pair.

We achieve this by pruning the snapshots.
Observe that, since there are $Q$ (ordered) pairs of agents, if every pair has at least $Q$ witnesses, then our goal is trivially met by the pigeonhole principle.
Indeed, we can go through the pairs one at a time.
Before we reach a pair, at most $Q-1$ snapshots have been reserved, so at least one witness remains unused.
In light of this observation, whenever a pair has between $1$ and $Q-1$ witnesses, we discard all of its witness snapshots.
We repeat until no such pair remains and call the remaining collection $\mathcal P$.
Thus, in $\mathcal P$, every pair either has no witness or has many, enough to avoid conflicts later.

We write agent $i$'s profile as $\mathbf F_i=(F_{s,i})_{s\in\mathcal P}$.
Agents with the same profile form a \emph{profile class}.
We write $\mathbf F_i\preceq\mathbf F_j$ when $F_{s,i}\subseteq F_{s,j}$ for every $s\in\mathcal P$, i.e., no snapshot in $\mathcal P$ witnesses $(i,j)$.
We write $\mathbf F_i\prec\mathbf F_j$ when, in addition, $F_{s,i}\subsetneq F_{s,j}$ for some $s\in\mathcal P$.

Later, we recurse on one profile class.
We need at least two profile classes to remain, so that the recursive call contains fewer agents.
The following fact about the cutter prevents all agents from having the same profile.

\begin{lemma}\label{lem:positive-cutter}
If agent $c$ satisfies $V_c(R^*)>0$, then $F_{s,c}\ne\varnothing$ in every snapshot $s$ made with $c$ as the cutter.
\end{lemma}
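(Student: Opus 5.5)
We argue by contradiction and use the \Core guarantee that at least one non-cutter receives a whole piece. Fix a snapshot $s$ made with $c$ as the cutter, and let $U$ be the residue on which that \Core call was run. By \Cref{prop:core-guarantees} (i), $c$ receives a whole piece worth $V_c(U)/n$ to her. By envy-freeness of the snapshot, every other piece is worth at most $b_{s,c}$ to $c$.

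Suppose for contradiction that $F_{s,c}=\varnothing$. Then every bonus $d_{s,c,k}$ is small, so $V_c(c_{s,k})\ge b_{s,c}-\epsilon_c$ for every label $k$. The $n$ snapshot pieces are disjoint subsets of $U$, and $b_{s,c}= V_c(U)/n$. So our assumption says every piece is worth nearly $V_c(U)/n$ to $c$. Summing over all labels gives
\[
  V_c(U)\ge\sum_{k}V_c(c_{s,k})\ge V_c(U)-n\epsilon_c .
\]
Hence the residue $U'$ left by this call satisfies $V_c(U')\le n\epsilon_c$.

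This is not yet a contradiction, because $\epsilon_c$ is tiny. The real lever is that the final residue $R^*$ is contained in $U'$. Therefore
\[
  V_c(R^*)\le V_c(U')\le n\epsilon_c=\eta_c=\frac{V_c(R^*)}{4nQ^2(Q+1)}.
\]
Since $4nQ^2(Q+1)>1$, this forces $V_c(R^*)=0$, contradicting the hypothesis $V_c(R^*)>0$. So $F_{s,c}\neq\varnothing$.

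The main point to verify is that the inequality chain is stated in terms of $\epsilon_c=\epsilon_c(R^*)$, the quantity used in the definition of $F_{s,c}$. It does not matter that the snapshot was produced from a larger residue $U$: only $R^*\subseteq U'$ and the definition of marking are used. We also need $V_c(c_{s,k})\le V_c(U)$ for each $k$, which is immediate, and the equality $b_{s,c}=V_c(U)/n$, which comes from the whole-piece guarantee of \Cref{prop:core-guarantees} (i).
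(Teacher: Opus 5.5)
Your proof is correct and is essentially the paper's argument in contrapositive form. Both rest on the identity $V_c(U')=\sum_{k}d_{s,c,k}$, which holds because the cutter's own piece is worth exactly $V_c(U)/n$, together with $R^*\subseteq U'$; the paper phrases it directly, concluding that some bonus is at least $V_c(R^*)/(n-2)>\Gamma_c>\epsilon_c$. The paper uses the non-cutter's whole piece only to sharpen $n$ to $n-2$, which is unnecessary, so your opening claim that you rely on that guarantee is inaccurate but harmless: your argument only uses the cutter's whole piece.
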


\begin{proof}
Fix one such snapshot, and let $U$ and $T$ be the residues immediately before and after its \Core call.
Agent $c$ divides $U$ into $n$ pieces of equal value, each worth $b_{s,c}$ to her.
The allocated snapshot pieces together with $T$ partition $U$.
Therefore,
\[
  V_c(T)
  =\sum_{k\in N}\bigl[b_{s,c}-V_c(c_{s,k})\bigr]
  =\sum_{k\in N}d_{s,c,k}.
\]
Because snapshot $s$ is envy-free, every term in this sum is nonnegative.
Moreover, by \Cref{prop:core-guarantees} (i), $c$ and another agent each receive a whole piece cut by $c$.
Both pieces are worth $b_{s,c}$ to $c$, so the corresponding two terms are zero.
Thus, at most $n-2$ terms are positive.

All later operations only shrink the residue, so $R^*\subseteq T$.
It follows that some label $k$ satisfies
\[
  d_{s,c,k}
  \ge\frac{V_c(T)}{n-2}
  \ge\frac{V_c(R^*)}{n-2}
  >\frac{V_c(R^*)}{Q^2}
  =\Gamma_c,
\]
where we used $V_c(R^*)>0$ and $Q^2>n-2$.
Since $\Gamma_c>\epsilon_c$, agent $c$ marks label $k$.
Hence $F_{s,c}\ne\varnothing$.
\end{proof}

\begin{lemma}\label{lem:stable-pool}
The collection $\mathcal P$ contains at least $Q$ snapshots.
For every pair $(i,j)$, the condition $F_{s,i}\subseteq F_{s,j}$
either holds throughout $\mathcal P$ or fails in at least $Q$ snapshots.
Moreover, at least two profile classes remain.
\end{lemma}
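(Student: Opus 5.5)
We will track how many snapshots the pruning discards. The pruning starts with the $Q^2$ snapshots made with $c$ as the cutter. Call a pair $(i,j)$ \emph{triggered} when, at some moment, it has between $1$ and $Q-1$ witnesses and all of them are discarded. Our argument rests on the following claim.

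\textbf{Claim.} Once a pair is triggered, it has no witnesses in the current collection, and it never regains one.

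Indeed, discarding only removes snapshots, and witness status depends only on the fixed sets $F_{s,i}$. So the set of witnesses for a pair can only shrink. Consequently:
\begin{itemize}
\item each of the $Q$ ordered pairs is triggered at most once;
\item each trigger discards at most $Q-1$ snapshots;
\item the procedure terminates after at most $Q$ triggers, and in total at most $Q(Q-1)$ snapshots are discarded.
\end{itemize}
Hence $|\mathcal P|\ge Q^2-Q(Q-1)=Q$, which gives the first assertion.

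For the second assertion, use the stopping condition. When pruning stops, no pair has between $1$ and $Q-1$ witnesses in $\mathcal P$. So each pair $(i,j)$ has either zero witnesses or at least $Q$ witnesses. Zero witnesses means $F_{s,i}\subseteq F_{s,j}$ for all $s\in\mathcal P$. At least $Q$ witnesses means the inclusion fails in at least $Q$ snapshots.

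For the third assertion, we need two agents with different profiles. By \Cref{lem:positive-cutter}, $F_{s,c}\ne\varnothing$ for every $s\in\mathcal P$, since $\mathcal P$ consists of snapshots made with $c$ as the cutter. Fix one $s\in\mathcal P$; it exists because $|\mathcal P|\ge Q\ge1$.

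Consider the sets $F_{s,i}$ for $i\in N$. The bonus $d_{s,i,i}=0$, so no agent marks her own label: $i\notin F_{s,i}$. Pick any $k\in F_{s,c}$. Then $k\ne c$, and $k\notin F_{s,k}$. Therefore $F_{s,k}\ne F_{s,c}$, so agents $k$ and $c$ have different profiles. This gives at least two profile classes.

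This last step is straightforward. The main point needing care is the counting in the Claim: it is what bounds the total number of discarded snapshots.
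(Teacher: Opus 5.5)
Your proposal is correct and follows the paper's proof. The once-per-pair deletion count gives $|\mathcal P|\ge Q^2-Q(Q-1)=Q$ and the zero-or-at-least-$Q$ dichotomy. Your two-class argument (agent $c$ marks some label $k$, which agent $k$ cannot mark) is the direct form of the paper's contradiction argument, resting on the same two facts: \Cref{lem:positive-cutter} and $i\notin F_{s,i}$.
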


\begin{proof}
When a pair $(i,j)$ has between $1$ and $Q-1$ witnesses, the procedure deletes its entire witness set.
After that deletion, this pair has no witnesses and cannot regain one as the collection shrinks, so each of the $Q$ pairs causes at most one deletion.
Every deletion removes at most $Q-1$ snapshots.
Consequently, the final collection satisfies $|\mathcal P|\ge Q^2-Q(Q-1)=Q$, and every surviving witness set is either empty or has size at least $Q$.

Suppose all agents had the same profile.
Then, in every snapshot $s\in\mathcal P$, they would all mark the same labels.
Agent $i$ never marks label $i$, so no other agent could mark label $i$ either.
This holds for every $i$, so no agent could mark any label in $s$.
However, $s$ was made with $c$ as the cutter, and \Cref{lem:positive-cutter} says that $c$ marks at least one label in $s$, a contradiction.
Thus, at least two profile classes remain.
\end{proof}

To sum up, after pruning, agents in different profile classes fall into one of two cases.
Either $\mathbf F_i\npreceq\mathbf F_j$, in which case at least $Q$ snapshots witness $(i,j)$, or $\mathbf F_i\prec\mathbf F_j$, in which case no snapshot in $\mathcal P$ witnesses $(i,j)$.

\subsection{Constructing the Partial Allocation}\label{sec:partial-allocation-proof}

Having pruned the snapshots, we now enter Stage 3, where the main goal is to construct the partial allocation needed for recursion.
This section is rather lengthy, so we give a brief outline in \Cref{fig:stage-three-overview}.
First, we address agents in different profile classes.
Our goal, roughly speaking, is to distribute the snapshot pieces and give agents additional cake so that each agent values her own share more than the share of anyone in another profile class.\footnote{Formally, every agent who values $R^*$ positively strictly prefers her own share to the share of every agent in another profile class. See \Cref{lem:cross-class}.}
There are two subcases, and we develop a tool (\Cref{lem:matching,lem:rank-properties}) for each one.
We then use both tools in one construction that handles the two (sub)cases together.
Finally, we take care of agents within the same class, where, to ensure no envy exists, we use \SubCore to redistribute the cake these agents receive.

\begin{figure}[htbp]
  \centering
  \begin{tikzpicture}[
    x=1cm,
    y=1cm,
    flowbox/.style={
      rectangle,
      draw,
      thick,
      align=center,
      minimum height=1.08cm,
      text width=2.55cm,
      inner sep=3pt,
      font=\footnotesize
    },
    question/.style={flowbox,text width=2.65cm},
    flowarrow/.style={-{Stealth[length=2mm]},thick},
    answer/.style={font=\scriptsize,fill=white,inner sep=1pt}
  ]
    \node[question] (question) at (0,0)
      {Does $(i,h)$ have\\a witness?};

    \node[flowbox,text width=2.2cm] (witness) at (3.05,1)
      {Matchings\\(\Cref{lem:matching-selection})};
    \node[flowbox,text width=2.2cm] (rank) at (3.05,-1)
      {Ranks\\(\Cref{lem:rank-properties})};

    \node[flowbox,text width=2.55cm] (combine) at (6.35,0)
      {Cut cake for\\both cases\\(\Cref{lem:joint-pieces})};
    \node[flowbox,text width=2.85cm] (within) at (9.8,0)
      {Redistribute within\\each profile class\\(\Cref{lem:class-settlement})};
    \node[flowbox,text width=2.65cm] (allocation) at (13.25,0)
      {Get partial allocation\\(\Cref{lem:cake-accounting,lem:cross-class})};

    \draw[flowarrow] (question.north east)
      -- node[answer,above,sloped] {yes} (witness.west);
    \draw[flowarrow] (question.south east)
      -- node[answer,below,sloped] {no} (rank.west);
    \draw[flowarrow] (witness.east) -- (combine.north west);
    \draw[flowarrow] (rank.east) -- (combine.south west);
    \draw[flowarrow] (combine.east) -- (within.west);
    \draw[flowarrow] (within.east) -- (allocation.west);
  \end{tikzpicture}
  \caption{A flowchart for Stage~3.}
  \label{fig:stage-three-overview}
\end{figure}

To begin, for every snapshot $s$ that survives the pruning, we define a directed graph $G_s$.
The graph contains the edge $i\to k$ if $k\notin F_{s,i}$, and in particular every self-loop is present.
We first look for a potential shortcut.

\begin{lemma}\label{lem:scc}
Suppose some $G_s$ is not strongly connected, and let $S$ be a sink strongly connected component of $G_s$.
Then we can construct an envy-free partial allocation in which every agent in $S$ dominates every agent in $N\backslash S$.
We can therefore recurse on $N\backslash S$.
\end{lemma}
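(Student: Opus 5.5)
We build the allocation starting from $H\oplus\operatorname{Id}(\mathcal P)$ and then shrink the residue using prescribed \Core sequences, one for each agent of $S$. First we record the structural fact. $S$ is a sink strongly connected component, so no edge of $G_s$ leaves $S$. Hence for every $a\in S$ and every label $k\in N\backslash S$, the edge $a\to k$ is absent, which means $k\in F_{s,a}$. By \Cref{lem:preparation} (ii) and the definition of marking, this gives $d_{s,a,k}\ge\Gamma_a=V_a(R^*)/Q^2$.

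Next we fix the base allocation $X=H\oplus\operatorname{Id}(\mathcal S)$, that is, every snapshot given by the identity assignment, together with the residue $R^*$. By \Cref{lem:preparation} (i) and \Cref{lem:allocation-facts} (i), $X$ is envy-free. Its differences in value are at least those of any single constituent, so
\[
\Delta_{ak}(X)\ge d_{s,a,k}\ge V_a(R^*)/Q^2 \qquad\text{for all } a\in S,\ k\in N\backslash S.
\]
(We use all snapshots rather than only $\mathcal P$, so that no cake is left unaccounted for; the monotonicity in \Cref{lem:allocation-facts} (i) makes this harmless.)

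Now we shrink the residue for each $a\in S$ in turn. We run a prescribed sequence of \Core calls on all of $N$ with $a$ as the cutter, of maximum length $\lceil\log_{n/(n-2)}Q^2\rceil=O(n\log n)$. By \Cref{prop:core-guarantees} (iii), this reduces $a$'s value of the current residue to at most $1/Q^2$ of her value of the residue at the start of that sequence. That starting residue is contained in $R^*$, so afterwards $V_a(\text{residue})\le V_a(R^*)/Q^2\le\Delta_{ak}(X)$. If $V_a(R^*)=0$, domination already holds with no calls needed. We add each returned allocation to $X$. These allocations are envy-free on disjoint cake, so \Cref{lem:allocation-facts} (i) preserves envy-freeness and does not decrease any $\Delta_{ak}$. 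Once $a$ dominates every $k\notin S$, \Cref{lem:allocation-facts} (ii) guarantees that the later sequences for other agents of $S$ keep this domination.

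After all $|S|$ sequences, every agent of $S$ dominates every agent of $N\backslash S$. We must also check that $N\backslash S$ is nonempty and that $S$ is nonempty. Both hold because $G_s$ is not strongly connected, so $S$ is a proper nonempty subset. Then \Cref{cor:recursive-reductions}, applied with $E=N\backslash S$, lets us recurse on $N\backslash S$ with the final residue. The query cost is polynomially many \Core calls, hence $n^{O(1)}2^n$ queries by \Cref{prop:subroutine-query-bounds}.

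The main point to get right is the bookkeeping in the domination step: the bound $\Delta_{ak}\ge\Gamma_a$ must be measured against $V_a(R^*)$ rather than the current residue, and it must survive the later \Core additions. Both follow from residue monotonicity and \Cref{lem:allocation-facts}, so I expect no real obstacle.
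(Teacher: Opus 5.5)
Your proof is correct and follows the paper's argument essentially step for step. The sink component gives $d_{s,a,k}\ge\Gamma_a$ for $a\in S$, $k\notin S$, and the base allocation $H\oplus\operatorname{Id}(\mathcal S)$ inherits this gap; prescribed \Core sequences with each $a\in S$ as cutter then push her residue value below $\Gamma_a$, and \Cref{cor:recursive-reductions} applies with $E=N\backslash S$. The only blemish is the opening mention of $H\oplus\operatorname{Id}(\mathcal P)$, which you then rightly replace by $H\oplus\operatorname{Id}(\mathcal S)$, exactly as the paper does.
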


\begin{proof}
Since $G_s$ is not strongly connected, $S\subsetneq N$, and no edge of $G_s$ leaves $S$.
Thus, for every $i\in S$ and $j\in N\backslash S$, the label $j$ belongs to $F_{s,i}$.
By \Cref{lem:preparation}, we have
\[
  V_i(c_{s,i})-V_i(c_{s,j})
  =d_{s,i,j}
  \ge\Gamma_i
  =\frac{V_i(R^*)}{Q^2}.
\]

Set $Y_0=H\oplus\operatorname{Id}(\mathcal S)$.
Since $H$ and every snapshot in $\mathcal S$ are envy-free, \Cref{lem:allocation-facts} (i) gives $\Delta_{ij}(Y_0)\ge\Gamma_i$.
For each $i\in S$ with $V_i(R^*)>0$, we repeatedly call \Core with $i$ as the cutter, each time using the residue left by the previous call, until the residue is worth at most $V_i(R^*)/Q^2=\Gamma_i$ to her.
Let $Y$ be the allocation obtained by adding every allocation returned by these calls to $Y_0$.
Once the sequence for $i$ ends, she values the residue at most $\Gamma_i$, and later calls only shrink it.
Since every returned allocation is envy-free, \Cref{lem:allocation-facts} (i) gives $\Delta_{ij}(Y)\ge\Delta_{ij}(Y_0)\ge\Gamma_i$ for every $j\in N\backslash S$.

If $i\in S$ satisfies $V_i(R^*)=0$, the remaining residue is worth zero to her.
Since $Y$ is envy-free, she also dominates every $j\in N\backslash S$.

Every agent in $S$ therefore dominates every agent in $N\backslash S$.
By \Cref{cor:recursive-reductions}, we recurse on $N\backslash S$.
\end{proof}

For the rest of the proof, we assume that every $G_s$ is strongly connected. Let $(i,h)$ be a pair of agents from different profile classes.
As outlined earlier, our goal is to ensure that $i$ values her own share more than $h$'s, and we prove this in two cases.

\paragraph{Different profile classes, Case 1.}
Suppose $\mathbf F_i\npreceq\mathbf F_h$.
Then some snapshot $s$ witnesses $(i,h)$; choose a label $k\in F_{s,i}\setminus F_{s,h}$.

We want to give the label-$k$ piece to $h$, since $i$ values this piece substantially less than her own piece, while $h$ values it almost as much as her own.
Since $h$ does not mark $k$, the graph $G_s$ contains the edge $h\to k$.
The following lemma shows that we can extend this edge to a perfect matching using only edges of $G_s$.

\begin{lemma}\label{lem:matching}
Every edge $j\to k$ of a strongly connected graph $G_s$ belongs to a perfect matching that pairs each agent with a distinct label and uses only edges of $G_s$.
\end{lemma}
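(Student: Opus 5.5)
The plan follows the description around \Cref{fig:witness-matching}: close the edge into a directed cycle and use self-loops everywhere else. Identify agents and labels with the vertex set $N$ of $G_s$. Then a perfect matching pairing each agent with a distinct label using edges of $G_s$ is exactly a permutation $\sigma$ of $N$ with $a\to\sigma(a)$ an edge for every $a\in N$. Such a permutation is given by a collection of vertex-disjoint directed cycles covering $N$, where fixed points count as self-loops.

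The steps are as follows.

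\textbf{Case $k=j$.} The edge is a self-loop. Take $\sigma$ to be the identity. Every self-loop is present because $d_{s,a,a}=0\le\epsilon_a$, so $a\notin F_{s,a}$, and we are done.

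\textbf{Case $k\ne j$.} Since $G_s$ is strongly connected, there is a directed path from $k$ back to $j$. Take a shortest such path $k=v_1\to v_2\to\cdots\to v_r=j$. A shortest path has no repeated vertices. Together with the edge $j\to k$, it forms a simple directed cycle $j\to v_1\to\cdots\to v_r=j$ whose vertices are distinct.

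\textbf{Defining $\sigma$.} Set $\sigma(v_t)=v_{t+1}$ for $t<r$, set $\sigma(j)=k$, and set $\sigma(a)=a$ for every $a$ not on the cycle.

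\textbf{Checking $\sigma$.} First, $\sigma$ is a bijection: on the cycle it is a cyclic shift of the distinct vertices $v_1,\dots,v_r$, and off the cycle it is the identity. Second, every pair $a\to\sigma(a)$ is an edge of $G_s$, either as a cycle edge or as a self-loop. Third, $\sigma(j)=k$, as required.

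There is no real obstacle in this argument. The only point needing care is that the return path must be simple, so that the cycle visits each vertex once and $\sigma$ is a well-defined bijection. Choosing a shortest path handles this.
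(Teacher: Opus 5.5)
Your proof is correct and follows the paper's own argument: the identity matching when $j=k$, and otherwise a simple path from $k$ to $j$ that closes with $j\to k$ into a directed cycle, using the cycle edges plus self-loops at all other vertices. Taking a shortest path to ensure the path is simple, and checking that self-loops exist because $d_{s,a,a}=0\le\epsilon_a$, are details the paper leaves implicit.
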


\begin{proof}
If $j=k$, the identity matching works.
Otherwise, strong connectivity gives a simple directed path from $k$ to $j$.
This path together with $j\to k$ forms a directed cycle.
The cycle edges and the self-loops at all vertices outside the cycle form the required perfect matching.
\end{proof}

We now need to choose such a matching for every pair $(i,h)$ with $\mathbf F_i\npreceq\mathbf F_h$, while ensuring that different pairs use different witness snapshots.
Let $\mathcal J$ denote the collection of snapshots we select, initially empty.
We go through the pairs one at a time.
For each pair $(i,h)$, we choose an unused witness snapshot $s$ and a label $k\in F_{s,i}\setminus F_{s,h}$.
By \Cref{lem:matching}, we can choose a perfect matching $\mu_s$ containing the edge $h\to k$.
We include $s$ in $\mathcal J$ and use this matching for $s$, under which $h$ receives the label-$k$ piece.

If $\mathcal J$ contains fewer than $Q$ snapshots after we have considered every pair, we choose additional unused snapshots and give each its identity matching until $\mathcal J$ contains exactly $Q$ snapshots.

\begin{lemma}\label{lem:matching-selection}
Assume every $G_s$ with $s\in\mathcal P$ is strongly connected.
We can select a collection $\mathcal J$ of $Q$ distinct snapshots and a perfect matching $\mu_s$ for each $s\in\mathcal J$.
Each matching uses only edges of $G_s$, so every agent is paired with a label she does not mark.
Moreover, for every pair $(i,h)$ with $\mathbf F_i\npreceq\mathbf F_h$, some $s\in\mathcal J$ satisfies $\mu_s(h)\in F_{s,i}\setminus F_{s,h}$.
\end{lemma}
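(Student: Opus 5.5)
The plan is to follow the greedy procedure described just before the statement. Two counting facts from \Cref{lem:stable-pool} drive it: the pool is large, since $|\mathcal P|\ge Q$, and every pair with $\mathbf F_i\npreceq\mathbf F_h$ has at least $Q$ witnesses in $\mathcal P$.

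First, list the pairs $(i,h)$ with $\mathbf F_i\npreceq\mathbf F_h$ in a fixed order. There are at most $Q=n(n-1)$ ordered pairs, so this list has length at most $Q$. Process the pairs in order, maintaining the invariant that the current selection $\mathcal J$ consists of distinct snapshots. When we reach a pair, at most $Q-1$ snapshots have already been placed in $\mathcal J$, one per earlier pair. The pair has at least $Q$ witnesses in $\mathcal P$, so by pigeonhole some witness $s\in\mathcal P\setminus\mathcal J$ remains unused.

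For this witness, choose any label $k\in F_{s,i}\setminus F_{s,h}$; the set is nonempty by the definition of a witness. Since $k\notin F_{s,h}$, the edge $h\to k$ lies in $G_s$. The graph $G_s$ is strongly connected by assumption, so \Cref{lem:matching} yields a perfect matching $\mu_s$ that contains $h\to k$ and uses only edges of $G_s$. Add $s$ to $\mathcal J$. This gives $\mu_s(h)=k\in F_{s,i}\setminus F_{s,h}$, which is exactly the required property for this pair.

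After all pairs are handled, $|\mathcal J|\le Q$. If the inequality is strict, pad $\mathcal J$ with unused snapshots from $\mathcal P$. This is possible because $|\mathcal P|\ge Q$, so at least $Q-|\mathcal J|$ unused snapshots remain. Each padded snapshot receives the identity matching, which uses only self-loops. Self-loops are edges of every $G_s$ because $d_{s,a,a}=0\le\epsilon_a$, so no agent marks her own label.

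Finally, verify the conclusions. Every $\mu_s$ uses only edges of $G_s$, and an edge $a\to\mu_s(a)$ means $\mu_s(a)\notin F_{s,a}$, so every agent is paired with a label she does not mark. The snapshots in $\mathcal J$ are distinct by construction, and there are exactly $Q$ of them. There is no real obstacle here. The only point needing care is the pigeonhole count: each pair consumes exactly one snapshot, so the number of pairs must be at most $Q$, and each surviving witness set must have size at least $Q$, which is precisely what the pruning threshold guarantees.
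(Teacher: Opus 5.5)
Your proposal is correct and follows essentially the same route as the paper's proof. Like the paper, you reserve a distinct unused witness for each of the at most $Q$ pairs by the pigeonhole count from \Cref{lem:stable-pool}, apply \Cref{lem:matching} to the edge $h\to k$, and pad to $Q$ snapshots with identity matchings using $|\mathcal P|\ge Q$.
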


\begin{proof}
There are at most $Q$ such pairs.
By \Cref{lem:stable-pool}, each such pair has at least $Q$ witness snapshots in $\mathcal P$.
When we reach a pair $(i,h)$, fewer than $Q$ snapshots have been selected, so an unused witness remains.
In that snapshot, \Cref{lem:matching} gives a perfect matching that gives $h$ a label marked by $i$ but not by $h$.
Thus, we can handle every pair using a different snapshot.

If fewer than $Q$ snapshots are selected, then $|\mathcal P|\ge Q$ by \Cref{lem:stable-pool}, so enough unused snapshots remain to bring the total to $Q$.
We give each added snapshot its identity matching.
Every self-loop belongs to $G_s$, so these matchings also use only edges of $G_s$.
\end{proof}

\paragraph{Different profile classes, Case 2.}
Suppose now that $\mathbf F_i\prec\mathbf F_h$.
Let $\mathcal E$ be the set of profile classes.
Overloading the notation, we write $\mathbf F_E$ for the profile shared by agents in $E$.
Our key tool for this case is the notion of \emph{rank}.
For each class $E$, define its \emph{rank} by
\[
  \ell_E=
  \max\left\{t:
    \begin{array}{l}
      \exists E_0,E_1,\ldots,E_t\in\mathcal E
      \text{ with }E_0=E,\\
      \mathbf F_{E_0}\prec\mathbf F_{E_1}
      \prec\cdots\prec\mathbf F_{E_t}
    \end{array}
  \right\},
\]
and set $\ell_h=\ell_E$ for every $h\in E$.
The following lemma explains the name: if $\mathbf F_E\prec\mathbf F_D$, then $E$ has higher rank than $D$.

\begin{lemma}\label{lem:rank-properties}
For every profile class $E$ and every agent $h\in E$,
\[
  0\le\ell_h\le n-1,
  \qquad
  \sum_{h\in N}\ell_h\le Q.
\]
Moreover, for profile classes $E,D$, if $\mathbf F_E\prec\mathbf F_D$, then $\ell_E\ge\ell_D+1$.
\end{lemma}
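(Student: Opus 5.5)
The plan is to read everything off the structure of $\prec$ as a strict partial order on the finite set $\mathcal E$ of profile classes. First I will check that $\prec$ is irreflexive and transitive on profiles. Both follow directly from the definition through coordinatewise set inclusion $F_{s,\cdot}$ over $s\in\mathcal P$. For transitivity, $\preceq$ is transitive, and a strict inclusion in some coordinate persists along the chain. Distinct classes have distinct profiles by definition, so $\prec$ is a strict partial order on $\mathcal E$. Consequently, every chain $\mathbf F_{E_0}\prec\cdots\prec\mathbf F_{E_t}$ consists of pairwise distinct classes.

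For the bounds, the trivial chain $t=0$ (with $E_0=E$) shows that the set in the definition of $\ell_E$ is nonempty, so $\ell_E\ge0$. Since the $t+1$ classes in a chain are distinct and nonempty, the chain involves at least $t+1$ agents. This gives $t+1\le|\mathcal E|\le n$, hence $\ell_h\le n-1$, and the maximum in the definition is attained. Summing, $\sum_{h\in N}\ell_h\le n(n-1)=Q$.

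For the monotonicity claim, suppose $\mathbf F_E\prec\mathbf F_D$ and take a chain of length $\ell_D$ starting at $D$. Prepending $E$ gives $\mathbf F_E\prec\mathbf F_D\prec\cdots$, which is a valid chain of length $\ell_D+1$ starting at $E$. Therefore $\ell_E\ge\ell_D+1$.

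There is no real obstacle here. The only point that needs care is making sure chains cannot repeat classes, since otherwise $\ell_E$ could be unbounded. That is exactly what irreflexivity plus transitivity of $\prec$ guarantees. (One could even get the sharper bound $\sum_h\ell_h\le\binom n2$, but $Q$ suffices.)
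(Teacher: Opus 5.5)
Your proof is correct and follows the paper's argument: bound the length of a chain by the number of distinct, nonempty profile classes to get $0\le\ell_h\le n-1$ and $\sum_{h\in N}\ell_h\le Q$, and prepend $E$ to a longest chain starting at $D$ to get $\ell_E\ge\ell_D+1$. Your check that $\prec$ is irreflexive and transitive, so that a chain cannot repeat a class, spells out a step the paper simply asserts.
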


\begin{proof}
Every chain in the definition of $\ell_E$ consists of distinct profile classes and therefore contains at most $n$ classes.
Hence, $0\le\ell_h\le n-1$ for every $h$, and summing over the $n$ agents gives $\sum_{h\in N}\ell_h\le n(n-1)=Q$.
If $\mathbf F_E\prec\mathbf F_D$, prepending $E$ to a longest chain starting at $D$ produces a chain of length $\ell_D+1$ starting at $E$, so $\ell_E\ge\ell_D+1$.
\end{proof}

\paragraph{Bringing the two cases together.}
With both tools in place, we now build one partial allocation that handles both cases.
For Case 1, we slightly enlarge the snapshot pieces assigned by the matchings.
We choose each enlargement so that an agent who does not mark the label is willing to receive the piece, while an agent who marks the label still values it substantially less than her own piece.
For Case 2, we cut one piece from $R^*$ for each agent, with its size determined by the agent's rank.
We choose all these pieces to be disjoint and now define their sizes.

Set $L=1/(nQ^2)$ and $\delta=L/(16n(Q+1))$.
For every $s\in\mathcal J$ and every label $k$, we also set
\[
  r_{s,k}
  =
  \max\left\{
    0,\,
    \frac{d_{s,i,k}}{V_i(R^*)}
    \ \text{ where }\
    i\in N,\ V_i(R^*)>0,\ d_{s,i,k}\le\epsilon_i
  \right\},
\]
i.e., the largest fraction of $R^*$ needed to cover any small bonus over label $k$ in snapshot $s$.
Using these values, we ask for two kinds of pieces.
For every such $s$ and $k$, we ask for a piece $A_{s,k}$ of size $r_{s,k}+\delta$, and call it the \emph{addition} to $c_{s,k}$.
For every agent $h$ with $\ell_h>0$, we ask for a piece $P_h$ of size $L(\ell_h-\frac{1}{2})+\delta$, and call it the \emph{rank piece} for $h$.
When $\ell_h=0$, we set $P_h=\varnothing$.
We feed these requests into \Cref{lem:simultaneous-cut}, using $R^*$ and $\delta$.

\begin{lemma}\label{lem:joint-pieces}
The pieces $A_{s,k}$ and $P_h$ are pairwise disjoint subsets of $R^*$ with the following properties:
\begin{enumerate}
\renewcommand{\labelenumi}{(\roman{enumi})}
\item if $d_{s,i,k}\le\epsilon_i$, then $V_i(A_{s,k})\ge d_{s,i,k}$;
\item for every $i\in N$, $s\in\mathcal J$, and $k\in N$, $V_i(A_{s,k})\le\eta_i$;
\item for every $i,h\in N$, $\max\{0,L(\ell_h-1/2)\}V_i(R^*)\le V_i(P_h)\le L\ell_hV_i(R^*)$.
\end{enumerate}
They can be produced using polynomially many Robertson--Webb queries.
We defer the proof to \Cref{app:omitted-proofs-section-four}.
\end{lemma}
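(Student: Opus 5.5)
The plan is to apply \Cref{lem:simultaneous-cut} once, with index set $J$ consisting of all pairs $(s,k)\in\mathcal J\times N$ together with all agents $h$ with $\ell_h>0$. The requested sizes are $\lambda_{s,k}=r_{s,k}+\delta$ and $\lambda_h=L(\ell_h-\tfrac12)+\delta$, and the tolerance is $\delta$. Pairwise disjointness, containment in $R^*$, and the query bound then come directly from that lemma. Since $|J|\le Qn+n$ and $\delta^{-1}=16n(Q+1)nQ^2$ is polynomial in $n$, the query count $O(n^2\delta^{-2}\log(n|J|+1))$ is polynomial. What remains is to check the hypothesis $\sum_J\lambda_j\le1$ and to derive (i)--(iii) from the two-sided guarantee $|V_i(C_j)-\lambda_jV_i(R^*)|\le\delta V_i(R^*)$.

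For the hypothesis, I would bound each $r_{s,k}$. Every term in its maximum satisfies $d_{s,i,k}/V_i(R^*)\le\epsilon_i/V_i(R^*)=1/(4n^2Q^2(Q+1))$. There are $|\mathcal J|n=Qn$ additions, so they contribute at most
\[
Qn\left(\frac{1}{4n^2Q^2(Q+1)}+\delta\right),
\]
which is tiny. By \Cref{lem:rank-properties}, the rank pieces contribute at most $L\sum_h\ell_h+n\delta\le LQ+n\delta=1/(nQ)+n\delta$. Both totals are far below $1/2$ for $n\ge4$, so the sum is at most $1$.

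Property (i): if $d_{s,i,k}\le\epsilon_i$ and $V_i(R^*)>0$, then $r_{s,k}\ge d_{s,i,k}/V_i(R^*)$, so
\[
V_i(A_{s,k})\ge(r_{s,k}+\delta-\delta)V_i(R^*)\ge d_{s,i,k}.
\]
If $V_i(R^*)=0$, then $\epsilon_i=0$, so $d_{s,i,k}=0$ and the claim is trivial.

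Property (ii): $V_i(A_{s,k})\le(r_{s,k}+2\delta)V_i(R^*)$, and I need this to be at most $\eta_i=V_i(R^*)/(4n(Q+1)Q^2)$. We have $r_{s,k}\le\epsilon_{i'}/V_{i'}(R^*)=\eta_i/(nV_i(R^*))$, where $i'$ is the agent attaining the maximum; this ratio is the same constant for every agent, which is the key point. We also have $2\delta=2/(16n^2Q^2(Q+1))$, which is a small fraction of $\eta_i/V_i(R^*)$. Together these give $r_{s,k}+2\delta\le\eta_i/V_i(R^*)$ with room to spare.

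Property (iii): the lower bound $V_i(P_h)\ge L(\ell_h-\tfrac12)V_i(R^*)$ follows from the guarantee, and it is trivially true when $P_h=\varnothing$ because the left side is $0$. The upper bound $V_i(P_h)\le(L(\ell_h-\tfrac12)+2\delta)V_i(R^*)\le L\ell_hV_i(R^*)$ holds since $2\delta\le L/2$.

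The only delicate step is (ii), because $r_{s,k}$ is defined as a maximum over agents possibly different from $i$. The normalization $\epsilon_{i'}/V_{i'}(R^*)$ being agent-independent resolves it, so I expect no real obstacle beyond the constant bookkeeping.
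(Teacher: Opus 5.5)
Your proposal is correct and follows the paper's proof essentially step for step: one application of \Cref{lem:simultaneous-cut} with the same requested sizes, the agent-independent bound $r_{s,k}\le L/[4n(Q+1)]$ (that is, $\epsilon_i/V_i(R^*)$) to verify $\sum_j\lambda_j\le1$ and (ii), and the two-sided $\delta$-guarantee to read off (i)--(iii). The only cosmetic point is that in (ii) you divide by $V_i(R^*)$; the paper writes the bound as $(r_{s,k}+2\delta)V_i(R^*)\le\frac{3L}{8n(Q+1)}V_i(R^*)\le\eta_i$, which covers the case $V_i(R^*)=0$ without a separate remark.
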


(iii) already gives us the bounds on the rank pieces needed for Case 2.
We now verify what the additions do, first within each snapshot.
If agent $i$ does not mark label $k$, attaching $A_{s,k}$ to $c_{s,k}$ makes the enlarged piece worth at least as much as her own snapshot piece.
If she marks label $k$, the enlarged piece remains worth substantially less than her own snapshot piece.

\begin{lemma}\label{lem:prepared-piece-bounds}
For every $s\in\mathcal J$ and $i,k\in N$,
\[
  \begin{cases}
  b_{s,i}\le V_i(c_{s,k}\cup A_{s,k})
    \le b_{s,i}+\eta_i,
    &k\notin F_{s,i},\\
  V_i(c_{s,k}\cup A_{s,k})
    \le b_{s,i}-\Gamma_i+\eta_i,
    &k\in F_{s,i}.
  \end{cases}
\]
\end{lemma}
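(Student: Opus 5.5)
The plan is to use additivity: since $A_{s,k}$ is disjoint from $c_{s,k}$, we have $V_i(c_{s,k}\cup A_{s,k})=V_i(c_{s,k})+V_i(A_{s,k})=b_{s,i}-d_{s,i,k}+V_i(A_{s,k})$. Disjointness holds because $A_{s,k}\subseteq R^*$, while snapshot pieces are disjoint from $R^*$ by \Cref{lem:preparation} (i). After this rewriting, each inequality becomes a bound on $V_i(A_{s,k})$ combined with the small/large dichotomy for $d_{s,i,k}$. That dichotomy comes from \Cref{lem:preparation} (ii) together with the definition $F_{s,i}=\{k:d_{s,i,k}>\epsilon_i\}$.

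\emph{Case $k\notin F_{s,i}$.} Here $d_{s,i,k}\le\epsilon_i$. By \Cref{lem:joint-pieces} (i), $V_i(A_{s,k})\ge d_{s,i,k}$, which gives the lower bound $V_i(c_{s,k}\cup A_{s,k})\ge b_{s,i}$. For the upper bound, $d_{s,i,k}\ge0$ because the snapshot is envy-free, and $V_i(A_{s,k})\le\eta_i$ by \Cref{lem:joint-pieces} (ii). Together these give $V_i(c_{s,k}\cup A_{s,k})\le b_{s,i}+\eta_i$.

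\emph{Case $k\in F_{s,i}$.} Here $d_{s,i,k}>\epsilon_i$, so by \Cref{lem:preparation} (ii) the bonus is large, that is, $d_{s,i,k}\ge\Gamma_i$. Combining this with \Cref{lem:joint-pieces} (ii) gives $V_i(c_{s,k}\cup A_{s,k})\le b_{s,i}-\Gamma_i+\eta_i$.

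There is one degenerate situation to check: $V_i(R^*)=0$. Then $\epsilon_i=\Gamma_i=\eta_i=0$, so marking means $d_{s,i,k}>0$ and not marking means $d_{s,i,k}=0$. In this situation \Cref{lem:joint-pieces} (i) still holds trivially, since we only need $V_i(A_{s,k})\ge0$. Also $V_i(A_{s,k})\le V_i(R^*)=0$, so all the bounds above go through unchanged. Once \Cref{lem:joint-pieces} is available the argument is direct, and the only place needing care is this case, where $r_{s,k}$ excludes agents with $V_i(R^*)=0$.
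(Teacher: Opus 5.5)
Your proposal is correct and follows essentially the same argument as the paper. You write $V_i(c_{s,k}\cup A_{s,k})=b_{s,i}-d_{s,i,k}+V_i(A_{s,k})$, bound the addition via \Cref{lem:joint-pieces} (i)--(ii), use envy-freeness of the snapshot for $d_{s,i,k}\ge0$ when $k\notin F_{s,i}$, and use the small/large dichotomy of \Cref{lem:preparation} (ii) to get $d_{s,i,k}\ge\Gamma_i$ when $k\in F_{s,i}$; your explicit checks of disjointness and of the case $V_i(R^*)=0$ are details the paper leaves implicit (the latter is absorbed into the proof of \Cref{lem:joint-pieces}).
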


\begin{proof}
If $k\notin F_{s,i}$, then $d_{s,i,k}\le\epsilon_i$.
By \Cref{lem:joint-pieces},
\[
  d_{s,i,k}\le V_i(A_{s,k})\le\eta_i.
\]
Since $V_i(c_{s,k})=b_{s,i}-d_{s,i,k}$, the lower bound follows.
Because snapshot $s$ is envy-free, $V_i(c_{s,k})\le b_{s,i}$, so the upper bound follows as well.

If $k\in F_{s,i}$, then $d_{s,i,k}>\epsilon_i$.
Since \Cref{lem:preparation} rules out bonuses strictly between $\epsilon_i$ and $\Gamma_i$, we have $d_{s,i,k}\ge\Gamma_i$.
Together with $V_i(A_{s,k})\le\eta_i$, this gives
\[
  V_i(c_{s,k}\cup A_{s,k})
  \le b_{s,i}-\Gamma_i+\eta_i. \qedhere
\]
\end{proof}

To combine these bounds across all snapshots in $\mathcal J$, we group together the pieces assigned to each agent by the matchings.
For every agent $h$, let $D_h$ be the union of the snapshot pieces assigned to $h$, and let $A_h$ be the union of their additions.
For every agent $i$, let $B_i$ be her total value for her own snapshot pieces:
\[
  D_h=\bigcup_{s\in\mathcal J}c_{s,\mu_s(h)},
  \qquad
  A_h=\bigcup_{s\in\mathcal J}A_{s,\mu_s(h)},
  \qquad
  B_i=\sum_{s\in\mathcal J}b_{s,i}.
\]

Because every snapshot is envy-free, the two preceding lemmas give the following bounds.

\begin{lemma}\label{lem:anchors}
For every $i,h\in N$, we have $V_i(D_h)\le B_i$ and $V_i(A_h)\le Q\eta_i$.
Moreover:
\begin{enumerate}
\renewcommand{\labelenumi}{(\roman{enumi})}
\item if $\mathbf F_i=\mathbf F_h$, then $V_i(D_h\cup A_h)\ge B_i$;
\item if $\mathbf F_i\npreceq\mathbf F_h$, then $V_i(D_h\cup A_h)\le B_i-\Gamma_i+Q\eta_i$.
\end{enumerate}
\end{lemma}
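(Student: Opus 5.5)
The plan is to prove every bound one snapshot at a time and then sum over the $Q$ snapshots in $\mathcal J$. Because the selected snapshots are distinct, the pieces $c_{s,\mu_s(h)}$ for different $s$ are disjoint, and so are the additions $A_{s,\mu_s(h)}$. Additivity therefore gives
\[
  V_i(D_h)=\sum_{s\in\mathcal J}V_i(c_{s,\mu_s(h)}),
  \qquad
  V_i(A_h)=\sum_{s\in\mathcal J}V_i(A_{s,\mu_s(h)}).
\]
The same decomposition applies to $V_i(D_h\cup A_h)$, since every addition is a subset of $R^*$ and is thus disjoint from every snapshot piece by \Cref{lem:preparation}~(i).

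For the first two bounds, each snapshot is envy-free, so $V_i(c_{s,\mu_s(h)})\le b_{s,i}$. Summing over $s\in\mathcal J$ yields $V_i(D_h)\le B_i$. By \Cref{lem:joint-pieces}~(ii), $V_i(A_{s,\mu_s(h)})\le\eta_i$, and summing over $|\mathcal J|=Q$ snapshots yields $V_i(A_h)\le Q\eta_i$.

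For (i), suppose $\mathbf F_i=\mathbf F_h$, so that $F_{s,i}=F_{s,h}$ for every $s\in\mathcal J\subseteq\mathcal P$. \Cref{lem:matching-selection} says that $\mu_s$ uses only edges of $G_s$, so $\mu_s(h)\notin F_{s,h}=F_{s,i}$. The first case of \Cref{lem:prepared-piece-bounds} then gives $V_i(c_{s,\mu_s(h)}\cup A_{s,\mu_s(h)})\ge b_{s,i}$, and summing over $\mathcal J$ gives $V_i(D_h\cup A_h)\ge B_i$.

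For (ii), \Cref{lem:matching-selection} provides one snapshot $s_0\in\mathcal J$ with $\mu_{s_0}(h)\in F_{s_0,i}$. For $s_0$, the second case of \Cref{lem:prepared-piece-bounds} bounds the enlarged piece by $b_{s_0,i}-\Gamma_i+\eta_i$. For every other $s\in\mathcal J$, both cases of \Cref{lem:prepared-piece-bounds} give the weaker bound $b_{s,i}+\eta_i$ (in the marked case $-\Gamma_i+\eta_i\le\eta_i$). Summing the $Q$ terms gives
\[
  V_i(D_h\cup A_h)\le B_i-\Gamma_i+Q\eta_i.
\]

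The only step needing care is this bookkeeping: the extra $+\eta_i$ from each of the $Q-1$ unmarked snapshots must be absorbed, which is exactly why the bound carries $Q\eta_i$ rather than $\eta_i$. Apart from that, the proof is a direct summation, and I expect no real obstacle.
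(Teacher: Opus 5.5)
Your proposal is correct and follows essentially the same argument as the paper: per-snapshot bounds from envy-freeness, \Cref{lem:joint-pieces}, and \Cref{lem:prepared-piece-bounds}, summed over the $Q$ snapshots in $\mathcal J$, with \Cref{lem:matching-selection} supplying the one marked snapshot needed for (ii). You also state explicitly the disjointness that justifies the additive lower bound in (i), which the paper leaves implicit.
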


\begin{proof}
Because every snapshot in $\mathcal J$ is envy-free, agent $i$ values the snapshot piece assigned to $h$ no more than her own piece in that snapshot.
Summing over the snapshots gives $V_i(D_h)\le B_i$.
By definition, $A_h$ contains one addition from each of the $Q$ snapshots in $\mathcal J$.
Each addition is worth at most $\eta_i$ to agent $i$, so $V_i(A_h)\le Q\eta_i$.

For (i), suppose $\mathbf F_i=\mathbf F_h$.
For every $s\in\mathcal J$, the matching uses the edge $h\to\mu_s(h)$, so $\mu_s(h)\notin F_{s,h}=F_{s,i}$.
By \Cref{lem:prepared-piece-bounds},
\[
  V_i\bigl(c_{s,\mu_s(h)}\cup A_{s,\mu_s(h)}\bigr)
  \ge b_{s,i}.
\]
Summing over all $s\in\mathcal J$ gives $V_i(D_h\cup A_h)\ge B_i$.

For (ii), suppose $\mathbf F_i\npreceq\mathbf F_h$.
By \Cref{lem:matching-selection}, some snapshot $s\in\mathcal J$ satisfies $k=\mu_s(h)\in F_{s,i}$.
For this snapshot, \Cref{lem:prepared-piece-bounds} gives
\[
  V_i(c_{s,k}\cup A_{s,k})
  \le b_{s,i}-\Gamma_i+\eta_i.
\]
For every $t\in\mathcal J$ other than $s$, envy-freeness of snapshot $t$ and the bound $V_i(A_{t,\mu_t(h)})\le\eta_i$ give
\[
  V_i(c_{t,\mu_t(h)}\cup A_{t,\mu_t(h)})
  \le b_{t,i}+\eta_i.
\]
Summing these bounds gives
\[
  V_i(D_h\cup A_h)
  \le B_i-\Gamma_i+\eta_i+(Q-1)\eta_i
  =B_i-\Gamma_i+Q\eta_i. \qedhere
\]
\end{proof}

At this point, the additions make a piece good enough for an agent who does not mark its label, while keeping it substantially worse for any agent who does mark its label.
The rank pieces provide more cake for classes of higher rank than for classes of lower rank.

\paragraph{Agents in the same profile class.}
Agents in the same profile class may still envy one another.
For every agent $h$, we concatenate $P_h$, $A_h$, and $D_h$, in this order, to form
\[
  W_h=[P_h,A_h,D_h].
\]
We now fix a profile class $E$ and use \SubCore to redistribute the cake prepared for its agents.
For every agent $i\in E$, set her required minimum to
\[
  q_i=B_i+\max\{0,L(\ell_E-1/2)\}V_i(R^*),
\]
i.e., the first term is agent $i$'s value for her own pieces in the snapshots from $\mathcal J$, and the second is a lower bound on her value for every $P_h$ with $h\in E$.

\begin{lemma}\label{lem:class-settlement}
\begin{enumerate}
\renewcommand{\labelenumi}{(\roman{enumi})}
\item The pieces $(W_h)_{h\in N}$ are pairwise disjoint.
\item For every $i,h\in E$, we have $V_i(W_h)\ge q_i$.
\item If we call \SubCore on the agents in $E$, the pieces $(W_h)_{h\in E}$, and one empty piece, using $q_i$ as agent $i$'s required minimum, then it produces an envy-free allocation. Every agent receives a suffix of a distinct $W_h$ worth at least her required minimum. The empty piece remains unallocated, and a suffix is allocated from every $W_h$.
\item For every $h\in E$, the suffix allocated from $W_h$ contains all of $D_h$. Consequently, the part of $W_h$ left unallocated is a prefix contained in $P_h\cup A_h$.
\end{enumerate}
\end{lemma}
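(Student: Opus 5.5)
The four parts will be proved in order; the main work is in (ii) and (iv), and the rest is bookkeeping.

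\emph{Part (i).} By \Cref{lem:joint-pieces}, the rank pieces $P_h$ and all additions $A_{s,k}$ are pairwise disjoint subsets of $R^*$. Each $\mu_s$ is a perfect matching, so for fixed $s$ the labels $\mu_s(h)$ are distinct across $h$. Hence the sets $A_h$ are pairwise disjoint, and so are the sets $D_h$. The snapshot pieces lie in cake allocated by the pre-processing \Core calls, which is disjoint from $R^*$ by \Cref{lem:preparation}~(i). Distinct snapshots use disjoint cake, and within a snapshot distinct labels are disjoint. Together these facts show that the $W_h$ are pairwise disjoint.

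\emph{Part (ii).} Fix $i,h\in E$. By \Cref{lem:anchors}~(i), since $\mathbf F_i=\mathbf F_h$, we get $V_i(D_h\cup A_h)\ge B_i$. Since $\ell_h=\ell_E$, \Cref{lem:joint-pieces}~(iii) gives $V_i(P_h)\ge\max\{0,L(\ell_E-1/2)\}V_i(R^*)$. We also need the case $\ell_h=0$, where $P_h=\varnothing$; there the bound reads $0\ge 0$. Adding the two bounds over the disjoint parts gives $V_i(W_h)\ge q_i$.

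\emph{Part (iii).} With (i) and (ii) in hand, \Cref{prop:subcore-required-minima} applies directly. It yields an envy-free allocation of suffixes of distinct $W_h$, each worth at least the agent's required minimum, with the empty piece unused. There are $m=|E|$ agents and the empty piece is unused. Since agents receive distinct pieces, they must use all $m$ pieces $W_h$, so every $W_h$ contributes a suffix.

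\emph{Part (iv).} This is the main obstacle, and I would argue it by contradiction. Suppose agent $i$ receives a suffix $Z$ of $W_h$ that omits part of $D_h$. Since $D_h$ is last in the order, $Z$ is then a proper suffix of $D_h$. So $Z\subsetneq D_h$, and we would have $V_i(Z)\le V_i(D_h)\le B_i$ by \Cref{lem:anchors}.

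The worry is that this bound only gives $V_i(Z)\le B_i\le q_i$, which is not yet a contradiction, because equality could hold. The cleanest route is to use the precise trimming behavior of \SubCore from the appendix. A piece is trimmed only at the point where the trimmer's value drops to a threshold. The convention that ties go to $W_h$ over the empty piece, together with the requirement $\ge q_i$, should force any trim point to lie inside $P_h\cup A_h$ whenever $V_i(P_h\cup A_h)$ is needed to reach $q_i$.

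If equality is possible, I would instead argue at the level of cake. A trim that cuts strictly into $D_h$ is dominated by keeping all of $D_h$, since the prefix before $D_h$ is worth at least $0$. In that case I would fix the convention that \SubCore trims at the leftmost point achieving the required value. I expect this to need the appendix description of \SubCore's trimming rule (trims are made to the maximal suffix the agent is indifferent to). From there, the prefix removed lies entirely in $[P_h,A_h]$, and the unallocated part is a prefix contained in $P_h\cup A_h$.
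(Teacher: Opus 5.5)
Your parts (i)--(iii) are correct and follow the paper's argument. The gap is in (iv), which remains a hedged sketch, and it cannot be completed in the framing you chose.

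You set up the contradiction around the agent $i$ who \emph{receives} the suffix $Z$. Combining $V_i(Z)\le V_i(D_h)\le B_i\le q_i$ with the guarantee $V_i(Z)\ge q_i$ only shows that $i$ values $D_h\setminus Z$ at zero, which is not a contradiction. More to the point, $Z$ need not begin at $i$'s own trim. In \SubCore, several agents place trims on a contested piece, including inside the while loop and the recursive calls. So no property of the receiving agent alone controls where the cut falls.

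Your remark that a trim into $D_h$ ``is dominated by keeping all of $D_h$'' also misdescribes the mechanism. A trim produces a suffix of a prescribed exact value, not a value-maximizing one. In the equality case, the tie-breaking rule does the work, not a domination argument.

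The missing step is to bound every trim, whichever agent makes it. Suppose $j\in E$ is asked to trim $W_h$, or a suffix of it still containing $D_h$, to value $q$. Then $q$ is $j$'s current benchmark, which starts at $q_j$ and only increases. Hence $q\ge q_j\ge B_j\ge V_j(D_h)$ by \Cref{lem:anchors}.

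If $V_j(D_h)<q$, every suffix beginning at or after the boundary before $D_h$ is worth less than $q$ to $j$. So the trim lies strictly before that boundary.

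If $V_j(D_h)=q$, that boundary itself gives a suffix of value $q$. The rule of \Cref{app:deterministic-choices} takes the first component boundary in the order that gives the requested value, so it places the trim there or earlier. This rule is already part of the protocol, and your ``leftmost point'' convention would serve equally well.

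Induction over the sequence of trims then shows that every trimmed piece, and hence every allocated suffix, contains $D_h$. Therefore the unallocated prefix lies in $P_h\cup A_h$.
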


\begin{proof}
For (i), \Cref{lem:preparation} shows that the snapshot allocations are pairwise disjoint and lie outside $R^*$.
For every $s\in\mathcal J$, the perfect matching assigns each label exactly once.
Thus, each $c_{s,k}$ belongs to exactly one $D_h$, and the corresponding $A_{s,k}$ belongs to exactly one $A_h$.
By \Cref{lem:joint-pieces}, all additions $A_{s,k}$ and rank pieces $P_h$ are pairwise disjoint subsets of $R^*$.
Hence, the pieces $W_h$ are pairwise disjoint.

For (ii), fix agents $i,h\in E$.
Since they have the same profile, \Cref{lem:anchors} gives $V_i(D_h\cup A_h)\ge B_i$.
They also have the same rank, so $\ell_h=\ell_E$, and \Cref{lem:joint-pieces} gives $V_i(P_h)\ge\max\{0,L(\ell_E-1/2)\}V_i(R^*)$.
Therefore,
\[
\begin{aligned}
  V_i(W_h)
  &=V_i(P_h)+V_i(A_h\cup D_h)\\
  &\ge B_i+\max\{0,L(\ell_E-1/2)\}V_i(R^*)\\
  &=q_i.
\end{aligned}
\]

For (iii), (ii) shows that every agent $i\in E$ values every $W_h$ at least $q_i$, so we may apply \Cref{prop:subcore-required-minima}.
Therefore, \SubCore gives every agent a suffix of a distinct $W_h$ worth at least her required minimum and leaves the empty piece unallocated.
There are $|E|$ agents and $|E|$ pieces $W_h$, so a suffix is allocated from every $W_h$.

For (iv), consider any trim.
Suppose \SubCore asks agent $i$ to trim $W_h$, or one of its suffixes, to value $q$.
Agent $i$'s required minimum starts at $q_i$ and only increases, so $q\ge q_i$.
By \Cref{lem:anchors},
\[
  V_i(D_h)\le B_i\le q_i\le q.
\]
If $V_i(D_h)<q$, then a suffix beginning at or after the boundary before $D_h$ is worth less than $q$, so the trim occurs before that boundary.
If $V_i(D_h)=q$, that boundary gives a suffix worth $q$, and the rule in \Cref{app:deterministic-choices} chooses it or an earlier boundary.
Thus, every trim occurs no later than the boundary before $D_h$, as shown in \Cref{fig:suffix-schematic}.
Every allocated suffix therefore contains $D_h$, and the part left unallocated is a prefix contained in $P_h\cup A_h$.
\end{proof}

\begin{figure}[t]
  \centering
  \begin{minipage}{.68\linewidth}
  \centering
  \begin{tikzpicture}[
    x=.087\linewidth,
    y=.46cm,
    font=\small,
    leader/.style={-{Stealth[length=1.4mm]},semithick}
  ]
  \path[use as bounding box] (-.50,-1.55) rectangle (10.75,1.82);

  \fill[black!10] (0,0) rectangle (4,.70);
  \fill[black!10] (7,0) rectangle (10,.70);
  \draw[thick] (0,0) rectangle (10,.70);
  \draw[thick] (4,0) -- (4,.70);
  \draw[thick] (7,0) -- (7,.70);

  \node at (2,.35) {$P_h$};
  \node at (5.5,.35) {$A_h$};
  \node at (8.5,.35) {$D_h$};

  \draw[very thick] (4.9,-.02) -- (4.9,.70);

  \node[anchor=base east,inner sep=0pt]
    at (4.05,1.28) {trim point $x$};
  \draw[leader] (4.15,1.38) -- (4.9,1);

  \node[anchor=base west,inner sep=0pt]
    at (7.85,1.28) {boundary before $D_h$};
  \draw[leader] (7.75,1.38) -- (7,1);

  \draw[semithick,decorate,decoration={brace,mirror,amplitude=2.5pt}]
    (0.1,-.1) -- (4.85,-.1);
  \node[anchor=north,inner sep=1pt] at (2.45,-.47) {(unallocated) prefix};

  \draw[semithick,decorate,decoration={brace,mirror,amplitude=2.5pt}]
    (4.95,-.1) -- (9.9,-.1);
  \node[anchor=north,align=center,inner sep=1pt] at (7.45,-.47)
    {allocated suffix\\[-1pt]contains all of $D_h$};
  \end{tikzpicture}
  \caption{Every trim occurs no later than the boundary before $D_h$, so the allocated suffix contains all of $D_h$.}
  \label{fig:suffix-schematic}
  \end{minipage}
\end{figure}

\paragraph{The partial allocation.}
We have gathered everything we need.
For agents in different profile classes, \Crefrange{lem:matching-selection}{lem:anchors} give us the additional cake needed to ensure that each agent values her own share more than the share of anyone in another class.
For agents in the same profile class, \Cref{lem:class-settlement} gives us a way to redistribute their cake so that they no longer envy one another.

We apply \Cref{lem:class-settlement} to every profile class and combine the resulting allocations into $Z$.
By \Cref{lem:class-settlement}, $Z$ is envy-free within each profile class and contains all snapshot pieces from $\mathcal J$.
We use the identity assignment for every other snapshot and finally define
\[
  X=H\oplus\operatorname{Id}(\mathcal S\backslash\mathcal J)\oplus Z,
\]
the partial allocation we seek and on which we build the recursion.
(Recall that $H$ comes from \Cref{lem:preparation}.)
We now prove that $X$ is well defined and envy-free.

\begin{lemma}\label{lem:cake-accounting}
For every agent $h$, let $Y_h$ be the prefix of $W_h$ left unallocated by the \SubCore call for $h$'s profile class.
The allocation $X$ is well defined: no cake is allocated twice, and its residue is
\[
  R_1=
  \left(
    R^*\backslash\bigcup_{h\in N}(P_h\cup A_h)
  \right)
  \cup\bigcup_{h\in N}Y_h
  \subseteq R^*.
\]
\end{lemma}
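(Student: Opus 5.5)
The plan is a bookkeeping argument that partitions $C$ and tracks where every piece of cake ends up. Start from \Cref{lem:preparation} (i): the pieces of $H$, the snapshot pieces of every $s\in\mathcal S$, and $R^*$ partition $C$. Then refine this partition. The snapshots in $\mathcal S\backslash\mathcal J$ go to $\operatorname{Id}(\mathcal S\backslash\mathcal J)$ unchanged. Each snapshot $s\in\mathcal J$ is a permutation of the same snapshot pieces via $\mu_s$, so each $c_{s,k}$ lies in exactly one $D_h$. Finally, $R^*$ splits into the disjoint pieces $A_{s,k}$ and $P_h$ from \Cref{lem:joint-pieces}, together with the leftover $R^*\backslash\bigcup_h(P_h\cup A_h)$. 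By \Cref{lem:class-settlement} (i), the sets $W_h=[P_h,A_h,D_h]$ are pairwise disjoint. Together with $H$, the other snapshots, and this leftover, they therefore form a partition of $C$.

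Next, describe $Z$ on each block. For each profile class $E$, the \SubCore call only allocates suffixes of the $W_h$ with $h\in E$ (the empty piece is unused). By \Cref{lem:class-settlement} (iii), each agent of $E$ receives a suffix of a distinct $W_h$, and a suffix is taken from every $W_h$. So within $W_h$, the allocated part and $Y_h$ partition $W_h$. Distinct classes use disjoint families of $W_h$, since each $h$ belongs to exactly one class. Hence $Z$ never allocates any cake twice, and $Z$ lies inside $\bigcup_h W_h$.

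Combining these facts, $X=H\oplus\operatorname{Id}(\mathcal S\backslash\mathcal J)\oplus Z$ draws on disjoint blocks of the partition, so no cake is allocated twice and $X$ is well defined. Its residue is $C$ minus the allocated cake, which is the leftover part of $R^*$ together with $\bigcup_h Y_h$.

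To show $R_1\subseteq R^*$, invoke \Cref{lem:class-settlement} (iv): each $Y_h$ is a prefix contained in $P_h\cup A_h\subseteq R^*$. In particular, the pieces $D_h$, which lie outside $R^*$, are always fully allocated. This is exactly why the residue does not grow beyond $R^*$.

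The main point needing care is the claim that each $W_h$ is exactly split into its allocated suffix and $Y_h$. This depends on the convention that unused prefixes return to the residue, and on the requirement that every $W_h$ is used by some agent. Both come from (iii), which relies on the count of $|E|$ agents against $|E|$ nonempty pieces, with the empty piece untouched.
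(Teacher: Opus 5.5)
Your proof is correct and follows essentially the same route as the paper: the $W_h$ are disjoint by \Cref{lem:class-settlement}~(i), one suffix is taken from every $W_h$ by (iii), $Y_h\subseteq P_h\cup A_h$ by (iv), and the additions and rank pieces are disjoint subsets of $R^*$ by \Cref{lem:joint-pieces}. You also use \Cref{lem:preparation}~(i) explicitly to show that $H$ and $\operatorname{Id}(\mathcal S\backslash\mathcal J)$ are disjoint from the $W_h$; the paper leaves this implicit here and relies on it later in \Cref{lem:cross-class}.
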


\begin{proof}
By \Cref{lem:class-settlement} (i), the pieces $W_h$ are pairwise disjoint, so the calls for different profile classes use disjoint cake.
By \Cref{lem:class-settlement} (iii), the call for a profile class $E$ assigns a suffix from every $W_h$ with $h\in E$.
Since the profile classes partition $N$, one suffix is assigned from every $W_h$.
By \Cref{lem:class-settlement} (iv), this suffix contains all of $D_h$, so its unused prefix $Y_h$ lies in $P_h\cup A_h$.

By \Cref{lem:joint-pieces}, the sets $P_h\cup A_h$, over all $h\in N$, are pairwise disjoint subsets of $R^*$.
Removing these sets from $R^*$ and returning the unused prefixes gives
\[
  R_1=
  \left(
    R^*\backslash\bigcup_{h\in N}(P_h\cup A_h)
  \right)
  \cup\bigcup_{h\in N}Y_h.
\]
Since $Y_h\subseteq P_h\cup A_h$ for every $h$, we also have $R_1\subseteq R^*$.
\end{proof}

It remains to prove that $X$ is envy-free.

\begin{lemma}\label{lem:cross-class}
Assume every graph $G_s$ with $s\in\mathcal P$ is strongly connected.
Then $X$ is envy-free and, for every pair $(i,j)$ in different profile classes,
\[
  \Delta_{ij}(X)\ge\eta_i
  =\frac{V_i(R^*)}{4n(Q+1)Q^2}.
\]
\end{lemma}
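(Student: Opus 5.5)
The plan is to split $X=H\oplus\operatorname{Id}(\mathcal S\backslash\mathcal J)\oplus Z$ and handle the three parts separately. By \Cref{lem:preparation} (i), $H$ and every snapshot are envy-free. The pieces are disjoint by \Cref{lem:cake-accounting}. By additivity, for every pair $(i,j)$,
\[
  \Delta_{ij}(X)=\Delta_{ij}(H)+\Delta_{ij}(\operatorname{Id}(\mathcal S\backslash\mathcal J))+\Delta_{ij}(Z)\ge\Delta_{ij}(Z),
\]
so everything reduces to bounding $\Delta_{ij}(Z)$. If $i,j$ lie in the same profile class, $\Delta_{ij}(Z)\ge0$ directly from \Cref{lem:class-settlement} (iii). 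So it suffices to show $\Delta_{ij}(Z)\ge\eta_i$ whenever $i$ and $j$ lie in different classes. When $V_i(R^*)=0$ this reads $\Delta_{ij}(Z)\ge0$, which the same computation gives since all the $R^*$-terms vanish.

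Fix $i\in E$ and $j\in E'$ with $E\ne E'$. For the lower bound on $i$'s share, $Z_i$ is a suffix of some $W_{h'}$ with $h'\in E$, so \Cref{lem:class-settlement} (iii) gives $V_i(Z_i)\ge q_i=B_i+\max\{0,L(\ell_E-1/2)\}V_i(R^*)$. For the upper bound on $j$'s share, the key observation is that $Z_j$ is a suffix of $W_h$ for some $h\in E'$, not necessarily $W_j$. Since $h$ and $j$ share a profile and a rank, every bound for $h$ transfers to $j$. Hence
\[
  V_i(Z_j)\le V_i(P_h)+V_i(A_h\cup D_h)\le L\ell_{E'}V_i(R^*)+V_i(A_h\cup D_h),
\]
using \Cref{lem:joint-pieces} (iii). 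Because $\mathbf F_E\ne\mathbf F_{E'}$, exactly one of two cases holds: $\mathbf F_i\npreceq\mathbf F_h$, or $\mathbf F_i\prec\mathbf F_h$.

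In Case 1 I will use \Cref{lem:anchors} (ii): $V_i(A_h\cup D_h)\le B_i-\Gamma_i+Q\eta_i$. I will drop the rank term of $q_i$ and bound $\ell_{E'}\le n-1$ via \Cref{lem:rank-properties}. Then
\[
  \Delta_{ij}(Z)\ge\Gamma_i-Q\eta_i-L(n-1)V_i(R^*).
\]
With $L=1/(nQ^2)$, the last term is $\frac{n-1}{n}\Gamma_i$. Also $\Gamma_i=4n(Q+1)\eta_i$. So the bound becomes $\Gamma_i/n-Q\eta_i=(4(Q+1)-Q)\eta_i\ge\eta_i$.

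In Case 2, \Cref{lem:rank-properties} gives $\ell_E\ge\ell_{E'}+1\ge1$. I will use only $V_i(A_h\cup D_h)\le B_i+Q\eta_i$, from the first part of \Cref{lem:anchors}. This yields
\[
  \Delta_{ij}(Z)\ge L\bigl(\ell_E-\tfrac12-\ell_{E'}\bigr)V_i(R^*)-Q\eta_i\ge\tfrac{L}{2}V_i(R^*)-Q\eta_i.
\]
Here $\frac{L}{2}V_i(R^*)=\Gamma_i/(2n)=2(Q+1)\eta_i$, so $\Delta_{ij}(Z)\ge(Q+2)\eta_i\ge\eta_i$.

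The main obstacle I anticipate is bookkeeping rather than any single inequality. First, one must track that $Z_j$ comes from $W_h$ rather than $W_j$, and justify replacing $j$ by $h$ in \Cref{lem:anchors} and in the rank bounds. Second, one must confirm that the constants $L$, $\delta$, $\eta_i$ leave enough slack in Case 1. There the whole rank piece of a lower-rank class, up to $L(n-1)V_i(R^*)$, must be absorbed by the gap $\Gamma_i$ produced by the marked label. This is exactly why $L$ was chosen as $1/(nQ^2)$.
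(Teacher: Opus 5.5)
Your proposal is correct and follows essentially the same argument as the paper: reduce to $\Delta_{ij}(Z)$ using additivity and the envy-freeness of $H$ and of the snapshots in $\mathcal S\backslash\mathcal J$, note that $Z_j$ is a suffix of some $W_h$ with $h$ in $j$'s class, and split into the case $\mathbf F_i\prec\mathbf F_h$ (the rank pieces give $(Q+2)\eta_i$) and the case $\mathbf F_i\npreceq\mathbf F_h$ (\Cref{lem:anchors}~(ii), with the rank piece absorbed by $\Gamma_i$, gives $(3Q+4)\eta_i$). Your explicit remark on the degenerate case $V_i(R^*)=0$ is a harmless addition that the paper leaves implicit.
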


\begin{proof}
Fix agents $i,j$ in different profile classes.
By \Cref{lem:class-settlement} (iii), agent $j$ receives a suffix of $W_h$ for some $h$ in her profile class.
Hence $\mathbf F_h=\mathbf F_j$ and $\ell_h=\ell_j$.
By definition, $LV_i(R^*)=4(Q+1)\eta_i$ and $\Gamma_i=4n(Q+1)\eta_i$.

Since $i$ and $j$ belong to different profile classes, either $\mathbf F_i\prec\mathbf F_j$ or $\mathbf F_i\npreceq\mathbf F_j$.

Suppose first $\mathbf F_i\prec\mathbf F_j$.
By \Cref{lem:rank-properties}, $\ell_i\ge\ell_j+1$.
By the definition of $q_i$ and \Cref{lem:class-settlement} (iii),
\[
  V_i(Z_i)
  \ge B_i+L(\ell_i-\frac{1}{2})V_i(R^*).
\]
Because every snapshot is envy-free, $V_i(D_h)\le B_i$.
By \Cref{lem:anchors,lem:joint-pieces}, we also have
\[
  V_i(A_h)\le Q\eta_i,
  \qquad
  V_i(P_h)\le L\ell_jV_i(R^*).
\]
Because $Z_j\subseteq W_h$, we further have
\[
  V_i(Z_j)\le B_i+Q\eta_i+L\ell_jV_i(R^*).
\]
Together,
\[
\begin{aligned}
  \Delta_{ij}(Z)
  &\ge L\left(\ell_i-\ell_j-\frac{1}{2}\right)V_i(R^*)-Q\eta_i\ge\frac L2V_i(R^*)-Q\eta_i\\
  &=\bigl[2(Q+1)-Q\bigr]\eta_i =(Q+2)\eta_i\\
  &\ge\eta_i.
\end{aligned}
\]

Now suppose $\mathbf F_i\npreceq\mathbf F_j$.
By \Cref{lem:class-settlement} (iii), $V_i(Z_i)\ge q_i\ge B_i$.
Because $\mathbf F_h=\mathbf F_j$, we also have $\mathbf F_i\npreceq\mathbf F_h$.
By \Cref{lem:anchors}, we have
\[
  V_i(D_h\cup A_h)\le B_i-\Gamma_i+Q\eta_i.
\]
In addition, \Cref{lem:joint-pieces,lem:rank-properties} give
\[
  V_i(P_h)\le L(n-1)V_i(R^*).
\]
Since $Z_j\subseteq W_h$, these bounds give
\[
  V_i(Z_j)\le B_i-\Gamma_i+Q\eta_i+L(n-1)V_i(R^*).
\]
Together,
\[
\begin{aligned}
  \Delta_{ij}(Z)
  &\ge\Gamma_i-Q\eta_i-L(n-1)V_i(R^*)\\
  &=\bigl[4n(Q+1)-Q-4(n-1)(Q+1)\bigr]\eta_i\\
  &=(3Q+4)\eta_i \ge\eta_i.
\end{aligned}
\]

Since $i$ and $j$ were arbitrary, the two estimates give $\Delta_{ij}(Z)\ge\eta_i$ whenever $i$ and $j$ belong to different profile classes.
By \Cref{lem:class-settlement}, $Z$ is envy-free within each profile class.
Hence $Z$ is envy-free.
By \Cref{lem:preparation}, $H$ and every snapshot in $\mathcal S\backslash\mathcal J$ are also envy-free.
By \Cref{lem:cake-accounting}, these allocations and $Z$ use pairwise disjoint cake.
Thus, \Cref{lem:allocation-facts} (i) gives $\Delta_{ij}(X)\ge\Delta_{ij}(Z)$ for all agents $i,j$, so $X$ is envy-free and the proof is complete.
\end{proof}

\subsection{Recursion and Query Complexity}\label{sec:complexity-proof}

By \Cref{lem:cake-accounting,lem:cross-class}, we now have an envy-free partial allocation $X$ with residue $R_1\subseteq R^*$.
Moreover, \Cref{lem:cross-class} bounds how much more each agent values her own share than the share of anyone in another profile class.
We now use this bound to complete the recursion and then count the queries used by the protocol.

\Cref{lem:stable-pool} guarantees that at least two profile classes remain, so we choose one and call it $E_*$.
To make every agent outside $E_*$ dominate its members, we reduce her value for the residue to at most $\eta_i$.
For each $i\in N\backslash E_*$ with $V_i(R_1)>0$, we run a prescribed sequence of at most $\left\lceil\log_{n/(n-2)}\bigl(4n(Q+1)Q^2\bigr)\right\rceil$ \Core calls with $i$ as the cutter, successively on the current residue.
Let $R_2$ be the final residue after all these sequences, and let $X'$ be obtained by adding every allocation returned by these calls to $X$.
We recurse on $E_*$ and return
\[
  X'\oplus\Main(E_*,R_2).
\]

\begin{proposition}\label{prop:assembled-step}
Suppose every recursive call on fewer than $n$ agents returns a complete envy-free allocation.
Then $\Main(N,C)$ returns a complete envy-free allocation, either without recursion or after one recursive call on a nonempty proper subset of agents.
\end{proposition}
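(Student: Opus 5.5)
The plan is a case analysis along the branches of $\Main(N,C)$, ending in either a direct return or a single use of \Cref{cor:recursive-reductions}. For $n\le3$ the base-case protocols (single agent, cut-and-choose, Selfridge--Conway) return complete envy-free allocations without recursion. For $n\ge4$ we run pre-processing. If every agent values $R^*$ at zero, we return $H\oplus\operatorname{Id}(\mathcal S)$ with $R^*$ added to one agent's share. By \Cref{lem:preparation}(i) this partitions $C$. By \Cref{lem:allocation-facts}(i) it is envy-free, and adding $R^*$ changes no one's values, since every agent values it at zero. Hence this branch needs no recursion.

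Otherwise we choose the cutter $c$ and prune. If some $G_s$ with $s\in\mathcal P$ is not strongly connected, \Cref{lem:scc} yields an envy-free partial allocation in which the sink component $S$ dominates $N\backslash S$. Here $N\backslash S$ is nonempty because $G_s$ is not strongly connected, and it is proper because $S\neq\varnothing$. We recurse on $N\backslash S$, and the recursive call returns a complete envy-free allocation of the residue by hypothesis. \Cref{cor:recursive-reductions} then finishes this branch.

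In the remaining branch every $G_s$ is strongly connected. We take three facts as given: \Cref{lem:cake-accounting} says $X$ is well defined with residue $R_1\subseteq R^*$; \Cref{lem:cross-class} says $X$ is envy-free with $\Delta_{ij}(X)\ge\eta_i(R^*)$ across classes; and \Cref{lem:stable-pool} says there are at least two profile classes, so $E_*$ is a nonempty proper subset of $N$. The key verification is domination. Fix $i\in N\backslash E_*$ and $j\in E_*$; they lie in different classes.

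\textbf{Case $V_i(R_1)=0$.} Then $V_i(R_2)=0\le\Delta_{ij}(X)$.

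\textbf{Case $V_i(R_1)>0$.} \Cref{prop:core-guarantees}(iii), applied with $\beta=4n(Q+1)Q^2$ and starting residue the one current when $i$'s sequence begins, gives
\[
V_i(\text{residue after } i\text{'s sequence})\le \frac{V_i(R_1)}{4n(Q+1)Q^2}\le\frac{V_i(R^*)}{4n(Q+1)Q^2}=\eta_i .
\]
Later sequences only shrink the residue, so $V_i(R_2)\le\eta_i$.

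The \Core allocations appended afterward are envy-free and use disjoint cake. By \Cref{lem:allocation-facts}(i), $X'$ is envy-free and $\Delta_{ij}(X')\ge\Delta_{ij}(X)\ge\eta_i\ge V_i(R_2)$, so $i$ dominates $j$ in $X'$. The one subtlety is that domination is established at the end rather than when $i$'s sequence finishes; \Cref{lem:allocation-facts}(ii) covers exactly this, or we can simply check the inequality directly at the end, as above. Applying \Cref{cor:recursive-reductions} with $E=E_*$ and the complete envy-free $\Main(E_*,R_2)$ finishes the proof.

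The main obstacle is only bookkeeping. We must confirm that each returned allocation covers $C$: the cake appears across $H$, the snapshots, $Z$, the later \Core allocations and $R_2$, and we check that these pieces partition $C$ with no overlaps. This uses \Cref{lem:preparation}(i) and \Cref{lem:cake-accounting}. We must also confirm that in each branch exactly one recursive call is made, on a nonempty proper subset of $N$.
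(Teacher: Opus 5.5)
Your proposal is correct and follows essentially the same route as the paper: the same three-way split (every agent values $R^*$ at zero; some $G_s$ is not strongly connected, handled by \Cref{lem:scc}; every $G_s$ is strongly connected), with domination in the last case checked directly at the end via \Cref{prop:core-guarantees}~(iii) and \Cref{lem:allocation-facts}~(i) before applying \Cref{cor:recursive-reductions}. Your treatment of the $n\le3$ base cases is harmless but unnecessary here, since the paper handles them in the proof of the main theorem.
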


\begin{proof}
If every agent values $R^*$ at zero, we give $R^*$ to one agent and combine it with $H$ and all snapshots under the identity assignment.
By \Cref{lem:preparation,lem:allocation-facts}, this gives a complete envy-free allocation.

If instead some graph is not strongly connected, \Cref{lem:scc} reduces the problem to one recursive call on a nonempty proper subset of agents.
That call returns a complete envy-free allocation by induction.

We may therefore assume that every graph is strongly connected.
In this case, \Cref{lem:stable-pool} gives at least two profile classes, so $E_*$ is nonempty and proper.
By \Cref{lem:cross-class,lem:cake-accounting}, the allocation $X$ is envy-free, has residue $R_1\subseteq R^*$, and satisfies $\Delta_{ij}(X)\ge\eta_i$ whenever $i$ and $j$ belong to different profile classes.

Fix $i\in N\backslash E_*$ with $V_i(R_1)>0$.
When her prescribed sequence begins, the current residue is contained in $R_1$, and all later calls only shrink it.
Hence \Cref{prop:core-guarantees} (iii) gives
\[
  V_i(R_2)
  \le\frac{V_i(R_1)}{4n(Q+1)Q^2}
  \le\frac{V_i(R^*)}{4n(Q+1)Q^2}
  =\eta_i.
\]

Because every added \Core allocation is envy-free, \Cref{lem:allocation-facts} (i) shows that $X'$ remains envy-free and that $\Delta_{ij}(X')\ge\Delta_{ij}(X)$ for every pair of agents.
Consequently, for every $j\in E_*$,
\[
  \Delta_{ij}(X')
  \ge\Delta_{ij}(X)
  \ge\eta_i
  \ge V_i(R_2).
\]

Thus, $i$ dominates every member of $E_*$.
If $V_i(R_1)=0$, then $V_i(R_2)=0$, so envy-freeness of $X'$ gives the same conclusion.
Hence every agent outside $E_*$ dominates every agent within it.

By induction, $\Main(E_*,R_2)$ returns a complete envy-free allocation of $R_2$.
We may therefore combine it with $X'$ using \Cref{cor:recursive-reductions}.
\end{proof}

It remains to count the queries.

\begin{proposition}\label{prop:recursive-step}
Let $n\ge4$.
There are fixed constants $K,d>0$ such that a call $\Main(N,C)$ with $|N|=n$ uses at most $\Lambda_n=Kn^d2^n$ queries, not counting those made by its recursive call.
The sequence $(\Lambda_n)$ is nondecreasing.
\end{proposition}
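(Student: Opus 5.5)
We bound the queries of one call $\Main(N,C)$, excluding its recursive call, by walking through the stages in order and charging each to either \Core/\SubCore calls or to direct cut/evaluation queries. Every count should be of the form $\mathrm{poly}(n)\cdot 2^n$ with fixed exponents. Once that is done, we take $K$ and $d$ as the maximum of the constants arising in the finitely many terms. Then $\Lambda_n=Kn^d2^n$ is visibly nondecreasing in $n$, since $n^d$ and $2^n$ both are.

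\textbf{Stage 1.} \Cref{lem:preparation} already gives $n^{O(1)}2^n$ queries for pre-processing. Concretely, there are $nQ^2$ snapshot calls and at most $nQ^3$ sequences of $O(n\log n)$ \Core calls each, so $O(n^{8}\log n)$ \Core calls in total. By \Cref{prop:subroutine-query-bounds} each costs $O(n2^n)$ queries. Recording every agent's value for every snapshot piece adds polynomially many evaluation queries via \Cref{prop:ordered-piece-simulation}.

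\textbf{Stage 2.} Pruning and forming the graphs $G_s$ use only stored values and no queries. The shortcut of \Cref{lem:scc} runs at most $n$ prescribed sequences of $\lceil\log_{n/(n-2)}Q^2\rceil=O(n\log n)$ \Core calls each. I would make this explicit by using a prescribed sequence of fixed maximum length there, via \Cref{prop:core-guarantees}(iii) with $\beta=Q^2$. That gives $O(n^2\log n)$ \Core calls.

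\textbf{Stage 3.} The construction has four query-consuming parts.
\begin{itemize}
\item Cutting the additions and rank pieces is a single application of \Cref{lem:simultaneous-cut} with $|J|\le nQ+n$ and $\delta^{-1}=16n(Q+1)nQ^2=O(n^8)$. This costs $O(n^2\delta^{-2}\log n)=O(n^{18}\log n)$ queries, and \Cref{lem:joint-pieces} asserts the polynomial bound.
\item The required minima $q_i$ and the bookkeeping for the pieces $W_h$ need polynomially many evaluation queries.
\item There is one \SubCore call per profile class, on $m\le n$ agents. Since $\sum_E 2^{|E|}\le 2^n$, these cost $O(n2^n)$ in total by \Cref{prop:subroutine-query-bounds}.
\item The final domination sequences are at most $n$ sequences of $O(n\log n)$ \Core calls.
\end{itemize}

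The step I expect to need the most care is the \SubCore on concatenated ordered pieces $W_h=[P_h,A_h,D_h]$, which have many interval components. Here I must check that the cut and evaluation operations on such disconnected pieces cost only a polynomial overhead in the number of components, via \Cref{prop:ordered-piece-simulation}. The component count is polynomial: each $D_h$ has $Q$ snapshot pieces, and each piece produced by \Core/\SubCore has boundedly many intervals, while each $A_h$ and $P_h$ is a union of polynomially many cells. Granting that simulation bound, summing all terms gives $\Lambda_n=Kn^d2^n$.
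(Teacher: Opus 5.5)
Your stage-by-stage accounting is the paper's argument with the polynomial factors made explicit, and the counts are right. These are: $O(n^8\log n)$ \Core calls in pre-processing, $O(n\log n)$-length sequences with $\beta=Q^2$ for the shortcut of \Cref{lem:scc}, $O(n^{18}\log n)$ queries for the simultaneous cut, $O(n2^n)$ for the \SubCore calls via $\sum_E 2^{|E|}\le 2^n$, and $O(n\log n)$-length final sequences.

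What needs fixing is your last paragraph. \Cref{prop:ordered-piece-simulation} does not give a cost polynomial in the number of components. Because every agent's value for every interval component is maintained, evaluating an ordered piece needs no queries. A prefix or suffix request costs at most one ordinary cut query and $O(|N|)$ evaluations \emph{regardless} of the number of components. The overhead is therefore charged per cut, and those cuts are already counted in \Cref{prop:subroutine-query-bounds} and \Cref{lem:simultaneous-cut}.

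Your supporting claim is also unjustified. A piece produced by \Core is a suffix of one of the cutter's equal-value pieces, which is a consecutive run of the current residue in its fixed order. That residue accumulates components from earlier cuts in the call and inherits those of the subcake handed down by the parent call. Likewise, the cells in \Cref{lem:simultaneous-cut} include every component endpoint of $R^*$. So neither snapshot pieces nor $A_h,P_h$ have boundedly many components. A priori, the component count is bounded only by the number of cuts made so far. If the simulation cost really grew polynomially with it, you would not recover the $2^n$ base.

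Replace that paragraph with the component-independent bound of \Cref{prop:ordered-piece-simulation}, and the proof is complete.
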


\begin{proof}
Pre-processing uses $n^{O(1)}2^n$ queries by \Cref{lem:preparation}.
After pre-processing, queries are used only for the prescribed \Core sequences, for cutting the additions and rank pieces, and for the \SubCore calls within profile classes.

At most $n$ prescribed sequences are used, each containing $O(n\log n)$ \Core calls.
The simultaneous cutting step uses polynomially many queries because it produces polynomially many pieces and $1/\delta$ is polynomial in $n$.
We also invoke \SubCore at most once for each profile class.

Consequently, only polynomially many \Core and \SubCore calls are made after pre-processing, each involving at most $n$ agents.
By \Cref{prop:subroutine-query-bounds,prop:ordered-piece-simulation}, these calls use $n^{O(1)}2^n$ queries in total.
Combining these bounds, we can choose fixed constants $K,d>0$ such that $\Lambda_n=Kn^d2^n$ bounds all queries made outside recursion.
Since $n^d2^n$ increases with $n$, the sequence $(\Lambda_n)$ is nondecreasing.
\end{proof}

We are finally ready to prove the main theorem.
Here, we restate it as a refresher.

\begin{restatedmaintheorem}
For every $n\ge1$ and every collection of nonnegative, additive, nonatomic valuations, there is a deterministic bounded Robertson--Webb protocol that returns a complete envy-free allocation using at most
$n^{O(1)}2^n$
queries.
\end{restatedmaintheorem}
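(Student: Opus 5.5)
The plan is to prove the theorem by strong induction on the number of active agents, establishing correctness and the query bound for every call $\Main(N,C)$ at once. For correctness, the base cases $|N|\le 3$ are handled by giving the cake to the sole agent, cut-and-choose, and Selfridge--Conway (run on the fixed order of interval components of $C$ from \Cref{app:rw-model-details}). Each of these returns a complete envy-free allocation using $O(1)$ queries. For $n\ge4$, \Cref{prop:assembled-step} supplies the inductive step. It shows that $\Main(N,C)$ either finishes without recursion or makes exactly one recursive call on a nonempty proper subset of $N$, whose complete envy-free output combines into a complete envy-free allocation of $C$. Applying this to the top-level call $\Main(\mathcal A,[0,1])$ gives correctness. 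Determinism holds because every choice is made by a fixed rule: the tie-breaking convention of \Cref{app:deterministic-choices}, the exhaustive search in \Cref{lem:simultaneous-cut}, and fixed orders over pairs and snapshots.

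For the query bound, let $T(n)$ be the maximum number of queries used by a call with $n$ active agents, taken over all subcakes and valuations. Since each call makes at most one recursive call, on at most $n-1$ agents, \Cref{prop:recursive-step} gives
\[
T(n)\le \Lambda_n+\max_{1\le m\le n-1}T(m)\qquad(n\ge4),
\]
with $T(1),T(2),T(3)\le c_0$ for a constant $c_0$. I will use induction to show that $T(n)\le c_0+\sum_{m=4}^{n}\Lambda_m$, which is nondecreasing in $n$. The maximum over $m<n$ is attained at $m=n-1$ by this monotone bound. Because $(\Lambda_m)$ is nondecreasing, the sum is at most $n\Lambda_n=Kn^{d+1}2^n$, so $T(n)\le c_0+Kn^{d+1}2^n=n^{O(1)}2^n$. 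In fact, since $\Lambda_m$ grows geometrically, the sum is $O(\Lambda_n)$, though the cruder bound suffices.

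The main subtlety is not the recurrence but making sure \Cref{prop:recursive-step} really covers every non-recursive branch with a bound independent of the valuations. The early-exit branch where $V_i(R^*)=0$ for all $i$ is clearly covered. The shortcut branch of \Cref{lem:scc} needs checking, because its proof runs \Core for each $i\in S$ "until the residue is worth at most $\Gamma_i$". That phrasing must be read as a prescribed sequence of length $\lceil\log_{n/(n-2)}Q^2\rceil$ by \Cref{prop:core-guarantees} (iii), applied from $R^*$ onward, which is $O(n\log n)$ calls per agent. With that reading, the branch adds only $n^{O(1)}$ \Core calls, each costing $O(n2^n)$ queries by \Cref{prop:subroutine-query-bounds}, and so stays within $\Lambda_n$. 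Finally, the base-case protocols and the simulation of queries on disconnected subcakes (\Cref{prop:ordered-piece-simulation}) cost only polynomial overhead in the number of components. The number of components stays bounded polynomially per level because every operation creates only polynomially many cut points, so the total remains $n^{O(1)}2^n$.
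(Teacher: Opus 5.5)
Your proposal is correct and matches the paper's proof. Correctness follows by induction with \Cref{prop:assembled-step} as the inductive step, and the query bound follows from at most one recursive call per level and the monotonicity of $\Lambda_n$, giving $T(n)\le T(3)+\sum_{k=4}^n\Lambda_k\le T(3)+n\Lambda_n$. Your closing remark about the number of components is unnecessary, and its claim that every operation creates only polynomially many cut points does not follow from the stated bounds, since \Core and \SubCore are only bounded by $O(2^n)$ cuts; but \Cref{prop:ordered-piece-simulation} already charges each ordered-piece query at most one ordinary cut and $O(|N|)$ evaluations regardless of the number of components.
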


\begin{proof}
We prove correctness by induction on $n$.
The protocols in \Cref{sec:base-cases} handle $n\le3$, and \Cref{prop:assembled-step} gives the inductive step for $n\ge4$.
We make every choice deterministic using the fixed orders and tie-breaking rules described in \Cref{app:deterministic-choices}.

To count the queries, let $T(n)$ be the largest number of queries used by a call with $n$ agents, including its recursive calls, and set $\overline T(n)=\max_{1\le m\le n}T(m)$.
By \Cref{prop:assembled-step,prop:recursive-step}, every call uses at most $\Lambda_n$ queries outside recursion and makes at most one recursive call on fewer agents.
Since $(\Lambda_n)$ is nondecreasing,
\[
  \overline T(n)
  \le\overline T(n-1)+\Lambda_n
  \le \overline T(3)+\sum_{k=4}^n\Lambda_k
  \le \overline T(3)+n\Lambda_n.
\]
Therefore $ T(n)\le n^{O(1)}2^n$, as claimed.
\end{proof}

\section{Discussion and Concluding Remarks}\label{sec:discussion}

Our protocol gives the first single-exponential upper bound for finding a complete envy-free allocation.
Yet the $\Omega(n^2)$ lower bound of Procaccia \cite{procaccia2009} leaves a wide gap.
To better understand the complexity of our protocol, we consider two questions.
First, how much computation is needed between queries?
Second and more importantly, where does the exponential dependence in our query bound come from?

For the first question, the main issue appears in \Cref{lem:simultaneous-cut}, where the proof uses an exhaustive search to assign the cells.
Here, we observe that this search is not necessary.
It turns out that we can instead assign each cell randomly according to the probabilities used in the proof.
The same concentration argument shows that the resulting assignment succeeds with high probability.
Because we already know the value of every cell, we can check whether an assignment succeeds without making any further queries.
Repeating this process until it succeeds therefore gives a simple randomized procedure.
We also note that we can remove the randomness using the standard method of conditional expectations \cite{raghavan1988,srivastav-stangier1996}, by choosing the cells one at a time while preserving the same guarantee.
Hence, we can find a valid assignment deterministically using polynomially many arithmetic operations in the number of cells.

For the second question, our proof gives a clear answer.
We maintain only polynomially many partial allocations, make only polynomially many calls to \Core and \SubCore, and use only polynomially many queries outside these calls.
Thus, up to polynomial factors, the full protocol costs no more than one call to \Core or \SubCore involving $n$ agents.
In particular, combining the partial allocations does not add another exponential factor.

Hence, the remaining exponential cost comes from calls to \Core and \SubCore involving $\Theta(n)$ agents.
Naturally, this suggests two ways forward.
One is to improve the query bounds for these subroutines, and the other is to avoid calling them with so many agents.
Indeed, if every call involved $o(n)$ agents, then our protocol would use $2^{o(n)}$ queries.
Can either approach yield a subexponential or even polynomial protocol?
Or is the quadratic lower bound tight?
We believe that answering these questions will deepen our understanding of the computational limits of fair division, and we hope our approach provides a useful starting point.

\section*{Acknowledgments}

We thank Chaoran Yu for helpful feedback on a preliminary draft of this manuscript.

\bibliographystyle{plainnat}
\begingroup
\makeatletter
\def\hyper@linkurl#1#2{#1}
\makeatother
\bibliography{references}
\endgroup

\clearpage
\appendix
\section{AI Usage Disclosure}\label{app:ai-acknowledgment}

We used ChatGPT 5.5 Pro/5.6 Sol extensively during the \emph{early} brainstorming stage, and we briefly describe its role here. GPT's most important contribution was the suggestion that the rather computationally heavy \texttt{GoLeft} protocol of Aziz and Mackenzie could be replaced by a much more efficient approach. It developed this suggestion into a complete protocol using $2^{2n(n-1)+O(n\log n)}$ Robertson--Webb queries. This GPT-generated protocol provided the conceptual backbone for our work and contained prototypes of several structural ideas used in our paper.

From this starting point, we refined and adapted GPT's structural ideas to the needs of our protocol. In our construction, these refined ideas eventually take the form of marks, witnesses, profiles, and ranks, which we use to determine how snapshot pieces are chosen and redistributed.

Beyond this structural work, and independent of LLM assistance, we studied where the quadratic exponent in GPT's bound came from, and we found the source to be twofold.
First, GPT's construction kept exponentially many snapshots in order to find sufficiently many with exactly the same marks (roughly speaking, this is the precondition of \Cref{lem:matching-selection}).
Second, it constructed the additions one at a time, and an unsuccessful attempt could require returning the cake and repeating this work across the stored snapshots.
Together, these choices made the cost grow like the square of the number of snapshots.

To address these inefficiencies, we introduced two further ideas.
First, we begin with only polynomially many snapshots and carefully prune them (\Cref{sec:reassignments-proof}). We show that the resulting collection still contains enough witnesses to reserve a different snapshot for every assignment we need (\Cref{lem:stable-pool,lem:matching-selection}).
Second, to avoid constructing the additions one at a time, we developed a ``simultaneous cutting'' lemma (\Cref{lem:simultaneous-cut}), which constructs all additions and rank pieces together using polynomially many queries (\Cref{lem:joint-pieces}).
These ideas further improve GPT's $2^{2n(n-1)+O(n\log n)}$ bound to $2^{O(n)}$ and allow us to streamline the algorithm (in particular, the recursion logic becomes much easier to describe).

While preparing the final manuscript, we used GPT 5.6 to help identify part of additional related work and to generate the full pseudocode in \Cref{app:protocol-details}. We developed the full protocol and proofs and wrote the remainder (and hence, almost all) of the paper, and we take full responsibility for its correctness and integrity.

\section{Additional Related Work}\label[appendix]{app:related-work}

Here we expand on the related work mentioned briefly in \Cref{sec:introduction}.

If every agent must receive one interval, Stromquist proved that no finite protocol can guarantee a complete envy-free allocation for three or more agents \cite{stromquist2008}.
For additive valuations, work has therefore studied query bounds, approximation algorithms, and welfare optimization for connected pieces \cite{branzei-nisan2022,goldberg-hollender-suksompong2020,arunachaleswaran-et-al2019,barman-kulkarni2023}.
In a broader model of continuous utilities over connected pieces, Deng, Qi, and Saberi gave an efficient approximation algorithm for three monotone agents, and Hollender and Rubinstein extended this positive result to four agents while proving hardness without monotonicity \cite{deng-qi-saberi2012,hollender-rubinstein2025}.
Hollender, Maystre, and Risse have since shown intractability for three agents in the general continuous-utility model \cite{hollender-maystre-risse2026}.
For a fixed number of possibly disconnected pieces, Gao et al. prove PPAD-hardness in another nonadditive model, even when the output need only make three agents approximately envy-free \cite{gao-et-al2024}.
These models allow utilities more general than the additive valuations in our theorem.
With free disposal, bounded envy-free protocols can instead give every agent a connected piece of positive value while allowing some cake to remain unallocated \cite{segal-halevi2015}.

Structural restrictions on additive valuations lead to stronger algorithms.
Kurokawa, Lai, and Procaccia give a Robertson--Webb protocol using $O(n^6k\log k)$ queries when the agents' piecewise-linear densities have at most $k$ breakpoints in total, and Br\^anzei gives an envy-free protocol for polynomial value densities whose query bound also depends on the maximum degree \cite{kurokawa2013,branzei2015}.
In a direct-revelation model, Aziz and Ye give polynomial algorithms for piecewise-constant valuations \cite{aziz-ye2014}.
For single-peaked valuations, an exact envy-free allocation can be found using a linear number of queries, while the monotone-likelihood-ratio condition yields efficient connected allocations with envy-freeness to arbitrary precision \cite{wang-wu2019,barman-rathi2022}.
These guarantees depend on structural restrictions absent from our unrestricted model.

The bounded protocol for four agents that preceded AM's general protocol was later simplified, and its query bound was reduced by a factor of $3.4$ \cite{amanatidis2018}.
Bei, Qiao, and Zhang introduced local envy-freeness on a graph, and subsequent work gave bounded discrete protocols when the graph is a tree \cite{bei-qiao-zhang2017,ghalme-tree2024}.
Beyond worst-case analysis, Ch\`eze proves that Webb's protocol uses fewer than $n^{12}$ queries with high probability under the uniform full-independence model \cite{cheze2020}.
Mehra and Psomas introduce hierarchies of fairness notions that connect proportionality, envy-freeness, and super envy-freeness and study their query complexity \cite{mehra-psomas2025}.
The complexity of modern protocols has also prompted automated verification tools, including the \textsc{Slice} language, the \textsc{Crumbs} approach based on bounded model checking, and a faster \textsc{Slice} verification framework \cite{bertram-levinson-hsu2023,lester2024,bertram-lai-hsu2024}.

\section{Robertson--Webb Implementation Details}\label[appendix]{app:preliminaries-details}

In this appendix, we explain how ordinary Robertson--Webb queries handle pieces with several interval components and record the deterministic conventions used by the protocol.

\subsection{Queries on Ordered Pieces}\label{app:rw-model-details}

We ignore overlaps at finitely many cut endpoints, which have value zero for every agent.
In Robertson--Webb query complexity, exact arithmetic, comparisons, and finite computation on oracle answers are free, and we do not bound the number of bits needed to represent arbitrary real answers.

Whenever the protocol later divides a piece with several interval components, it stores those components in a fixed order, for instance from left to right.
We call a piece with such an order an \emph{ordered piece}.
For disjoint pieces $P,A,D$, the notation $[P,A,D]$ denotes their union with the components of $P$ first, followed by those of $A$ and then those of $D$.
A \emph{prefix} ending at a point $x$ contains everything through $x$ in this order, while the corresponding \emph{suffix} contains everything from $x$ onward.
For example, take
\[
  P=[0,0.1]\cup[0.7,0.8],\qquad
  A=[0.2,0.3],\qquad
  D=[0.4,0.5]\cup[0.9,1].
\]
Then $[P,A,D]$ orders its components as
\[
  [0,0.1],\ [0.7,0.8],\ [0.2,0.3],\ [0.4,0.5],\ [0.9,1],
\]
and splitting the component $[0.2,0.3]$ at $x=0.25$ gives the prefix $P\cup[0.2,0.25]$ and the suffix $[0.25,0.3]\cup D$.

\begin{proposition}\label{prop:ordered-piece-simulation}
Let $N$ be a set of agents.
Queries to an ordered piece can be implemented with ordinary Robertson--Webb queries.
Once we know how every agent in $N$ values every interval component, we can evaluate the ordered piece without further queries.
Asking for a prefix or suffix of a given value uses at most one ordinary cut query and $O(|N|)$ evaluation queries, regardless of the number of components.
\end{proposition}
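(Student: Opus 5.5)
The plan is to reduce every operation on an ordered piece to queries on its individual interval components. We maintain, for every agent in $N$, a table of her values for every component. Evaluating the whole ordered piece, or any union of whole components, is then free: by additivity it is the sum of stored entries. The real content is in prefix and suffix cuts, and in keeping the table up to date when a cut splits a component.

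For a prefix of value $a$ for agent $i$, we first scan the stored values of the components in the fixed order, using no queries, and locate the first component $[u,v]$ at which the cumulative value reaches $a$. Let $a'$ be $a$ minus the value of the earlier components. We then issue one ordinary cut query $\Cut_i(u,a')$, which returns a point $x\in[u,v]$. It lands inside $[u,v]$ because $a'\le V_i([u,v])$ and valuations are nonnegative. A suffix of value $a$ is handled symmetrically: either scan from the end, or ask for the prefix of value $V(\text{piece})-a$, which is already known.

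The cut at $x$ splits $[u,v]$ into $[u,x]$ and $[x,v]$, so the table must be refreshed. For each agent $j\in N$ we make one evaluation query $\Eval_j(u,x)$ and obtain $V_j([x,v])$ by subtraction from the stored $V_j([u,v])$. This costs $|N|$ evaluation queries, independent of the number of components, and keeps the invariant that all component values are known.

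The initial table has to come from somewhere. For each component of the original pieces, one evaluation query per agent supplies its entries. Every component created later arises from a recorded cut, so it is covered by the update step above. I expect the only delicate point to be accounting rather than mathematics. The claimed per-cut bound of one cut query plus $O(|N|)$ evaluations assumes the component values are already stored, as the statement's hypothesis says. We must also check that the components of freshly formed unions such as $[P,A,D]$ are all products of earlier recorded cuts, so that no hidden evaluations are needed. Boundary ties, such as $a'=0$ or $x=v$, are resolved by the fixed conventions and produce zero-value components that change nothing.
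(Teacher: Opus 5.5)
Your proposal is correct and follows the paper's proof essentially step for step. Stored component values make evaluation free, and a query-free scan locates the component where the cut falls, so one ordinary cut query suffices. One evaluation per agent plus subtraction refreshes the table after each split, and every later component arises from a recorded cut.

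The paper differs only cosmetically. It skips the cut query when the target is reached exactly at a component boundary, which also avoids your slightly loose claim that the returned point lies in $[u,v]$: across a zero-value gap the oracle may return a point beyond $v$, which can harmlessly be clamped to $v$. It also reuses the requested amount as the cutter's own value of the new interval.
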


\begin{proof}
We maintain these values for every agent in the current call.
At the top level, one evaluation query per agent records the value of the initial interval $[0,1]$.
If a later recursive call receives a subcake with several components, those components were cut earlier, so we already know how every agent values them.
The value of any ordered piece is therefore the sum of the known values of its components.

Suppose agent $c$ asks for a prefix worth $a$.
Starting with the first component, we add $c$'s values until the sum reaches or exceeds $a$.
If it reaches $a$ at a component boundary, that boundary gives the desired prefix without a cut query.
Otherwise, the prefix ends inside the next component, where one ordinary cut query locates its endpoint.
A suffix is handled in the same way from the last component, or equivalently by cutting a prefix of the complementary value.

After a component is split, the amount requested in the cut query gives $c$'s value of one resulting interval, and subtraction gives her value of the other.
Every other agent in $N$ evaluates one of the two intervals once, after which subtraction again gives the value of the other.
We therefore know all new component values and can maintain the same information throughout the protocol.
\end{proof}

\subsection{Deterministic Choices}\label{app:deterministic-choices}

Because the protocol is deterministic, we resolve every finite choice by a fixed rule.
Inside the two subroutines, we use the AM rule for resolving ties \cite[Sec. 3.1.1 and the proof of Lemma 4.13 in the extended version]{aziz-mackenzie}; recursive calls made inside \SubCore keep the same order on their remaining agents and pieces.
If the cutter values the residue at zero while we are collecting snapshots, we divide it into one possibly empty piece for each active agent by a fixed rule; she values all of them equally.
In the direct \SubCore calls of \Cref{prop:subcore-required-minima}, a tie between one of the pieces $W_h$ and the extra empty piece is resolved in favor of $W_h$.

After trimming an ordered piece, we keep its remaining interval components in the same order.
Whenever we need a prefix or suffix of a given value, we examine the component boundaries from the beginning of this order and use the first one that gives the requested value.
If none does, one ordinary cut query locates the cut point inside a component.
Thus, when an agent values the whole ordered piece at zero, the convention gives her the empty prefix and the full suffix.

\section{Omitted Proofs for Section 4}\label[appendix]{app:omitted-proofs-section-four}

\begin{proof}[Omitted proof of \Cref{prop:subroutine-query-bounds}]
To describe the modification, we temporarily use AM's notation for \SubCore.
AM calls the value that an agent must receive at a given point her benchmark.
Our variant makes one change to AM's \SubCore: we skip line 12 of AM's Algorithm 2.
In other words, once $|W|=m-1$, AM makes one final recursive call on $W$, while we omit that call.
If $|W|=m-1$ as soon as the trims are made, then every contested piece has a different rightmost trimmer.
For each contested piece, we give its rightmost trimmer the part beginning at her trim.
If the while loop runs, we keep the tentative allocation obtained in its final iteration.
In either case, we give the sole agent outside $W$ her favorite uncontested piece.

We argue by the same induction as AM that our variant preserves the guarantees of \SubCore.
The base case is immediate for one agent.
If the next agent's favorite piece is unallocated, she receives it whole.
Otherwise, the agents place their trims and form the set $W$.
If $|W|=m-1$, the rightmost trims give the agents in $W$ an envy-free allocation in which every agent meets her benchmark.
If $|W|<m-1$, every recursive call inside the while loop has fewer than $m$ agents, so the induction hypothesis applies.
AM's argument then shows that each iteration adds one agent to $W$ while preserving an envy-free tentative allocation in which every agent meets her benchmark \cite[proof of Lemma 4.13 in the extended version]{aziz-mackenzie}.
Thus, when the loop ends, the agents in $W$ receive parts of distinct contested pieces without envy and meet their benchmarks.

Every agent in $W$ receives at least as much value as she assigns to her favorite uncontested piece, so she envies no uncontested piece.
The remaining agent receives her favorite uncontested piece, which AM's argument shows is worth at least her input benchmark.
Every contested part begins at or to the right of her trim, so she does not envy any agent in $W$.
The resulting allocation is therefore envy-free and meets every benchmark.
The agents receive cake from distinct input pieces, and the agent outside $W$ receives a whole piece.

We now return to the terminology of our paper and count the queries.
Set $S_0=S_1=0$, and for $m\ge2$ define
\[
  S_m=S_{m-1}+m(m-1)+\sum_{j=1}^{m-2}S_j.
\]
We claim that \SubCore on $m$ agents makes at most $S_m$ cut queries.
Indeed, processing the first $m-1$ agents uses at most $S_{m-1}$ cuts, and introducing the $m$th agent requires at most $m(m-1)$ trims.
The recursive calls inside the while loop have distinct sizes in $\{1,\ldots,m-2\}$.
Thus, $S_m$ bounds the number of cuts by induction.
For $m\ge3$, subtracting the recurrence for $S_{m-1}$ gives
\[
  S_m=2S_{m-1}+2(m-1).
\]
The same identity holds for $m=2$, and $S_1=0$, so
\[
  S_m=2^{m+1}-2m-2=O(2^m).
\]
By \Cref{prop:ordered-piece-simulation}, each cut uses $O(m)$ evaluation queries to record how the agents value the two resulting parts.
Thus, the call uses $O(mS_m)=O(m2^m)$ evaluation queries.

A \Core call first uses $n-1$ cuts to create $n$ pieces that the cutter values equally, followed by $O(n^2)$ evaluations to record how the agents value them.
It then invokes \SubCore on $n-1$ agents, so the stated \Core bounds follow.
\end{proof}

\begin{proof}[Omitted proof of \Cref{prop:subcore-required-minima}]
We compare our call with the same variant of \SubCore in which every initial required minimum is zero.
Assume that the two calls have proceeded in the same way up to the point where they introduce the $r$th agent.
At most $r-1$ of the pieces $W_h$ have been tentatively assigned, so some whole $W_h$ remains unassigned.
For every agent $j$ currently taking part, let $f_j$ be her value for her favorite unassigned piece.
Since every $W_h$ is worth at least $q_j$, we have $f_j\ge q_j$.
Therefore,
\[
  \max\{q_j,f_j\}=f_j=\max\{0,f_j\}.
\]
The required minima do not affect any trim, and the two calls make the same choices.
Their recursive calls receive the same inputs, so induction on the number of agents shows that the two calls have the same execution.

Because the two calls have the same execution, the correctness argument for \Cref{prop:subroutine-query-bounds} gives an envy-free allocation in which each agent receives a suffix of a different input piece. We implement the corresponding trims using \Cref{prop:ordered-piece-simulation}.
Moreover, agent $i$'s benchmark starts at $q_i$ and only increases.
Therefore, every agent receives value at least $q_i$.

It remains to show that the empty piece stays unallocated.
Indeed, whenever we assign a piece directly, some whole $W_h$ remains available and is worth at least as much as the empty piece.
If the agent values $W_h$ and the empty piece equally, our rule chooses $W_h$.
Thus, we never assign the empty piece directly.
Since the recursive calls use only contested pieces, the empty piece never enters a recursive call either.
Therefore, it remains unallocated.
\end{proof}

\begin{proof}[Omitted proof of \Cref{lem:joint-pieces}]
Every ratio $d_{s,i,k} / V_i(R^*)$ appearing in the maximum that defines $r_{s,k}$ is at most $L/[4n(Q+1)]$.
Thus, every addition has requested size at most
\[
  r_{s,k}+\delta
  \le\frac{5L}{16n(Q+1)}.
\]
There are $nQ$ additions.
By \Cref{lem:rank-properties}, the total requested size of all additions and all $P_h$ with $\ell_h > 0$ is at most
\[
\begin{aligned}
  \frac{5QL}{16(Q+1)}+L\sum_h\ell_h+n\delta
  &\le\frac{5QL}{16(Q+1)}+LQ+\frac{L}{16(Q+1)}\\
  &=LQ+\frac{(5Q+1)L}{16(Q+1)}\\
  &<1.
\end{aligned}
\]
Therefore, \Cref{lem:simultaneous-cut} applies.

Suppose that $d_{s,i,k}\le\epsilon_i$ and $V_i(R^*)>0$.
The lower bound in \Cref{lem:simultaneous-cut} gives
\[
  V_i(A_{s,k})
  \ge r_{s,k}V_i(R^*)
  \ge d_{s,i,k}.
\]
If $V_i(R^*)=0$, then the small bonus is zero and the same inequality is immediate.
For every agent,
\[
  V_i(A_{s,k})
  \le(r_{s,k}+2\delta)V_i(R^*)
  \le\frac{3L}{8n(Q+1)}V_i(R^*)
  \le\eta_i.
\]

If $\ell_h>0$, we asked for $P_h$ to contain approximately the fraction $L(\ell_h-\frac{1}{2})+\delta$ of the residue for every agent.
Thus, \Cref{lem:simultaneous-cut} gives
\[
  L(\ell_h-\frac{1}{2})V_i(R^*)
  \le V_i(P_h)
  \le\bigl[L(\ell_h-\frac{1}{2})+2\delta\bigr]V_i(R^*)
  \le L\ell_hV_i(R^*),
\]
because $2\delta\le L/2$.
If $\ell_h=0$, then $P_h = \emptyset$, which satisfies the same bounds.
Finally, the number of pieces and $1/\delta$ are polynomial in $n$, so the query bound follows from \Cref{lem:simultaneous-cut}.
\end{proof}

\section{Full Pseudocode}\label[appendix]{app:protocol-details}

This appendix gives pseudocode for the protocol in \Cref{sec:main-theorem,sec:proof-roadmap}.
We use the base protocols from \Cref{sec:base-cases} and the \Core and \SubCore subroutines from \Cref{sec:core-subroutines-details}.
Whenever the pseudocode asks us to choose among finitely many options, we use the rule in \Cref{app:deterministic-choices}.
As in \Cref{prop:core-guarantees} (iii), $\mathsf{Shrink}(N,U,K,c,D)$ repeatedly calls \Core on successive residues with agents $N$ and cutter $c$.
It makes at most $D$ calls, stops if $V_c(U)=0$, adds the resulting allocations to $K$, and returns the new allocation and residue.

\clearpage

\begin{algorithm}[H]
\scriptsize
\DontPrintSemicolon
\caption{$\Main(N,C)$}\label{alg:main-formal}
\KwIn{A nonempty active agent set $N$ and a subcake $C$ formed from finitely many intervals}
\KwOut{A complete envy-free allocation of $C$ among $N$}
Set $n\gets|N|$\AlgoLine
\If{$n\le3$}{
  \Return the allocation from \Cref{sec:base-cases}\AlgoLine
}
\tcp{stage 1: pre-processing}
Set $Q\gets n(n-1)$ and $(H,\mathcal S,R^*,d)\gets\mathsf{Preprocess}(N,C)$\AlgoLine
\tcp{stage 2: prune the snapshots}
\If{$V_i(R^*)=0$ for every $i\in N$}{
  \Return the allocation obtained from $H\oplus\operatorname{Id}(\mathcal S)$ by giving $R^*$ to an agent\AlgoLine
}
Set $\epsilon_i\gets V_i(R^*)/[4n^2(Q+1)Q^2]$ for every $i\in N$, and set $F_{s,i}\gets\{k\in N:d_{s,i,k}>\epsilon_i\}$ for every $s\in\mathcal S$ and $i\in N$\AlgoLine
Choose $c$ with $V_c(R^*)>0$ and set $\mathcal P\gets\{(c,t):1\le t\le Q^2\}$\AlgoLine
\While{some pair $(i,h)$ has between $1$ and $Q-1$ witnesses in $\mathcal P$}{
  Choose such a pair and remove all of its witness snapshots from $\mathcal P$\AlgoLine
}
For every $s\in\mathcal P$, form $G_s$ whose edges are $i\to k$ with $k\notin F_{s,i}$\AlgoLine
\If{some $G_s$ is not strongly connected}{
  Choose such an $s$ and a sink strongly connected component $S$ of $G_s$ as in \Cref{lem:scc}\AlgoLine
  Set $Y\gets H\oplus\operatorname{Id}(\mathcal S)$ and $U\gets R^*$\AlgoLine
  \ForEach{$i\in S$ with $V_i(R^*)>0$}{
    Set $(Y,U)\gets\mathsf{Shrink}(N,U,Y,i,\lceil\log_{n/(n-2)}Q^2\rceil)$\AlgoLine
  }
  \Return $Y\oplus\Main(N\backslash S,U)$\AlgoLine
}
\tcp{stage 3: construct the partial allocation}
Set $\mathbf F_i\gets(F_{s,i})_{s\in\mathcal P}$ and let $\mathcal E$ be the partition of $N$ into profile classes\AlgoLine
For every $i,h$, set $\mathcal W_{i,h}\gets\{s\in\mathcal P:F_{s,i}\nsubseteq F_{s,h}\}$\AlgoLine
Set $\mathcal T\gets\{(i,h):\mathbf F_i\npreceq\mathbf F_h\}$ and $\mathcal J\gets\varnothing$\AlgoLine
\tcp{reserve witnesses and choose matchings (\Cref{lem:matching-selection})}
\ForEach{$(i,h)\in\mathcal T$}{
  Choose $s\in\mathcal W_{i,h}\backslash\mathcal J$ and $k\in F_{s,i}\backslash F_{s,h}$\AlgoLine
  Choose a perfect matching $\mu_s$ in $G_s$ satisfying $\mu_s(h)=k$, and set $\mathcal J\gets\mathcal J\cup\{s\}$\AlgoLine
}
\While{$|\mathcal J|<Q$}{
  Choose $s\in\mathcal P\backslash\mathcal J$, set $\mu_s$ to the identity matching, and set $\mathcal J\gets\mathcal J\cup\{s\}$\AlgoLine
}
For every $E\in\mathcal E$, let $\ell_E$ be its rank as defined in \Cref{sec:partial-allocation-proof}, and set $\ell_h\gets\ell_E$ for every $h\in E$\AlgoLine
Set $(Z,R_1)\gets\mathsf{Allocate}(N,R^*;\mathcal J,\mu;\mathcal E,\ell)$\AlgoLine
Set $X\gets H\oplus\operatorname{Id}(\mathcal S\backslash\mathcal J)\oplus Z$\AlgoLine
Choose a profile class $E_*\subsetneq N$, which exists by \Cref{lem:stable-pool}, and set $U\gets R_1$\AlgoLine
\ForEach{$i\in N\backslash E_*$ with $V_i(R_1)>0$}{
  Set $(X,U)\gets\mathsf{Shrink}(N,U,X,i,\lceil\log_{n/(n-2)}(4n(Q+1)Q^2)\rceil)$\AlgoLine
}
\Return $X\oplus\Main(E_*,U)$\AlgoLine
\end{algorithm}

\begin{algorithm}[H]
\scriptsize
\DontPrintSemicolon
\caption{$\mathsf{Preprocess}(N,C)$}\label{alg:preprocess-formal}
\KwIn{An active agent set $N$ and its subcake $C$}
\KwOut{An envy-free partial allocation $H$, snapshots $\mathcal S$, the final residue $R^*$, and the bonuses $d$}
Set $n\gets|N|$ and $Q\gets n(n-1)$\AlgoLine
Set $U\gets C$, $\mathcal S\gets\varnothing$, and $H$ to the empty allocation\AlgoLine
\tcp{construct the snapshots}
\ForEach{$c\in N$}{
  \For{$t\gets1$ \KwTo $Q^2$}{
    Set $(X,U)\gets\Core(N,U,c)$ and store $X$ as snapshot $s=(c,t)$\AlgoLine
    Write $c_{s,k}$ for its label-$k$ piece for every $k$, and set $\mathcal S\gets\mathcal S\cup\{s\}$\AlgoLine
  }
}
Record $V_i(c_{s,k})$ for every $s,i,k$\AlgoLine
Set $b_{s,i}\gets V_i(c_{s,i})$ for every $s,i$, and set $d_{s,i,k}\gets b_{s,i}-V_i(c_{s,k})$ for every $s,i,k$\AlgoLine
\tcp{eliminate the intermediate bonuses}
Set $\Gamma_i\gets V_i(U)/Q^2$, $\eta_i\gets\Gamma_i/[4n(Q+1)]$, and $\epsilon_i\gets\eta_i/n$ for every $i$\AlgoLine
Set $D_{\mathrm{mid}}\gets\lceil\log_{n/(n-2)}(4n^2(Q+1))\rceil$\AlgoLine
\While{some agent $i$ has an intermediate bonus}{
  Choose such an $i$ and set $(H,U)\gets\mathsf{Shrink}(N,U,H,i,D_{\mathrm{mid}})$\AlgoLine
  Recompute $\Gamma_r,\eta_r,\epsilon_r$ from the new $U$ for every $r\in N$\AlgoLine
}
Set $R^*\gets U$ and \Return $(H,\mathcal S,R^*,d)$\AlgoLine
\end{algorithm}

\clearpage

\begin{algorithm}[H]
\scriptsize
\DontPrintSemicolon
\caption{$\mathsf{Allocate}(N,R^*;\mathcal J,\mu;\mathcal E,\ell)$}\label{alg:allocate-formal}
\KwIn{The active set $N$ and residue $R^*$; the snapshots $\mathcal J$ and their matchings $\mu$; the profile classes $\mathcal E$ and their ranks $\ell$}
\KwOut{An allocation $Z$ and the residue $R_1$}
Set $n\gets|N|$, $Q\gets n(n-1)$, $L\gets1/(nQ^2)$, and $\delta\gets L/[16n(Q+1)]$\AlgoLine
Set $b_{s,i}\gets V_i(c_{s,i})$ and $d_{s,i,k}\gets b_{s,i}-V_i(c_{s,k})$ for every $s\in\mathcal J$ and $i,k\in N$\AlgoLine
Set $\epsilon_i\gets V_i(R^*)/[4n^2(Q+1)Q^2]$ for every $i\in N$\AlgoLine
\tcp{cut the additions and rank pieces (\Cref{lem:joint-pieces})}
For every $s\in\mathcal J$ and $k\in N$, set
$r_{s,k}\gets\max\{0,\ d_{s,i,k}/V_i(R^*)\text{ where }i\in N,\ V_i(R^*)>0,\ d_{s,i,k}\le\epsilon_i\}$\AlgoLine
Set $I\gets\{\mathsf{a}_{s,k}:(s,k)\in\mathcal J\times N\}\cup\{\mathsf{p}_h:\ell_h>0\}$\AlgoLine
Set $\lambda_{\mathsf{a}_{s,k}}\gets r_{s,k}+\delta$ for every $(s,k)\in\mathcal J\times N$ and $\lambda_{\mathsf{p}_h}\gets L(\ell_h-1/2)+\delta$ for every $h$ with $\ell_h>0$\AlgoLine
Apply \Cref{alg:cut-formal} to $N,R^*,I,(\lambda_j)_{j\in I},\delta$, and denote its output by $(C_j)_{j\in I}$\AlgoLine
Set $A_{s,k}\gets C_{\mathsf{a}_{s,k}}$ for every $(s,k)\in\mathcal J\times N$; set $P_h\gets C_{\mathsf{p}_h}$ when $\ell_h>0$ and $P_h\gets\varnothing$ otherwise\AlgoLine
Set $D_h\gets\bigcup_{s\in\mathcal J}c_{s,\mu_s(h)}$ and $A_h\gets\bigcup_{s\in\mathcal J}A_{s,\mu_s(h)}$ for every $h$, and set $B_i\gets\sum_{s\in\mathcal J}b_{s,i}$ for every $i$\AlgoLine
\tcp{redistribute the cake within each profile class (\Cref{lem:class-settlement})}
\ForEach{$E\in\mathcal E$}{
  Set $W_h\gets[P_h,A_h,D_h]$ for $h\in E$ and $q_i\gets B_i+\max\{0,L(\ell_E-1/2)\}V_i(R^*)$ for $i\in E$\AlgoLine
  Break any tie between a piece $W_h$ and $\varnothing$ in favor of $W_h$, and set $Z^E\gets\SubCore(E,((W_h)_{h\in E},\varnothing),(q_i)_{i\in E})$\AlgoLine
  Record every agent's value for the interval components created by each trim\AlgoLine
  For every $h\in E$, let $Y_h$ be the unused prefix of $W_h$\AlgoLine
}
Set $Z\gets\bigoplus_{E\in\mathcal E}Z^E$\AlgoLine
Set $R_1\gets\bigl(R^*\backslash\bigcup_{h\in N}(P_h\cup A_h)\bigr)\cup\bigcup_{h\in N}Y_h$ as in \Cref{lem:cake-accounting}\AlgoLine
\Return $(Z,R_1)$\AlgoLine
\end{algorithm}

\begin{algorithm}[H]
\scriptsize
\DontPrintSemicolon
\caption{Procedure in \Cref{lem:simultaneous-cut}}\label{alg:cut-formal}
\KwIn{An agent set $N$; an ordered piece $R$, together with every agent's value for each interval that forms $R$; a nonempty finite index set $J$; numbers $\lambda_j\ge0$ satisfying $\sum_{j\in J}\lambda_j\le1$; and $0<\delta<1$}
\KwOut{Pairwise disjoint pieces $(C_j)_{j\in J}$ satisfying $|V_i(C_j)-\lambda_jV_i(R)|\le\delta V_i(R)$ for every $i\in N,j\in J$}
Set $n\gets|N|$, $m\gets|J|$, and $\gamma\gets\delta^2/(2\log(2nm+2))$\AlgoLine
\ForEach{$i\in N$ with $V_i(R)>0$}{
  Following the order of $R$, ask agent $i$ for a prefix worth $k\gamma V_i(R)$ for every positive integer $k$ with $k\gamma<1$, and record its endpoint\AlgoLine
  After each cut, record every agent's value for the two resulting interval components as in \Cref{prop:ordered-piece-simulation}\AlgoLine
}
Together, these cut points and the endpoints of the intervals forming $R$ divide $R$ into cells. Let $\mathcal K$ be their collection, and use the recorded values to compute how every agent values every cell\AlgoLine
\ForEach{labeling $\tau$ that assigns each cell in $\mathcal K$ an index in $J$ or leaves it unused}{
  Set $C_j(\tau)\gets\bigcup\{K\in\mathcal K:\tau(K)=j\}$ for every $j\in J$\AlgoLine
  \If{$|V_i(C_j(\tau))-\lambda_jV_i(R)|\le\delta V_i(R)$ for every $i\in N$ and $j\in J$}{
    \Return $(C_j(\tau))_{j\in J}$\AlgoLine
  }
}
\end{algorithm}

\FloatBarrier

\end{document}